\newcommand*{\centernot}{%
	\mathpalette\@centernot
}
\def\@centernot#1#2{%
	\mathrel{%
		\rlap{%
			\settowidth\dimen@{$\m@th#1{#2}$}%
			\kern.5\dimen@
			\settowidth\dimen@{$\m@th#1=$}%
			\kern-.5\dimen@
			$\m@th#1\not$%
		}%
		{#2}%
	}%
}
\newcommand{\vertiii}[1]{{\left\vert\kern-0.20ex\left\vert\kern-0.20ex\left\vert #1 
		\right\vert\kern-0.20ex\right\vert\kern-0.20ex\right\vert}}
\newcommand{\norm}[1]{\left\lvert#1\right\rvert}
\newcommand{\lnorm}[1]{\left\lVert#1\right\rVert}
	\DeclareMathOperator*{\argmax}{arg\,max}
	\newtheorem{theorem}{Theorem}
	\numberwithin{theorem}{section}
	\newtheorem{lemma}[theorem]{Lemma}
	\newtheorem{proposition}[theorem]{Proposition}
	\newtheorem{corollary}[theorem]{Corollary}
	\newtheorem{defin}{Definition}
	\newcounter{rmk}
	\newtheorem{rem}[rmk]{Remark}
	\theoremstyle{remark}
	\newtheorem{example}{Example}
	\newtheorem{excont}{Example}
	\newtheorem{exnet}{Example}
	\newcommand{\N}{\mathbb{N}}
	\newcommand{\Z}{\mathbb{Z}}
	\newcommand{\R}{\mathbb{R}}
	\newcommand{\E}{\mathrm{E}}
	\newcommand{\Fb}{\mathcal{F}}
	\author[1,2,*]{Mirko Armillotta}
	\author[3,**]{Konstantinos Fokianos}
	\affil[1]{Department of Econometrics and Data Science, Vrije Universiteit Amsterdam}
	\affil[2]{Tinbergen Institute}
	\affil[3]{Department of Mathematics and Statistics, University of Cyprus} %
		\renewcommand\AB@affilsepx{: \protect\Affilfont}
		\affil[ ]{Email}
		\renewcommand\AB@affilsepx{, \protect\Affilfont}
		\affil[*]{m.armillotta@vu.nl}
		\affil[**]{fokianos.konstantinos@ucy.ac.cy}
	\title{Nonlinear Network Autoregression}
	\date{30th November 2023}
\begin{document}
		\maketitle
		\begin{abstract}
			\noindent 
				We study general nonlinear models for  time series networks  of integer and  continuous valued data.  The vector of high dimensional responses, measured on the nodes of a known network, is regressed non-linearly  on its lagged value and on lagged values of the neighboring  nodes by  employing a  smooth link function.  We study stability conditions for such multivariate process and develop quasi maximum likelihood inference when the network dimension is increasing. In addition,  we study linearity score tests by treating separately the cases of identifiable and non-identifiable parameters. In the case of identifiability, the test statistic converges to  a  chi-square distribution. When the parameters are  not-identifiable, we develop a supremum-type test whose $p$-values are approximated adequately by employing a feasible bound and bootstrap methodology. Simulations and data examples support further our findings.
		\end{abstract}
		\noindent \textbf{Keywords}: contraction, hypothesis testing, increasing dimension, multivariate count time series. \\
		\noindent \textbf{AMS 2020 subject classification:} 62M10, 62J02
		%\tableofcontents
		
		%for nice TOC
		\advance\cftsecnumwidth 0.5em\relax
		\advance\cftsubsecindent 0.5em\relax
		\advance\cftsubsecnumwidth 0.5em\relax
		
		\tableofcontents
		
		\section{Introduction} \label{intro}

		The availability of network data recorded over a timespan in several applications (social networks, GPS data, epidemics, etc.) requires assessing the effect of a network structure to a multivariate time series process. This problem has attracted considerable attention. Recently 
		\cite{zhu2017} proposed a Network Autoregressive model (NAR), under Independent Identically  Distributed ($IID$) innovation sequence, where a continuous response variable is observed for each node of a network. The high dimensional vector of such responses is modelled   linearly on the past values of the response, measured on the node itself and the average lagged response of the neighbours connected to the node. Motivated by the fact that real data networks are usually of large dimension, the authors study least squares  inference under two large-sample regimes (a) with increasing time sample size, i.e. $T\to\infty$, and fixed network dimension, say $N$, and (b) with $N\to \infty$ and $T_N\to\infty$, where the temporal size depends on $N$. In this contribution, we extend this work in various directions by proposing appropriate inference and testing methodology which is applicable to continuous and integer valued data. 
		%This is done because in real applications a network can have  large dimension so it is essential to understand its behaviour as $N$ grows to infinity. 
		%This is the reason why the asymptotics when $N\to\infty$ are also taken into account, together with the standard asymptotics, i.e. $T\to\infty$.
		\subsection{Related works}
		%\paragraph*{Related works}
		The NAR model has been the focus of recent research e.g. logistic network models \cite{zhou2017dynamic}, network quantile model \cite{zhu2019}, grouped least squares estimation \cite{zhu2020}, network GARCH models \cite{zhou2020}, and applications \cite{armillotta2023unveiling}.
		In addition, \cite{Knightetal(2020)} consider %a model with more elaborate neighbourhood structures, called
		the Generalized NAR model, for the continuous case, which incorporates the effect of several layers of connections between  nodes of the network. This study is accompanied by appropriate   \texttt{R} software. 
		As pointed out by  \cite{zhu2017}, discrete response variables are frequently  encountered in applications and are strongly related to network data. For example, in the social network analysis data are counts, e.g. number of characters contained in posts of single users, number of posts shared, etc. Models for count time series have been studied by  \cite{armillotta_fokianos_2021} who introduced  linear and log-linear Poisson network autoregression models (PNAR). %assumption of $\alpha$-mixing \cite{rosen1956}, \cite{douk1994} innovations.
		Such extensions show that  the NAR model is a member of the broad class of     Generalized Linear Models (GLM) \cite{McCullaghandNelder(1989)};
		for the count case,  the observations are in fact marginally Poisson distributed, conditionally to their past history. The joint dependence  is imposed by employing a copula construction, as introduced by \cite{fok2020} and is outlined in the Supplement \ref{Supp:copula-Poisson}. In addition, \cite{armillotta_fokianos_2021} have studied thoroughly the two related types of asymptotic inference (a)-(b) discussed above, in the context of quasi maximum likelihood inference (QMLE) \citep{hey1997}.
		
		\subsection{Nonlinear models \& testing linearity}
		%\paragraph*{Nonlinear models}
		Theory related to NAR model relies on the assumption of linearity. However, there are many real world examples where a non-linear model might be more appropriate. For instance, in economics,  the theory  supports occasionally   nonlinear behaviour, see \cite[Ch.~2]{Terasvirtaetal(2010)} for several examples. In modelling economic/financial time series, it seems natural to allow for the existence of different states, or regimes (e.g. expansion/crisis), such that  dynamics depend upon the specific regime; \cite[Ch.~18]{zivot_2006}. Government agencies, research institutes and central banks may typically employ nonlinear models \cite[p.~16]{Terasvirtaetal(2010)}. %This has also justified the spread of a huge literature \citep{tong_1990,fan_yao_2003}, to mention a few. 
		As far as  social network analysis concerned,  nonlinear phenomena are frequently observed e.g. "superstars" with huge number of followers having an exponentially higher impact on other users' behaviour when compared to the "standard" user \cite{zhu2017}. So from both theoretical and applied point of view, there exists a necessity to develop non-linear autoregression model theory for the case of network time series.
		Literature in univariate non-linear time series  is vast and well developed, in particular for continuous random variables. The interested reader is referred to \cite{tong_1990}, \cite{chan_tong_2001chaos} and \cite{fan_yao_2003}, among many others. For integer-valued data the literature is more recent, although still flourishing;  see \cite{davis2003}, \cite{fok2009}, \cite{fokianos2012nonlinear}, \cite{davis2009negative}, \cite{chri2014}, \cite{Wangetal(2014)} and \cite{gorgi2020beta}. 
		General results are given by  \cite{Woodardetall(2010)}, \cite{douc2013}, \cite{fra2016}, \cite{davis2016}, \cite{fok2017} and \cite{armillotta_2022_EJS}.

		Estimation for non-linear models has been traditionally accompanied by tests who examine the assumption of linearity.
		Such tests are routinely used  because they provide a sound framework where evidence of linearity can be examined thoroughly. In addition, they  offer guidance about the specific non-linear model to be  fitted; see \cite[Sec.~3]{terasvirta_1994aspects} who suggests that  	"The first step of a specification strategy for any type of nonlinear model should therefore consist of testing linearity".  Furthermore, proper inference is developed especially when  the linear model is nested within a non-linear model and as such the resulting estimators obtained after fitting a non-linear model may be inconsistent; see \cite[Sec.~5.1,5.5]{Terasvirtaetal(2010)}. Finally, it is always important to have additional tools for both practical usefulness and for explanatory data analysis;   detecting latent variables, change point testing etc.
		These points motivated the growth of a large literature on linearity tests, especially for continuous-valued random variables.  
		A survey of general type results for test statistics whose application depend on identifiable/non-identifiable  parameters are given by \cite{francq_etal_2010} and  \cite{hansen_1999testing}.   Finally, \cite{andrews_1994} and \cite{hansen_1996} established a general framework for testing linearity when some parameters are non-identifiable under the hypothesis of linearity. Non-linear models for count time series and the associated testing linearity problem has been studied by   \cite{christou_fokianos_2015} who  suggest a quasi-score test for (mixed) Poisson random variables.  All the above works are concerned with univariate time series.  Related literature on multivariate observation-driven models for discrete-valued data considers only linear cases, see \cite{pedeli2011,pedeli2013,pedeli22013} and \cite{fok2020}, among others.

		\subsection{Main challenges}
		%\paragraph*{Main challenges}
		Existing theory does not cover the case of NAR models which are  multivariate
		and their properties depend  on both $N$ (size) and $T$ (time). 
		Therefore, asymptotically, both indices, $N$ and $T$, tend to infinity and it is a great challenge to address
		the properties of such multivariate processes.  Moreover, QMLE inference  breaks down  when estimating network models because $N$ is large.  Consequently, non standard proofs  are required for establishing  stationarity of infinite-dimensional processes and to obtain sound inference  within the  double asymptotic regime (b). Note that even a simple weak law does not hold under regime (b). In particular,  quantities related to the inference are of the order $\mathcal{O}(N)$. Consider, for example,  
		the sample information matrix  which depends on the network structure and diverges with $N\to\infty$. Then the covariance of estimators explodes and 
		proving  existence of limiting  Hessian and Fisher matrices is a challenging problem.  These issues  become more persistent when  testing   linearity, especially  in the case of non-identifiable parameters. Then, a double indexed  asymptotic theory with infinite dimensional vectors  over a uniform metric space for the score and related matrices is relevant and asks for development.  
		
		%Such results are available in the literature only for linear NAR/PNAR models.
		\subsection{Our contribution}
		%\paragraph*{Our contribution}
		The main results of our contribution are the following: i) Specification of a novel general nonlinear network autoregressive model for both continuous and discrete valued multivariate network observations (Section~\ref{SEC: model}).
		ii) Under mild  conditions, stationarity results (Section~\ref{stability conditions}-\ref{stability conditions div N}) and asymptotic theory of QMLE are established, when both time and network dimensions increases (Section~\ref{SEC: inference}). These are new results because  nonlinear NAR models have not been treated in the literature. iii)  Development of testing procedures for examining linearity  of the model by employing 
		a quasi-score based test under both asymptotic regimes (a)-(b); see  Sec. ~\ref{SEC: linearity test}. 
		We focus on  score  tests, as they require fitting  NAR models under the null hypothesis. This  is 
		computationally simpler task.  Their asymptotic distribution is  (non-central) chi-square   even when the parameters to be tested  lie on the boundary of the parameter  space.
		iv) Finally, we consider the situation where non identifiable parameters, say  $\gamma$, are present under the null. In such case the results of Section~\ref{SEC: model}-\ref{SEC: linearity test} are  extended. This is done by  proving stochastic equicontinuity of the score with diverging number of nodes and double-indexed convergence of Hessian/information matrices uniformly over $\gamma$. Then, as $N\to\infty$ and $T_N\to\infty$, we show that the quasi-score linearity test asymptotically approximates a (non-central) chi-square process (Section~\ref{SEC: identifiability}).  We  discuss two ways to approximate the $p$-values of the tests:  
		by the upper bound developed in \cite{davies_1987}, and by bootstrap approximation relying on stochastic permutations \cite{hansen_1996}. The double asymptotic convergence of bootstrap $p$-values to their theoretical counterpart is also established. We are not aware of other contributions, to the best of our knowledge,  attacking the problem of general asymptotic inference with  increasing dimension 
		network time series models for both count and continuous data.
		The last sections  discuss results of a simulation study (Section~\ref{SEC: simulations}) and real data examples (Section~\ref{SEC: applications}). All the methodology is implemented in the new released \texttt{R} package \texttt{PNAR} \cite{armillotta_et_al_rpackage_2022,armillotta_2022r}. % An brief outlook on Threshold network models can be found in Section~\ref{threshold models}. 
		The paper concludes with Appendix \ref{SEC appendix}, containing the proofs for Sections \ref{SEC: model} and \ref{SEC: identifiability}. A Supplementary Material \ref{SEC supplementary material} (henceforth Supp. Mat.) contains the proofs for Sections \ref{SEC: inference} and  \ref{SEC: linearity test} and additional  results for Threshold Network Autoregressive model, under asymptotic regime (a).

		\subsection{Notation}  We denote $|x|_r=(\sum_{j=1}^{p}\norm{x_j}^r)^{1/r}$ the $l^r$-norm of a $p \times 1$ vector $x$. If $r=\infty$, $|x|_\infty=\max_{1\leq j\leq p}|x_{j}|$. Let $\lVert X\rVert_r=(\sum_{j=1}^{p}\E(|X_j|^r))^{1/r}$ the $L^r$-norm for a random vector $X$.  For a $q \times p$ matrix $M=(m_{ij})$, ${i=1,\ldots,q, j=1,\ldots,p}$, denote the generalized matrix norm $\vertiii{M}_{r}= \max_{\norm{x}_{r}=1} \norm{Mx}_{r}$. If $r=1$, then $\vertiii{M}_1=\max_{1\leq j\leq p}\sum_{i=1}^{q}|m_{ij}|$. If $r=2$, $\vertiii{M}_2=\rho^{1/2}(M^\prime M)$, where $\rho(\cdot)$ is the spectral radius. If $r=\infty$, $\vertiii{M}_\infty=\max_{1\leq i\leq q}\sum_{j=1}^{p}|m_{ij}|$. If $q=p$, these norms are matrix norms. Define the entry-wise norms $\norm{M}_{r}=(\sum_{i=1}^{q}\sum_{j=1}^{p}\norm{m_{ij}}^r)^{1/r}$. If $q=p$ and $1\leq r \leq 2$, these are matrix norms. Define $\norm{x}^r_{vec}=(\norm{x_1}^r,\dots,\norm{x_p}^r)^\prime$, $\lnorm{X}_{r,vec}=(\E^{1/r}\norm{X_1}^r,\dots,\E^{1/r}\norm{X_p}^r)^\prime$, $\norm{M}_{vec}=(\norm{m_{ij}})_{(i,j)}$ and $\preceq$ a partial order relation on $x,y\in\R^p$ such that $x\preceq y$ means $x_i\leq y_i$ for $i=1,\dots,p$. The same notation holds for random vectors $X,Y$ such that $X \preceq Y$ means $X_i\leq Y_i$ almost surely (a.s.) for $i=1,\dots,p$. %Finally, set $I$ the $N\times N$ identity matrix.
		Set the compact notation $\max_{1\leq i< \infty}x_i=\max_{i\geq 1}x_i$.
		The notations $C_r$ denote a constant which depend on $r$, where $r\in\N$, and $C$ is a generic constant. The symbol $I$ denotes an identity matrix, $1$ ($0$) a vector of ones (zeros), whose dimension  depends on  context.
		Let $\Rightarrow$ denote weak convergence with respect to the uniform metric. % \cite[Ch.~19]{van2000}.
		Finally, the notation $ \left\lbrace N, T_N \right\rbrace \to\infty$ will be used as a shorthand for $N\to\infty$ and $T_N\to\infty$. %, where the temporal size $T$ depends on the network dimension $N$.

		\section{Nonlinear NAR model specification} \label{SEC: model}
		
		Consider a network with $N$ nodes (network size) indexed by $i=1,\dots N$. The neighborhood structure of the network is explicitly described by its adjacency matrix $A=(a_{ij})\in\R^{N\times N}$ where $a_{ij}=1$, if there is a directed edge from $i$ to $j$ (e.g. user $i$ follows $j$ on Twitter, a flight take off from airport $i$ landing to airport $j$), and $a_{ij}=0$ otherwise. Undirected graphs are allowed ($A=A^\prime$) but self-relationships are excluded i.e. $a_{ii}=0$ for any $i=1,\dots,N$. This is a typical and realistic assumption, e.g. social networks, see \cite{wass1994} and \cite{kola2014}, among others. The network structure, equivalently  the matrix $A$, is assumed to be non-random. A row-normalized adjacency matrix is defined by $W=\mathrm{diag}\left\lbrace n_1,\dots,n_N\right\rbrace^{-1}A $ where $n_i=\sum_{j=1}^{N}a_{ij}$ is the so called out-degree, the total number of edges starting from the node $i$. Then, $W$ satisfies $\vertiii{W}_\infty=1$ and $W 1 = 1$. Moreover, define  $e_i$ the $N$-dimensional unit vector with 1 in the $i^{\text{th}}$ position and 0 everywhere else, such that $w_i^\prime=e_i^\prime W=(w_{i1}\dots,w_{iN})$ the  $i^{\text{th}}$ row of the matrix $W$, with $w_{ij}=a_{ij}/n_i$.
		
		Define a $N$-dimensional vector of time series $\left\lbrace Y_t=(Y_{1,t} \dots Y_{i,t} \dots Y_{N,t})^\prime,\,t=1,2\dots,T\right\rbrace $ which is observed on a given network; i.e. a univariate time series is measured for each node, with rate $\lambda_{i,t}$. Denote by
		$\left\lbrace \lambda_t\equiv\E(Y_t|\Fb_{t-1})=(\lambda_{1,t} \dots \lambda_{i,t} \dots \lambda_{N,t})^\prime,\,t=1,2\dots,T\right\rbrace $, the corresponding conditional expectation vector, and denote the history of the process by $\Fb_t=\sigma(Y_s: s\leq t)$ . Assume that  $\left\lbrace Y_t: t\in\Z \right\rbrace $ is integer-valued and consider the following first order nonlinear Poisson Network Autoregression (PNAR)
		\begin{equation}
			Y_t=N_t(\lambda_t),\quad\quad \lambda_t=f(Y_{t-1}, W, \theta),
			\label{nonlinear_pnar}
		\end{equation}
		where $\left\lbrace N_t \right\rbrace $ is a sequence of $IID$ $N$-variate copula-Poisson process with intensity 1, counting the number of events in $[0,\lambda_{1,t}]\times\dots\times[0,\lambda_{N,t}]$ and $f(\cdot)$ is a deterministic function depending on the past lag values of the count process, the known network structure $W$ and an $m$-dimensional parameter vector $\theta$. Examples will be given below. More precisely, the conditional marginal probability distribution of the count variables is $Y_{i,t}|\mathcal{F}_{t-1}\sim Poisson(\lambda_{i,t})$, for $i=1,\dots,N$, %with $\lambda_{i,t}=f(X_{i,t-1}, Y_{i,t-1},\theta)$
		and the joint distribution is generated by a copula, which depends on a parameter $\rho$, say $C(\cdot, \rho)$ and it is imposed on waiting times of a Poisson process specified as in %\cite[p.~474]{fok2020} and
		\cite[Sec.~2.1]{armillotta_fokianos_2021}; See Supp. Mat.~\ref{Supp:copula-Poisson}. Several alternative models resembling multivariate Poisson distributions have been proposed in the literature; see \cite[Sec.~2]{fokianos_2021} for a discussion about the issues of available  multivariate count distributions. 
		A copula-based approach for the data generating process (henceforth DGP) is used throughout this paper. Results for higher order models are derived straightforwardly; see Remark~\ref{Rem: p lag}.
		
		Similar to the case of integer-valued time series, we define the nonlinear Network Autoregression (NAR) for continuous-valued time series by
		\begin{equation}
			Y_t=\lambda_t+\xi_t,\quad\quad \lambda_t=f(Y_{t-1}, W, \theta)
			\label{nonlinear_nar}
		\end{equation} 
		where $\xi_{i,t}\sim IID(0,\sigma^2)$, for $1\leq i \leq N$ and $1\leq t\leq T$ and $\lambda_{t}=\E(Y_t|\mathcal{F}_{t-1})$.
		
		Denote by $X_{i,t}=n_i^{-1}\sum_{j=1}^{N}a_{ij}Y_{j,t}$ the network effect, i.e. the average impact of node $i$'s connections. Consider the partition of the parameter vector $\theta=( \theta^{(1)\prime}, \theta^{(2)\prime}) ^\prime$, where the vectors $\theta^{(1)}$ and $\theta^{(2)}$ are of dimension $m_1$ and $m_2$, respectively, such that $m_1+m_2=m$. 
		%Let $\tilde{Y}_{i,t-p}=(X_{i,t-1},\dots,X_{i,t-p},Y_{i,t-1},\dots,Y_{i,t-p})^\prime$. %be a $m\times 1 $ random vector, where $m=2p+1$.
		%$\tilde{Y}=(Y_{t-1},\dots,Y_{t-p})^\prime$ be a $N\times p$ random matrix 
		%Let $Z_{i,t-p}=(1, X_{i,t-1},\dots,X_{i,t-p},Y_{i,t-1},\dots,Y_{i,t-p})^\prime$ $Z_{t-p}=(Y_{t-1},\dots,Y_{t-p}, X_{t-1},\dots,X_{t-p})^\prime$ be $N\times q$, where $1$ is an $N$-dimensional vector of ones and
		%For clarity of exposition we set $p=1$, so $\tilde{Y}_{i,t-1}=(X_{i,t-1},Y_{i,t-1})^\prime$. Without loss of generality the results of the present contribution hold for a general lag order $p>1$, see Remark~\ref{Rem: p lag}.
		For $t=1\dots,T$, both \eqref{nonlinear_pnar}-\eqref{nonlinear_nar} have element-wise components
		\begin{equation}
			\lambda_{i,t}=f_i(X_{i,t-1}, Y_{i,t-1}; \theta^{(1)}, \theta^{(2)})\,,\quad i=1,\dots,N\,,
			%~~\text{s.t.}~~ f(Z_{i,t}; \theta^{(1)}, \theta_0^{(2)})=Z_{i,t-p}^\prime\theta^{(1)}\,.
			\label{general model}
		\end{equation}
		where $f_i(\cdot)$ is the $i^{\text{th}}$ component of the function $f(\cdot)$ depending on the specific model of interest, which can contain linear and nonlinear effects. In general, $\theta^{(1)}$ will denote an $m_1\times1$ vector associated with linear model parameters, whereas $\theta^{(2)}$ will denote the $m_2\times 1$ vector of nonlinear parameters. Some examples are given below.
		
		\subsection{Examples}
		
		\begin{example}
			Consider \eqref{nonlinear_nar} and  the first order linear NAR(1),
			\begin{equation}
				\lambda_{i,t}=\beta_0+\beta_1X_{i,t-1}+\beta_2Y_{i,t-1}\,,
				%=Z_{i,t-1}^\prime\theta^{(1)}\,,
				\label{nar_1}
			\end{equation}
			which is a special case of \eqref{general model}, with $\theta^{(1)}=(\beta_0, \beta_1, \beta_2)^\prime$.  Model \eqref{nar_1} was originally introduced by \cite{zhu2017} for the case of continuous random variables ${Y_{t}}$, such that $Y_{i,t}=\lambda_{i,t}+\xi_{i,t}$. %and $\xi_{i,t}$ is assumed to be $IID(0,\sigma^2)$, over $1\leq i \leq N$ and $1\leq t\leq T$. 
			For each single node $i$, model \eqref{nar_1} allows the conditional mean of the process to depend on the past of the variable itself, for the same node $i$,
			and the average of the other nodes $j\neq i$ by which the focal node $i$ is connected. Implicitly, only the nodes directly connected with the focal node $i$ can impact on the conditional mean process $\lambda_{i,t}$. This is reasonable assumption in many applications; for example, in the social network analysis, if the focal node $i$ does not follows a node $l$, so $a_{il}=0$, the effect of the activity related to the latter do not affect the former. 
			The parameter $\beta_1$ is called network effect, as it measures the average impact of
			the $i$'th  node connections.  The coefficient $\beta_2$ is called autoregressive effect because it determines the impact of the lagged variable $Y_{i,t-1}$. Model \eqref{nar_1} has been extended to the case of count time series by  \cite{armillotta_fokianos_2021};  it is called  the linear PNAR(1) with $	Y_{i,t}|\mathcal{F}_{t-1}\sim Poisson(\lambda_{i,t})$ for $i=1,\dots,N$ and the copula-based DGP, as described earlier.  
			
		\end{example}
		\begin{example}  \label{ex nonlinear}
			A nonlinear deviation of \eqref{nar_1}, when $Y_t$ takes integer values is given by 
			\begin{equation}
				\begin{aligned}
					\lambda_{i,t}=\frac{\beta_0}{(1+X_{i,t-1})^{\gamma}}+\beta_1X_{i,t-1}+\beta_2Y_{i,t-1}\,,\\
					%\lambda_{i,t}=\frac{\beta_0}{(1+Y_{i,t-1})^{\gamma}}+\beta_1X_{i,t-1}+\beta_2Y_{i,t-1}\,,
				\end{aligned}
				\label{nonlinear}
			\end{equation}
			where $\gamma\geq0$. Clearly, \eqref{nonlinear} approaches a linear model for small values of $\gamma$, and $\gamma=0$ reduces to the linear model \eqref{nar_1}. Instead, when $\gamma$ is larger than  zero, \eqref{nonlinear} introduces a perturbation, deviating from the linear model \eqref{nar_1}. Hence, \eqref{nonlinear} is a special case of \eqref{general model}, with $\theta^{(1)}=(\beta_0, \beta_1, \beta_2)^\prime$ and $\theta^{(2)}=\gamma$. Model \eqref{nonlinear} introduces a nonlinear drift in the intercept so that the baseline effect varies over time as a function of the network. 
			%This can be used in the cases in which the magnitude of the network can modify the counts for all the nodes of a given network. 
			If   $Y_{i,t}$  counts activities of users in a social network (likes, reactions, etc.) and  the  community becomes more active, then  the average magnitude of $X_{i,t-1}$ grows and thus the baseline for each node $i$  varies.
			When  $Y_t\in\R^N$, the following model
			\begin{equation}
				\begin{aligned}
					\lambda_{i,t}=\frac{\beta_0}{(1+\norm{X_{i,t-1}})^{\gamma}}+\beta_1X_{i,t-1}+\beta_2Y_{i,t-1}\,,\\
					%\lambda_{i,t}=\frac{\beta_0}{(1+Y_{i,t-1})^{\gamma}}+\beta_1X_{i,t-1}+\beta_2Y_{i,t-1}\,,
				\end{aligned}
				\label{nonlinear_cont}
			\end{equation}
			is analogous to \eqref{nonlinear} but for continuous valued time series. 
			To the best of our knowledge, we are not aware of any stability or inferential  results for models \eqref{nonlinear}-\eqref{nonlinear_cont}  when $ \left\lbrace N, T_N \right\rbrace \to\infty$. When $N=1$ and  $T\to\infty$, such non-linear models have been studied by 
			by \cite{gao_etal_2009, fokianos2012nonlinear}, among others.
			
		\end{example}
		
		\begin{example}  \label{ex star}
			Another example  of  \eqref{general model} is given by the  Smooth Transition version of the NAR model, say STNAR(1),
			\begin{equation}
				\begin{aligned}
					\lambda_{i,t}=\beta_0+(\beta_1+\alpha\exp(-\gamma X_{i,t-1}^2))X_{i,t-1}+\beta_2Y_{i,t-1}\,,\\
					%\lambda_{i,t}=\beta_0+\beta_1X_{i,t-1}+(\beta_2+\alpha\exp(-\gamma Y_{i,t-1}^2))Y_{i,t-1}\,,
				\end{aligned}
				\label{stnar}
			\end{equation}
			where $\gamma\geq0$; see  \cite{terasvirta1994} for an introduction to STAR models.  This models introduces a smooth regime switching behavior of  the network effect making it possible to vary smoothly  from $\beta_1$ to $\beta_1 + \alpha$, as $\gamma$ varies from large to small values.	When $\alpha=0$ in \eqref{stnar},  the linear NAR model \eqref{nar_1} is obtained. Moreover, \eqref{stnar} is a special case of \eqref{general model}, with $\theta^{(1)}=(\beta_0, \beta_1, \beta_2)^\prime$ and $\theta^{(2)}=(\alpha,\gamma)^\prime$.
			In the case of univariate count time series see \cite{fok2009, fokianos2012nonlinear}, for more. 
			
		\end{example}
		
		\begin{example}  \label{ex tar}
			Define the  Threshold NAR model (\cite{lim_tong_1980}), say TNAR(1),  by
			\begin{equation}
				\begin{aligned}
					\lambda_{i,t}=\beta_0+\beta_1X_{i,t-1}+\beta_2Y_{i,t-1}+(\alpha_0+\alpha_1X_{i,t-1}+\alpha_2Y_{i,t-1})I(X_{i,t-1}\leq \gamma)\,,\\
					%\lambda_{i,t}=\beta_0+\beta_1X_{i,t-1}+\beta_2Y_{i,t-1}+(\alpha_0+\alpha_1X_{i,t-1}+\alpha_2Y_{i,t-1})I(Y_{i,t-1}\leq r)\,,
				\end{aligned}
				\label{tnar}
			\end{equation}
			where $I(\cdot)$ is the indicator function and $\gamma$ is the threshold parameter. When $\alpha_0=\alpha_1=\alpha_2=0$, model \eqref{tnar} reduces to the linear model  \eqref{nar_1}. In this case, $\theta^{(1)}=(\beta_0, \beta_1, \beta_2)^\prime$ and $\theta^{(2)}=(\alpha_0, \alpha_1, \alpha_2, \gamma)^\prime$ show that \eqref{tnar} is a special case of \eqref{general model}. In the case of univariate count time series see \cite{Woodardetall(2010), douc2013, Wangetal(2014), christou_fokianos_2015}, for more. 
			
			Nonlinear functions, such  as \eqref{stnar}-\eqref{tnar}, provide examples of  switching models accounting for    regime specific dynamics of the observed process. The switching mechanism depends on the network effect.  For 
			example,  consider  $A$ as the network matrix connecting regional districts which share at least a border.  Let $Y_{i,t}$  denote the numbers of reported cases for some disease in each of these districts. Then, for each district $i$, the historical average  of neighbours ($X_{i,t-1}$)  determines a switching effect, say from exponentially expanding pandemic to dying out pandemic (and vice versa).  Note that  for \eqref{stnar}, the  network effect is regime-dependent but this can be modified suitably as in \eqref{tnar}. In conclusion, the  dichotomy  between STNAR and TNAR  models is that the former accounts for smooth transitions  while the latter models sudden changes;  see \cite{Terasvirtaetal(2010)} for more on nonlinear modeling of time series.
		\end{example}

		\subsection{Stability conditions for fixed network size} \label{stability conditions}
		Set $f(\cdot, W, \theta)=f(\cdot)$.  %Define $Y^*_t$ a copy of $Y_t$ obtained by \eqref{nonlinear_pnar} (or \eqref{nonlinear_nar}).
		\begin{theorem} 
			\label{Thm. Ergodicity of nonlinear model}
			Consider model \eqref{nonlinear_pnar}, with fixed  $N$. Define $G=\mu_1 W + \mu_2 I$, where $\mu_1$, $\mu_2$ are non-negative constants such that $\rho(G)<1$ and assume that  for $y,y^* \in \N^N$
			%			\begin{equation}
				%				\norm{f(Y_{t-1})-f(Y^*_{t-1})}_{vec}\preceq G\norm{Y_{t-1}-Y^*_{t-1}}_{vec}\,.
				%				\label{contraction}
				%			\end{equation}
			\begin{equation}
				\norm{f(y)-f(y^*)}_{vec}\preceq G\norm{y-y^*}_{vec}\,.
				\label{contraction}
			\end{equation}
			
			Then, the process $\{ Y_t,~ t \in \mathbb{Z} \}$ is stationary, ergodic and  $\mbox{E}\norm{Y_t}_a^{a}  < \infty$ for any $a\geq1$. 
		\end{theorem}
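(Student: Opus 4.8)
The plan is to construct a stationary solution directly by iterating the stochastic recursion backward in time, and then to read off ergodicity and the moment bounds from that construction. Extend the driving sequence $\{N_t\}$ to $t\in\Z$ and, for each $m\ge0$, let $Y^{(m)}_t$ denote the value at time $t$ of the chain run forward from time $t-m$ with initial value $Y^{(m)}_{t-m}=0$, i.e.\ $Y^{(m)}_s=N_s(f(Y^{(m)}_{s-1}))$ for $s=t-m+1,\dots,t$. The engine of the argument is a conditional $L^1$ contraction obtained by coupling two chains through the \emph{same} copula-Poisson process: for the $i$-th coordinate, the increment of $N_t$ over the interval with endpoints $\lambda_{i,t}$ and $\lambda^*_{i,t}$ is, conditionally on $\Fb_{t-1}$, $Poisson(\norm{\lambda_{i,t}-\lambda^*_{i,t}})$ irrespective of the copula, which only entangles the coordinates. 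Hence
\begin{equation*}
\E\bigl(\norm{Y_t-Y^*_t}_{vec}\mid\Fb_{t-1}\bigr)=\norm{f(Y_{t-1})-f(Y^*_{t-1})}_{vec}\preceq G\,\norm{Y_{t-1}-Y^*_{t-1}}_{vec},
\end{equation*}
the last step being exactly \eqref{contraction}. Taking expectations, using the tower property and the entrywise monotonicity of multiplication by $G\ge0$, and iterating gives $\E\norm{Y_t-Y^*_t}_{vec}\preceq G^{t}\,\E\norm{Y_0-Y^*_0}_{vec}$ for any two solutions driven by the same innovations with integrable $\Fb_0$-measurable starting values.

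Applying this to the pair $(Y^{(m+1)}_t,Y^{(m)}_t)$, which at time $t-m$ equal $N_{t-m}(f(0))$ and $0$ respectively and thereafter evolve with identical innovations, yields $\E\norm{Y^{(m+1)}_t-Y^{(m)}_t}_{vec}\preceq G^{m}\norm{f(0)}_{vec}$. Since $\rho(G)<1$ this is summable in $m$, so $\sum_{m\ge0}\norm{Y^{(m+1)}_t-Y^{(m)}_t}_{vec}<\infty$ almost surely; as each coordinate of these differences is a nonnegative integer, the sequence $\{Y^{(m)}_t\}_m$ is a.s.\ eventually constant, and we call its (a.s.\ and $L^1$) limit $Y_t$. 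By construction $Y_t$ is one and the same measurable functional of $(N_t,N_{t-1},\dots)$ applied at every $t$, hence $\{Y_t,\,t\in\Z\}$ is stationary; being a shift-equivariant measurable image of the $IID$, hence ergodic, sequence $\{N_t\}$, it is also ergodic. Letting $m\to\infty$ in $Y^{(m+1)}_t=N_t(f(Y^{(m)}_{t-1}))$, where both sides are eventually constant a.s., shows that $Y_t$ solves \eqref{nonlinear_pnar}; the same contraction gives uniqueness of the stationary solution.

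For the moment bound I would use that the $a$-th moment of a $Poisson(\lambda)$ variable is a polynomial of degree $a$ in $\lambda$ with leading coefficient $1$, so that $\lnorm{Y_{i,t}}_a\le\lnorm{\lambda_{i,t}}_a+C_a$; combining this with Minkowski's inequality and the bound $\norm{f(y)}_{vec}\preceq\norm{f(0)}_{vec}+G\norm{y}_{vec}$ (the $y^*=0$ instance of \eqref{contraction}) produces the recursion $\lnorm{Y_t}_{a,vec}\preceq(\norm{f(0)}_{vec}+C_a 1)+G\,\lnorm{Y_{t-1}}_{a,vec}$, which also holds for the backward chains $Y^{(m)}_\cdot$ started from $0$. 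Iterating and using $\sum_{k\ge0}G^k=(I-G)^{-1}<\infty$ entrywise (again because $\rho(G)<1$) gives $\sup_m\lnorm{Y^{(m)}_t}_{a,vec}\preceq(I-G)^{-1}(\norm{f(0)}_{vec}+C_a 1)$, and Fatou's lemma transfers this bound to the limit, yielding $\E\norm{Y_t}_a^a<\infty$ for every $a\ge1$.

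The step I expect to be the main obstacle is closing the moment recursion: the crude bound $\lnorm{Y_{i,t}}_a\le C_a(1+\lnorm{\lambda_{i,t}}_a)$ would place a multiplicative constant in front of $G$ and destroy the contraction whenever $a>1$, so it is essential to exploit the exact form of the Poisson moments to obtain the \emph{additive}-constant bound $\lnorm{Y_{i,t}}_a\le\lnorm{\lambda_{i,t}}_a+C_a$. A secondary point needing care is the coupling itself: both chains must be run on a single realisation of the copula-Poisson process, so that the coordinatewise count differences are genuinely $Poisson(\norm{\lambda_{i,t}-\lambda^*_{i,t}})$; the copula dependence across coordinates is harmless because it leaves these marginal increments untouched. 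The continuous-valued model \eqref{nonlinear_nar} fits the same scheme, with the Poisson moment computations replaced by moment assumptions on $\xi_t$.
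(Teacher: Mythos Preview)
Your proof is correct and, at the level of ideas, mirrors what the paper ultimately relies on, but the presentations differ substantially. The paper's proof is a black-box application of \citet[Thm.~1 and Lemma~2]{tru2021}: it checks the abstract conditions A1--A3 there (integrability of $F(y,N_1)$, the vector contraction $\E\norm{F(y,N_1)-F(y^*,N_1)}_{vec}\preceq G\norm{y-y^*}_{vec}$, and the moment bound $\lnorm{N_t[f(y)]}_{a,vec}\preceq(1+\delta)\norm{f(y)}_{vec}+b1$) and concludes. You instead carry out the backward-iteration/coupling construction by hand, which is precisely what Truquet's theorem does under the hood; your argument is therefore self-contained and avoids the external reference. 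Two small points worth noting: your observation that the nonnegative-integer-valued telescoping differences force the backward chain to be \emph{eventually constant} a.s.\ is a nice shortcut specific to the count setting, and your additive Poisson moment bound $\lnorm{Y_{i,t}}_a\le\lnorm{\lambda_{i,t}}_a+C_a$ (obtained from $\sum_{k\le a}S(a,k)\lambda^k\le(\lambda+c_a)^a$ by choosing $c_a$ large enough) is slightly sharper than the $(1+\delta)$-multiplicative version the paper borrows from Truquet's Lemma~2, so you avoid the auxiliary $G_\delta$ trick. Both routes buy the same conclusion; yours is more explicit, the paper's is shorter.
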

		%The proof of Theorem~\ref{Thm. Ergodicity of nonlinear model} is postponed to Appendix \ref{Proof Ergodicity of nonlinear model}. 
		The parallel result for continuous variables is also established.
		\begin{theorem} 
			\label{Thm. Ergodicity of nonlinear continuous model}
			Consider model \eqref{nonlinear_nar}, with fixed $N$. Define $G=\norm{\mu_1} W + \norm{\mu_2} I$, where $\mu_1$, $\mu_2$ are real constants such that $\rho(G)<1$ and the contraction condition \eqref{contraction} holds.
			Then, the process $\{ Y_t,~ t \in \mathbb{Z} \}$ is stationary ergodic with $\mbox{E}\norm{Y_t}_1  < \infty$. Moreover,  if $\E\norm{\xi_t}^a_a<\infty$ for some $a\geq 8$, then $\E\norm{Y_t}^a_a<\infty$.
		\end{theorem}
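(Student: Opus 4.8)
I would construct the stationary solution directly by backward iteration of the random map $y \mapsto f(y)+\xi$, since this avoids imposing any density or absolute-continuity condition on the innovations. Because $G=\norm{\mu_1}W+\norm{\mu_2}I$ is entrywise nonnegative with $\rho(G)<1$, fix $\kappa\in(\rho(G),1)$ and choose a weighted norm $\norm{\cdot}_*$ on $\R^N$ that is monotone for the partial order $\preceq$ and satisfies $\vertiii{G}_*\le\kappa$; such a norm exists, e.g.\ the weighted maximum norm built from a strictly positive vector $v$ with $Gv\preceq\kappa v$, obtained by a Perron--Frobenius argument applied to a small positive perturbation of $G$. For a fixed reference point $y_0\in\R^N$ and each $m\ge 0$, let $Y^{[m]}_t$ be the value produced by recursion \eqref{nonlinear_nar} started from $Y^{[m]}_{t-m}=y_0$ and driven by the innovations $\xi_{t-m+1},\dots,\xi_t$; by stationarity of $\{\xi_t\}$ its law depends on $m$ alone.

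First I would show $(Y^{[m]}_t)_{m\ge0}$ is Cauchy in $L^1$. The schemes at levels $m$ and $m+1$ use the same innovations on $\{t-m+1,\dots,t\}$ and differ only in the state fed in at time $t-m$ (where $Y^{[m]}_{t-m}=y_0$ while $Y^{[m+1]}_{t-m}=f(y_0)+\xi_{t-m}$); applying the contraction \eqref{contraction} $m$ times gives entrywise $\norm{Y^{[m+1]}_t-Y^{[m]}_t}_{vec}\preceq G^m\norm{f(y_0)+\xi_{t-m}-y_0}_{vec}$, hence by monotonicity and $\vertiii{G}_*\le\kappa$, $\E\norm{Y^{[m+1]}_t-Y^{[m]}_t}_*\le\kappa^m\big(\norm{f(y_0)}_*+\norm{y_0}_*+\E\norm{\xi_0}_*\big)$, which is summable because $\xi_{i,t}$ has finite variance and hence finite first moment. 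Therefore $Y^{[m]}_t\to\tilde Y_t$ almost surely and in $L^1$, with $\E\norm{\tilde Y_t}_1\le\norm{y_0}_1+\sum_{m\ge0}\E\norm{Y^{[m+1]}_t-Y^{[m]}_t}_1<\infty$.

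Next I would identify $\tilde Y_t$ with the stationary ergodic solution. Being an almost-sure limit of measurable functions of $(\xi_s)_{s\le t}$, the process $\{\tilde Y_t\}$ is a fixed measurable factor of the i.i.d.\ (hence stationary ergodic) innovation sequence, so it is itself stationary and ergodic; passing to the limit in the recursion and using that $f$ is continuous (Lipschitz, by \eqref{contraction}) shows $\tilde Y_t=f(\tilde Y_{t-1})+\xi_t$, i.e.\ $\{\tilde Y_t\}$ solves \eqref{nonlinear_nar}. Uniqueness in the class of stationary solutions is immediate, since the entrywise difference of two such solutions is dominated by $G^m$ times a stationary quantity, which tends to $0$ in $L^1$ as $m\to\infty$.

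Finally, for the moment bound I would rerun the estimates at the $L^a$ level. From \eqref{contraction}, $\norm{f(y)}_{vec}\preceq G\norm{y}_{vec}+b$ with $b:=G\norm{y_0}_{vec}+\norm{f(y_0)}_{vec}$ finite, so iterating the recursion and using Minkowski's inequality together with $\lnorm{GZ}_{a,vec}\preceq G\lnorm{Z}_{a,vec}$ yields $\lnorm{Y^{[m]}_t}_{a,vec}\preceq G^m\norm{y_0}_{vec}+\sum_{j=0}^{m-1}G^j\big(b+\lnorm{\xi_0}_{a,vec}\big)$ entrywise; applying $\norm{\cdot}_*$ bounds $\lnorm{Y^{[m]}_{i,t}}_a$ uniformly in $m$ and $i$ whenever $\E\norm{\xi_t}^a_a<\infty$, and Fatou's lemma along the almost-surely convergent sequence then gives $\E\norm{\tilde Y_t}^a_a<\infty$. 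The step I expect to require the most care is the norm bookkeeping: one must verify that the scalar norm used for the contraction is genuinely monotone for $\preceq$, so that the entrywise recursive inequalities translate into scalar geometric bounds, check that $f$ is finite at the reference point (implicit in the model specification), and set up correctly the single-discrepancy comparison between consecutive backward schemes; once these are in place the remaining estimates are routine geometric-series computations.
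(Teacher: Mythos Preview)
Your argument is correct and essentially reconstructs, from scratch, the iterated-random-function machinery that the paper instead invokes as a black box. The paper's proof simply verifies the hypotheses of \citet[Thm.~1]{tru2021}: it checks that the map $F(y,\xi)=f(y)+\xi$ satisfies $\E\norm{F(y,\xi_0)}_1<\infty$ and the vectorwise contraction $\E\norm{F(y,\xi_1)-F(y^*,\xi_1)}_{vec}\preceq G\norm{y-y^*}_{vec}$ with $\rho(G)<1$, and then reads off stationarity, ergodicity and $\E\norm{Y_t}_1<\infty$ from that reference; the moment bound is obtained by one more application of the same theorem via $\lnorm{F(y,\xi_1)}_{a,vec}\preceq C+G\norm{y}_{vec}+\lnorm{\xi_1}_{a,vec}$.

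What you do differently is make the backward-iteration construction explicit: you build a monotone weighted norm adapted to $G$ via Perron--Frobenius, show the backward schemes $Y_t^{[m]}$ form a geometrically Cauchy sequence in $L^1$ (and hence converge a.s.\ by summability of increments), identify the limit as a shift-equivariant measurable functional of the i.i.d.\ innovation sequence to get stationarity and ergodicity, and repeat the estimate in $L^a$ with Minkowski and Fatou. This is precisely the content of the Truquet-type theorem the paper cites, so your route is longer but fully self-contained, whereas the paper's is a three-line verification that presupposes the reader knows (or looks up) the reference. Your bookkeeping caveats at the end are well placed; in particular, the monotonicity of the chosen norm for $\preceq$ is exactly what lets the entrywise inequality $\norm{Y_t^{[m+1]}-Y_t^{[m]}}_{vec}\preceq G^m(\cdot)$ become the scalar bound $\kappa^m(\cdot)$, and that step would fail for a generic operator norm.
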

		
		The proof of Theorems~\ref{Thm. Ergodicity of nonlinear model}-\ref{Thm. Ergodicity of nonlinear continuous model} is given in  Appendix   \ref{Proof Ergodicity of nonlinear model}. 
		Theorem~\ref{Thm. Ergodicity of nonlinear model} extends  \cite[Prop.~1]{armillotta_fokianos_2021}, which was established for the linear PNAR model \eqref{nar_1}. Theorem~\ref{Thm. Ergodicity of nonlinear continuous model}, similarly,  extends \cite[Thm.~1]{zhu2017}.  In particular,  existence of some moments for  $\left\lbrace \xi_t : t\in\Z \right\rbrace $ guarantees  the conclusions of Theorem~\ref{Thm. Ergodicity of nonlinear continuous model}. Such assumption is not necessary in the linear case considered by \cite{zhu2017}--see their eq. (2.1)--because of the assumed normality.  
		
		For each $i=1\dots,N$, the contraction condition  \eqref{contraction} follows by assuming that for $x_i,x^*_i\in\R_+$ and $y_i,y_i^*\in\N$
		%		\begin{equation}
			%			\norm{f_i(X_{i,t-1},Y_{i,t-1})-f_i(X^*_{i,t-1},Y^*_{i,t-1})}\leq\mu_1\norm{X_{i,t-1}-X_{i,t-1}^*}+\mu_2\norm{Y_{i,t-1}-Y_{i,t-1}^*} \label{contraction_2}
			%		\end{equation}
		\begin{equation}
			\norm{f_i(x_i,y_i)-f_i(x^*_i,y^*_i)}\leq\mu_1\norm{x_i-x^*_i}+\mu_2\norm{y_i-y^*_i} \label{contraction_2}
		\end{equation}
		
		because the left hand side of \eqref{contraction_2} is bounded by %$\mu_1|\sum_{j=1}^{N}w_{ij}(Y_{j,t-1}-Y_{j,t-1}^*)|+\mu_2|Y_{i,t-1}-Y_{i,t-1}^*|
		%\leq ( \mu_1w_i^\prime+\mu_2e_i^\prime ) |Y_{t-1}-Y_{t-1}^*|_{vec}$
		$\mu_1|\sum_{j=1}^{N}w_{ij}(y_{j}-y_{j}^*)|+\mu_2|y_{i}-y_{i}^*|
		\leq ( \mu_1w_i^\prime+\mu_2e_i^\prime ) |y-y^*|_{vec}$,		
		where  $\mu_1w_i^\prime+\mu_2e_i^\prime=e_i^\prime G$ is the $i^{\text{th}}$ row of the matrix $G$. Condition \eqref{contraction_2}  is verified  element-wise.  When the nonlinear functions $f_i(\cdot)$ cannot be  expressed  in a vector form. e.g.  $f=(f_1,\dots,f_N)^\prime$, verification of \eqref{contraction_2} is helpful; see \eqref{nonlinear}-\eqref{stnar}. 
		Moreover, the condition $\rho(G)<1$ of Theorem~\ref{Thm. Ergodicity of nonlinear model} is implied by \eqref{contraction_2} when $\mu_1+\mu_2<1$, because $ \rho(G) \leq \vertiii{G}_{\infty} \leq \mu_{1} \vertiii{W}_\infty+\mu_{2} \leq \mu_{1}+\mu_{2}$, %where the equality follows by \cite[9.9(d)]{seber2008} and the inequality holds
		since $\vertiii{W}_\infty=1$, by construction.  
		
		Theorem~\ref{Thm. Ergodicity of nonlinear continuous model} follows again  by \eqref{contraction_2}  but with  $\norm{\mu_s}$,  for $s={1,2}$ and assuming that  $\norm{\mu_1}+\norm{\mu_2}<1$.  Some illustrative examples are given below. 
		\begin{excont} 
			For model \eqref{nar_1},  $\lambda_t=\beta_0 1+ GY_{t-1}$, with $G=\beta_1W+\beta_2I$. In this case, the sharp condition $\rho(G)<1$ is easily verifiable and, under \eqref{contraction}, it implies the results of Theorem~\ref{Thm. Ergodicity of nonlinear model}. However,  the assumptions of the theorem are also satisfied by the set of sufficient conditions \eqref{contraction_2} with $\mu_1=\beta_1$, $\mu_2=\beta_2$ and $\beta_1+\beta_2 <1$, for integer-valued processes. For  the continuous-valued case, a similar argument shows  that  $\norm{\beta_1}+\norm{\beta_2}<1$. 
		\end{excont}
		\begin{excont} 
			Consider model \eqref{nonlinear}. By the mean value theorem (MVT)
			
			%			\begin{align}
				%				\norm{f(x,y)-f(x^*,y^*)}&\leq \max_{x\in\R_+}\norm{\frac{\partial f(x,y)}{\partial x}}\norm{x-x^*} + \max_{y\in\N}\norm{\frac{\partial f(x,y)}{\partial y}}\norm{y-y^*} \nonumber\\
				%				&\leq  \beta_1^*\norm{x-x^*}+\beta_2 \norm{y-y^*} \nonumber 
				%			\end{align}
			\begin{align} 
				\norm{f(x_i,y_i)-f(x^*_i,y^*_i)}&\leq  \max_{x_i\in\R_+}\norm{\frac{\partial f(x_i,y_i)}{\partial x_i}}\norm{x_i-x^*_i} + \max_{y_i\in\N}\norm{\frac{\partial f(x_i,y_i)}{\partial y_i}}\norm{y_i-y^*_i} \nonumber\\
				&\leq  \beta_1^*\norm{x_i-x^*_i}+\beta_2 \norm{y_i-y^*_i} \nonumber 
			\end{align}
			where $\beta_1^*=\max\left\lbrace \beta_1, \beta_0\gamma-\beta_1 \right\rbrace$. %The last inequality holds because $\partial f(x,y)/\partial x=\beta_1-\beta_0\gamma/(1+x)^{\gamma+1}$ is bounded by $\beta_1$ ($\beta_1 >0$) or  by $\beta_0\gamma-\beta_1$ (otherwise). 
			Theorem~\ref{Thm. Ergodicity of nonlinear model} holds with  $G=\beta_1^*W+\beta_2I$  and $\beta_1^*+\beta_2<1$.
			% Clearly a $p$ lag order version of Theorem~\ref{Thm. Ergodicity of nonlinear model} is easily derived with the stationarity condition $\sum_{h=1}^{p}(\beta_{1h}+\beta_{2h})<1$.
			Similarly to model \eqref{nonlinear}, by considering all the possible combinations of signs of $x,\beta_0$ and $\beta_1$ in model \eqref{nonlinear_cont}, we have $\norm{\partial f(x_i,y_i)/\partial x_i}=\norm{\beta_1-\beta_0\gamma/(1+x_i)^{\gamma+1}x_i/\norm{x_i}}\leq\bar{\beta}_1 \equiv \max\left\lbrace \norm{\beta_1}, \norm{\beta_0\gamma-\beta_1},\norm{\beta_1-\beta_0\gamma} \right\rbrace$. Theorem~\ref{Thm. Ergodicity of nonlinear continuous model} holds with $G=\bar{\beta}_1W+\norm{\beta_2}I$  and $\bar{\beta}_1+\norm{\beta_2}<1$.
		\end{excont}
		\begin{excont} 
			In the integer-valued case, 
			%For model \eqref{stnar}, in the integer-valued case, 
			%it holds that $\norm{\partial f(x,y)/\partial x}%=\norm{\beta_1+\alpha \exp(-\gamma x^2)(1-2\gamma x^2)}
			%\leq \beta_1+\alpha \exp(-\gamma x^2)\norm{1-2\gamma x^2}$, since the parameters are constrained in the positive real line. If $\gamma x^2\leq 1/2$, $\norm{\partial f(x,y)/\partial x}\leq \beta_1 + \alpha$, since $\exp(-\gamma x^2)\leq1$, for $\gamma>0$. If $\gamma x^2 > 1/2$, $\norm{\partial f(x,y)/\partial x}=\beta_1+\alpha \exp(-\gamma x^2)(2\gamma x^2-1)\leq \beta_1 + \alpha$, since $\exp(-\gamma x^2)(2\gamma x^2-1)\leq 2\exp(-3/2)<1$, for $\gamma>0$. Therefore, 
			Theorem~\ref{Thm. Ergodicity of nonlinear model} applies to model  \eqref{stnar}  with $G=(\beta_1+\alpha)W+\beta_2I$ and $\beta_1+\alpha+\beta_2<1$, which coincides with the stationarity condition developed for the standard STAR model \cite{terasvirta1994}. %Note that a sharper bound can be obtained as in \cite[Cor.~2.2]{Doukhanetal(2012)}, but the condition $\beta_1+\alpha+\beta_2<1$ is more intuitive and have an immediate usefulness in practical applications. 
			By considering all the possible combinations of signs for $\beta_1$ and $\alpha$, it is not difficult to show that  Theorem~\ref{Thm. Ergodicity of nonlinear continuous model} is verified, for model \eqref{stnar}, under the similar sufficient condition $\beta^\star_1+\norm{\beta_2}<1$, where $\beta^\star_1=\max\left\lbrace \norm{\beta_1}, \norm{\beta_1+\alpha} \right\rbrace $. 
		\end{excont}

		\begin{excont}
			The threshold model \eqref{tnar} does not satisfy the contraction conditions \eqref{contraction}-\eqref{contraction_2}. For the case of count data and $N$ fixed, we  develop a different proof   to show that $\{ Y_t \}$ is stationary and ergodic provided that  it has a positive conditional probability mass function and $\vertiii{G}_1<1$, where $G=(\beta_1+\alpha_1)W+(\beta_2+\alpha_2)I$. Analogous result holds also for continuous data; see Supp. Mat.~\ref{threshold models}.
		\end{excont}
		
		\subsection{Stability conditions for increasing network size} \label{stability conditions div N}
		
		In this section, following the works by \cite{zhu2017} and \cite{armillotta_fokianos_2021}, we  investigate the stability conditions of the process $\left\lbrace Y_t \in E^N \right\rbrace$, with $E=\R$ or $E=\N$, respectively, when the network size diverges ($N\to\infty$).  We use  a working definition of stationarity for increasing dimensional processes following \cite[Def.~1]{zhu2017}; see Supp. Mat.~\ref{Supp:Stationarity}. 
		
		\begin{theorem} 
			\label{Thm. Ergodicity of nonlinear model div N}
			Consider model \eqref{nonlinear_pnar} and $N\to\infty$. Define $G=\mu_1 W + \mu_2 I$, where $\mu_1$, $\mu_2\geq 0$ are constants such that $\mu_1+\mu_2<1$ and the contraction condition \eqref{contraction} holds, with $\max_{i\geq 1}f_i(0,0)<\infty$.
			%    	\begin{equation}
				%    		\norm{f(Y_{t-1})-f(Y^*_{t-1})}_{vec}\preceq G\norm{Y_{t-1}-Y^*_{t-1}}_{vec}\,. \nonumber
				%    		%\label{contraction}
				%    	\end{equation}
			Then, there exists a unique strictly stationary solution $\{ Y_t\in\N^N,~ t \in \mathbb{Z} \}$ to the nonlinear PNAR model, with 
			$\max_{i\geq 1}\E\norm{Y_{i,t}}^a\leq C_a<\infty$, for any $a\geq1$.
		\end{theorem}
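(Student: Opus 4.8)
The plan is to run the iterated--random--function argument behind Theorem~\ref{Thm. Ergodicity of nonlinear model} while tracking every bound so that it does not depend on $N$. The one structural fact that makes this possible is that $\vertiii{W}_\infty=1$ for every network size, whence $\vertiii{G}_\infty=\mu_1\vertiii{W}_\infty+\mu_2=\mu_1+\mu_2<1$ uniformly in $N$; this is also why the hypothesis is stated with $\mu_1+\mu_2<1$ rather than merely $\rho(G)<1$, since $\rho(G)$ need not stay bounded away from $1$ as $N\to\infty$.

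First I would establish a conditional $L^1$ contraction for one step of $Y_t=N_t(f(Y_{t-1}))$. Coupling two trajectories through the same copula--Poisson process $N_t$ and using that each marginal $N_{i,t}(\cdot)$ counts the points of a single rate--one Poisson process, which is monotone with Poisson increments, one gets $\E\big[|N_{i,t}(\lambda_i)-N_{i,t}(\lambda^*_i)|\mid\Fb_{t-1}\big]=|\lambda_i-\lambda^*_i|$ coordinatewise, so that $\E\big[\norm{Y_t-Y^*_t}_{vec}\mid\Fb_{t-1}\big]=\norm{f(Y_{t-1})-f(Y^*_{t-1})}_{vec}\preceq G\norm{Y_{t-1}-Y^*_{t-1}}_{vec}$ by \eqref{contraction}. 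Iterating and using the nonnegativity of $G$ gives $\E\norm{Y_t-Y^*_t}_{vec}\preceq G^m\,\E\norm{Y_{t-m}-Y^*_{t-m}}_{vec}$, hence $\max_{i\geq1}\E|Y_{i,t}-Y^*_{i,t}|\leq(\mu_1+\mu_2)^m\max_{i\geq1}\E|Y_{i,t-m}-Y^*_{i,t-m}|$. Applied to the finite--horizon processes $Y^{(m)}_t$ started at time $t-m$ from $0$ and driven by a common innovation sequence $\{N_s\}$, the discrepancy at time $t-m$ is $\max_{i\geq1}\E|Y^{(m+1)}_{i,t-m}|=\max_{i\geq1}f_i(0,0)<\infty$, so $\max_{i\geq1}\E|Y^{(m+1)}_{i,t}-Y^{(m)}_{i,t}|\leq(\mu_1+\mu_2)^m\max_{i\geq1}f_i(0,0)$ is summable in $m$. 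Thus, coordinatewise, $\{Y^{(m)}_{i,t}\}_m$ is Cauchy in $L^1$; the limit $\tilde Y_t$ is $\Fb_t$--measurable, solves $\tilde Y_t=N_t(f(\tilde Y_{t-1}))$ by passing to the $L^1$ limit with the marginal Poisson identity once more, is strictly stationary because it is a fixed measurable function of the translated IID sequence $\{N_s:s\leq t\}$, and is $\N^N$--valued along an a.s.\ convergent subsequence. Uniqueness is immediate, since any two stationary solutions with finite first moments differ by at most $(\mu_1+\mu_2)^m\to0$ in $\max_{i\geq1}\E|\cdot|$.

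The uniform moment bound $\max_{i\geq1}\E\norm{Y_{i,t}}^a\leq C_a<\infty$ I would prove by induction over the dyadic orders $a\in\{1,2,4,\dots\}$, the general $a\geq1$ following by Lyapunov's inequality. For $a=1$, \eqref{contraction} with $Y^*=0$ gives $\lambda_{i,t}\leq f_i(0,0)+\mu_1X_{i,t-1}+\mu_2Y_{i,t-1}$, and since the rows of $W$ sum to one, $\E X_{i,t-1}\leq\max_{j\geq1}\E Y_{j,t-1}$, so $\max_{i\geq1}\E Y_{i,t}=\max_{i\geq1}\E\lambda_{i,t}\leq\max_{i\geq1}f_i(0,0)+(\mu_1+\mu_2)\max_{i\geq1}\E Y_{i,t-1}$, and stationarity yields $\max_{i\geq1}\E Y_{i,t}\leq\max_{i\geq1}f_i(0,0)/(1-\mu_1-\mu_2)$. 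For the inductive step, decompose $Y_{i,t}=\lambda_{i,t}+(Y_{i,t}-\lambda_{i,t})$ into its conditional mean and a conditional martingale difference; by Minkowski, $(\E\norm{Y_{i,t}}^a)^{1/a}\leq(\E\norm{\lambda_{i,t}}^a)^{1/a}+(\E\norm{Y_{i,t}-\lambda_{i,t}}^a)^{1/a}$, where the conditional centred Poisson moments bound the last term by a constant times a power of $\E\norm{\lambda_{i,t}}^{a/2}$ plus a constant, finite and $N$--free by the inductive hypothesis at order $a/2$, while $(\E\norm{\lambda_{i,t}}^a)^{1/a}\leq\max_{i\geq1}f_i(0,0)+(\mu_1+\mu_2)\max_{j\geq1}(\E\norm{Y_{j,t-1}}^a)^{1/a}$ exactly as for $a=1$. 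This produces a recursion $M_a(t)\leq(\mu_1+\mu_2)M_a(t-1)+D_a$ with $D_a<\infty$ not depending on $N$, which closes by stationarity because $\mu_1+\mu_2<1$.

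I expect the main obstacle to be precisely this moment recursion for $a>1$: the crude bound $\E[\,\norm{Y_{i,t}}^a\mid\Fb_{t-1}\,]\leq C_a(1+\lambda^a_{i,t})$ inflates the contraction coefficient to $C_a^{1/a}(\mu_1+\mu_2)$, and the recursion need not close, so one must instead exploit that the centred conditional Poisson moments of order $a$ grow only like $\lambda^{a/2}_{i,t}$ --- which both restores the clean coefficient $\mu_1+\mu_2$ and feeds the induction on the order. The secondary point requiring care is the bookkeeping that the additive constants $\max_{i\geq1}f_i(0,0)$ and $D_a$ never acquire an $N$--dependence, which is exactly where row--stochasticity of $W$ and the hypothesis $\max_{i\geq1}f_i(0,0)<\infty$ are used.
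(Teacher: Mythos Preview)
Your argument is essentially correct, but the route is genuinely different from the paper's. The paper does not rerun the iterated--random--function construction; instead it writes, via the contraction \eqref{contraction}, the componentwise domination $Y_t\preceq\sum_{j\geq0}G^j(c1+\xi_{t-j})$, and then invokes the \emph{increasing--dimension} notion of stationarity of \citet[Def.~1]{zhu2017}: for every $\ell^1$ weight $\omega$, the scalar $\omega_N'Y_t$ converges (by the MCT and $G^j1=(\mu_1+\mu_2)^j1$) and the limit process $Y_t^\omega$ is stationary. Uniqueness is again by backward substitution. For the uniform moment bound the paper simply observes $\lambda_{i,t}\leq c+\mu_1X_{i,t-1}+\mu_2Y_{i,t-1}$ and cites \citet[Prop.~2]{armillotta_fokianos_2021}, which carries out exactly the recursion you sketch. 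So what you do differently is (i) you build the stationary solution by an $L^1$--Cauchy coupling, which is more hands--on and self--contained, and (ii) you make the moment induction explicit, exploiting that conditional centred Poisson moments of order $a$ are $\mathcal{O}(\lambda^{a/2})$ so the contraction coefficient stays $\mu_1+\mu_2$. The paper's route is shorter but leans on outside references; yours buys a direct proof with no black boxes.

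Two small points to tighten. First, the theorem is stated under the paper's $N\to\infty$ stationarity convention (linear $\ell^1$ functionals), not ordinary finite--$N$ stationarity; your construction does imply it, since the uniform $L^1$ bound and $|\omega|_1<\infty$ give convergence of $\omega_N'\tilde Y_t$, but you should say so explicitly. Second, ``closes by stationarity'' hides a circularity: to conclude $M_a\leq(\mu_1+\mu_2)M_a+D_a$ you need $M_a<\infty$ a priori. The clean fix is already in your setup --- run the recursion on the finite--horizon approximants $Y^{(m)}_t$ (which have all moments finite, starting from $0$), get $\sup_m M_a^{(m)}<\infty$ uniformly in $N$, and pass to the limit by Fatou.
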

		
		\begin{theorem} 
			\label{Thm. Ergodicity of nonlinear continuous model div N}
			Consider model \eqref{nonlinear_nar} and $N\to\infty$. Define $G=\norm{\mu_1} W + \norm{\mu_2} I$, where $\mu_1$, $\mu_2$ are real constants such that $\norm{\mu_1}+\norm{\mu_2}<1$ and the contraction condition \eqref{contraction} holds, with $\max_{i\geq 1}\norm{f_i(0,0)}<\infty$. Then, there exists a unique strictly stationary solution $\{ Y_t\in\R^N,~ t \in \mathbb{Z} \}$ to the nonlinear NAR model. In addition, if  $\max_{i\geq 1}\E\norm{\xi_{i,t}}^a\leq C_{\xi,a}<\infty$ for some $a\geq 8$, then $\max_{i\geq 1}\E\norm{Y_{i,t}}^a\leq C_a<\infty$.
		\end{theorem}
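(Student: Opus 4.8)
The plan is to adapt the fixed-$N$ proof of Theorem~\ref{Thm. Ergodicity of nonlinear continuous model}, keeping track of every constant so that it stays bounded as $N\to\infty$; the scheme runs parallel to the proof of Theorem~\ref{Thm. Ergodicity of nonlinear model div N}, with the Poisson moment bounds replaced by the moment hypothesis on $\{\xi_t\}$. Write the dynamics as the stochastic recursion $Y_t=F(Y_{t-1},\xi_t)$ with $F(y,\xi)=f(y,W,\theta)+\xi$, and for $m\geq0$ let $Y^{(m)}_t$ be the value at time $t$ produced by iterating $F$ forward from the initial value $0$ imposed at time $-m$, driven by the actual $IID$ noise sequence $\{\xi_s:s\in\Z\}$. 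Because the same noise enters both, it cancels in differences, so \eqref{contraction} gives, componentwise, $\norm{Y^{(m+1)}_t-Y^{(m)}_t}_{vec}\preceq G^{\,t+m}\norm{Y^{(m+1)}_{-m}}_{vec}$; since $W$ is row stochastic, $G\,\mathbf{1}=(\norm{\mu_1}+\norm{\mu_2})\mathbf{1}$, so $e_i^\prime G^{\,t+m}\mathbf{1}=(\norm{\mu_1}+\norm{\mu_2})^{t+m}$ decays geometrically, uniformly in $i$ and $N$. As $\E\norm{Y^{(m+1)}_{i,-m}}\leq\max_{j\geq1}\norm{f_j(0,0)}+\sigma<\infty$, the telescoping series $\sum_{m}\norm{Y^{(m+1)}_{i,t}-Y^{(m)}_{i,t}}$ has summable expectation and hence converges almost surely; its limit $Y_{i,t}$ is $\sigma(\xi_s:s\leq t)$-measurable, and the limiting vector $Y_t$ solves the recursion. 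Being a fixed measurable function of the $IID$ sequence $\{\xi_s\}$, the process $\{Y_t\}$ is strictly stationary in the working sense of \citet[Def.~1]{zhu2017}; uniqueness follows since any two stationary solutions $Y_t,\tilde Y_t$ obey $\norm{Y_t-\tilde Y_t}_{vec}\preceq G^{k}\norm{Y_{t-k}-\tilde Y_{t-k}}_{vec}$ for all $k$, whose $i$-th component has expectation $O\bigl((\norm{\mu_1}+\norm{\mu_2})^{k}\bigr)\to0$.

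For the moment claim, note that \eqref{contraction} evaluated at $Y^\ast_{t-1}=0$ together with $\max_{i\geq1}\norm{f_i(0,0)}<\infty$ yields $\norm{f_i(X_{i,t-1},Y_{i,t-1})}\leq C_0+\norm{\mu_1}\norm{X_{i,t-1}}+\norm{\mu_2}\norm{Y_{i,t-1}}$ with $C_0$ free of $N$; combining this with $X_{i,t-1}=\sum_j w_{ij}Y_{j,t-1}$, $w_{ij}\geq0$, $\sum_j w_{ij}=1$, and Minkowski's inequality gives the vector recursion
\begin{equation}
\lnorm{Y_t}_{a,vec}\preceq G\,\lnorm{Y_{t-1}}_{a,vec}+\mathbf{b},\qquad \mathbf{b}\preceq C\,\mathbf{1},
\label{eq:plan-moment-rec}
\end{equation}
where $C$ depends only on $C_0$ and $C_{\xi,a}^{1/a}$. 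Applying \eqref{eq:plan-moment-rec} to the iterates $Y^{(m)}_t$ — which have finite $a$-th moments since $f$ has at most linear growth and $\xi_t\in L^a$ — iterating $t+m$ steps down to the zero initial value and using $G\,\mathbf{1}=(\norm{\mu_1}+\norm{\mu_2})\mathbf{1}$ once more yields $\lnorm{Y^{(m)}_t}_{a,vec}\preceq\sum_{k\geq0}G^k\mathbf{b}\preceq\dfrac{C}{1-\norm{\mu_1}-\norm{\mu_2}}\,\mathbf{1}$, a bound uniform in $m$ and $N$. Fatou's lemma along $Y^{(m)}_{i,t}\to Y_{i,t}$ then gives $\max_{i\geq1}\E\norm{Y_{i,t}}^a\leq C_a<\infty$. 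The first ($L^1$) part of the statement is the special case $a=1$, which holds unconditionally because $\xi_{i,t}$ has finite variance $\sigma^2$.

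The main obstacle is the uniformity in $N$: one cannot appeal to classical irreducibility/ergodicity for a fixed-dimensional Markov chain and must instead carry out the entire argument through norm inequalities whose constants do not depend on $N$. The two structural facts that make this possible are (i) $W$ is row stochastic, so $W\mathbf{1}=\mathbf{1}$, $\vertiii{W}_\infty=1$, hence $\rho(G)\leq\norm{\mu_1}+\norm{\mu_2}<1$ for every $N$, and (ii) the hypotheses are posed uniformly in $N$ — the same $\mu_1,\mu_2$ in \eqref{contraction}, $\max_{i\geq1}\norm{f_i(0,0)}<\infty$, and $\max_{i\geq1}\E\norm{\xi_{i,t}}^a\leq C_{\xi,a}$. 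One must also take care to secure the first-moment finiteness used in the existence step (which follows from $\mathrm{Var}(\xi_{i,t})=\sigma^2$) before invoking the condition $a\geq8$, and to read ``stationary solution'' in the working sense of \citet[Def.~1]{zhu2017}. Apart from these points the proof is routine bookkeeping mirroring Theorems~\ref{Thm. Ergodicity of nonlinear continuous model} and~\ref{Thm. Ergodicity of nonlinear model div N}.
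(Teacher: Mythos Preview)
Your proposal is correct and follows essentially the same route as the paper: both arguments build the stationary solution by backward iteration of the contraction \eqref{contraction}, exploit the row-stochasticity of $W$ through $G\mathbf{1}=(\norm{\mu_1}+\norm{\mu_2})\mathbf{1}$ to get constants uniform in $N$, obtain uniqueness by iterating the contraction backward, and bound the $a$-th moments via a Minkowski-type recursion summed against the geometric series $\sum_k(\norm{\mu_1}+\norm{\mu_2})^k$. The only cosmetic difference is that you construct the finite-horizon iterates $Y^{(m)}_t$ explicitly and pass to the limit with Fatou, whereas the paper writes the backward series $\norm{Y_t}_{vec}\preceq\sum_{j\geq0}G^j(c\mathbf{1}+\norm{\xi_{t-j}}_{vec})$ directly and then verifies stationarity via the weight-vector formulation $\omega_N^\prime Y_t$ of \citet[Def.~1]{zhu2017}; these are two presentations of the same construction.
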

		
		Theorems~\ref{Thm. Ergodicity of nonlinear model div N}-\ref{Thm. Ergodicity of nonlinear continuous model div N} (whose proof is given in  Appendices \ref{Proof of Thm. Ergodicity of nonlinear model div N} and \ref{Proof of Thm. Ergodicity of nonlinear model div N cont})   extend the increasing network-type results of \cite[Thm.~1]{armillotta_fokianos_2021} and \cite[Thm.~2]{zhu2017} to nonlinear versions of the PNAR and NAR models, respectively. For models \eqref{nar_1}-\eqref{tnar}, $\max_{i\geq 1}\norm{f_i(0,0)}=\beta_0$. Moreover,  the contraction condition \eqref{contraction_2}, with $\mu_1+\mu_2<1$ ($\norm{\mu_1}+\norm{\mu_2}<1$), fulfils the conditions of   Theorem~\ref{Thm. Ergodicity of nonlinear model div N} (Theorem~\ref{Thm. Ergodicity of nonlinear continuous model div N}), i.e.  we obtain identical  sufficient conditions which guarantee  stationarity with fixed  and diverging $N$. We emphasize again   that   \eqref{tnar} does not satisfy  \eqref{contraction}-\eqref{contraction_2}. This  fact makes difficult to  show  stationarity,  when $N$ increases. More importantly, all  stability results do not depend on the network structure, as specified by the matrix $W$, and on the data generating process describing the joint dependence.
		
		\section{Quasi maximum likelihood inference} \label{SEC: inference}
		
		Consider model \eqref{general model}. Estimation for the unknown parameter vector $\theta$ is developed by means of QMLE. %Define $\left\lbrace Y_t,\,t=1,\dots,T\right\rbrace $ an available sample from the count network time series and
		Define the quasi-log-likelihood function for $\theta$ by
		\begin{equation}
			l_{NT}(\theta)=\sum_{t=1}^{T}\sum_{i=1}^{N} l_{i,t}(\theta)\,, 
			\label{log-lik}
		\end{equation}
		where $l_{i,t}(\theta)$ is the log-likelihood contribution of a single network node  whose form depends on the type of  data (discrete or continuous).
		Observe that \eqref{log-lik} is not necessarily the \emph{true} log-likelihood. The  QMLE is denoted by  $\hat{\theta}$ and maximizes  \eqref{log-lik}. It is obtained by  solving the system of equations $S_{NT}(\theta)=0$,  where
		\begin{equation}
			S_{NT}(\theta)=\frac{\partial l_{NT}(\theta)}{\partial\theta}=\sum_{t=1}^{T}s_{Nt}(\theta)
			\label{score}
		\end{equation}
		is the quasi-score function. Moreover define the following matrices
		\begin{equation}
			H_{NT}(\theta)=-\frac{\partial^2 l_{NT}(\theta)}{\partial\theta\partial\theta^\prime},\quad
			B_{NT}(\theta)= \sum_{t=1}^{T}\E\left(s_{Nt}(\theta)s_{Nt}^\prime(\theta) \mid %\middle|
			\Fb_{t-1}\right) \,,
			%	 B_{NT}(\theta)=\E\left( \sum_{t=1}^{T}s_{Nt}(\theta)s_{Nt}(\theta)^\prime \bigg| \Fb_{t-1}\right)\,,
			\label{hessians}
		\end{equation}
		as the sample Hessian matrix and the conditional information matrix, respectively. Henceforth we drop the dependence on $\theta$ when a quantity is evaluated at the true value $\theta_0$.

		\subsection{Inference for  PNAR models}
		\label{ex pnar}
		
		Consider model \eqref{nonlinear_pnar}. In this case the QMLE estimator, $\hat{\theta}$, maximizes
		\begin{equation}
			l_{NT}(\theta)=\sum_{t=1}^{T}\sum_{i=1}^{N}\Big(  Y_{i,t}\log\lambda_{i,t}(\theta)-\lambda_{i,t}(\theta)\Big) \,,
			\label{pois-log-lik}
		\end{equation}
		which is the log-likelihood obtained if all time series were contemporaneously independent. This simplifies computations allowing to establish consistency and asymptotic normality of the resulting estimator. It is worth noting that the joint copula structure, say $C(\cdot, \rho)$, with set of parameters $\rho$, do not enter into the maximization problem of the working log-likelihood \eqref{pois-log-lik}. However, this does not imply that inference does not take into account dependence among observations. 
		The corresponding score function is given by
		\begin{equation}
			S_{NT}(\theta)=\sum_{t=1}^{T}\sum_{i=1}^{N}\left(  \frac{Y_{i,t}}{\lambda_{i,t}(\theta)}-1\right) \frac{\partial\lambda_{i,t}(\theta)}{\partial\theta}=\sum_{t=1}^{T}s_{Nt}(\theta)\,.
			\label{score_poisson}
		\end{equation}
		Define $\partial\lambda_t(\theta)/\partial\theta^\prime$ the $N\times m$ matrix of derivatives, $D_t(\theta)$ the $N\times N$ diagonal matrix with elements equal to $\lambda_{i,t}(\theta)$, for $i=1,\dots,N$ and $\xi_t(\theta)=Y_t-\lambda_t(\theta)$ is a Martingale Difference Sequence ($MDS$) at $\theta=\theta_{0}.$Then, the empirical Hessian and conditional information matrices are given, respectively, by
		\begin{align}
			H_{NT}(\theta)&=\sum_{t=1}^{T}\sum_{i=1}^{N}\frac{Y_{i,t}}{\lambda_{i,t}^2(\theta)}\frac{\partial\lambda_{i,t}(\theta)}{\partial\theta}\frac{\partial\lambda_{i,t}(\theta)}{\partial\theta^\prime}-\sum_{t=1}^{T}\sum_{i=1}^{N}\left( \frac{Y_{i,t}}{\lambda_{i,t}(\theta)}-1\right)\frac{\partial^2\lambda_{i,t}(\theta)}{\partial\theta\partial\theta^\prime} \label{H_T}\,, \\
			B_{NT}(\theta)&=\sum_{t=1}^{T}\frac{\partial\lambda^\prime_t(\theta)}{\partial\theta}D^{-1}_t(\theta)\Sigma_t(\theta) D^{-1}_t(\theta)\frac{\partial\lambda_t(\theta)}{\partial\theta^\prime}\,,
			\label{B_T}
		\end{align}
		where $\Sigma_t(\theta)=\E\left( \xi_t(\theta)\xi_t^\prime(\theta)\left| \right. \mathcal{F}_{t-1} \right) $ is the conditional covariance matrix evaluated at $\theta$. We impose the following standard assumptions:
		
		\begin{enumerate}[label=\textbf{\Alph*}]
			%	\begin{assumptionalph} 
				\item	The parameter space $\Theta$ is compact and the true value $\theta_0$ belongs to its interior.
				\label{interior}
				%	\end{assumptionalph}
			%	\begin{assumptionalph}
				\item For $i=1,\dots,N$	the function $f_i(\cdot)$ is three times differentiable with respect to $\theta$ and satisfies, for $x_i,x^*_i\in\R_+$ and $y_i,y_i^*\in\N$
				\begin{align}
					\norm{\frac{\partial f_i(x_i, y_i, \theta)}{\partial\theta_g}-\frac{\partial f_i(x^*_i, y^*_i, \theta)}{\partial\theta_g}}& \leq c_{1g}\norm{x_i-x^*_i}+c_{2g}\norm{y_i-y^*_i}\,,\quad g=1,\dots,m , \nonumber\\[0.3cm]
					\norm{\frac{\partial^2 f_i(x_i, y_i, \theta)}{\partial\theta_g\partial\theta_l}-\frac{\partial^2 f_i(x^*_i, y^*_i, \theta)}{\partial\theta_g\partial\theta_l}}& \leq c_{1gl}\norm{x_i-x^*_i}+c_{2gl}\norm{y_i-y^*_i}\,,\quad g,l=1,\dots,m , \nonumber\\[0.3cm]
					\norm{\frac{\partial^3 f_i(x_i, y_i, \theta)}{\partial\theta_g\partial\theta_l\partial\theta_s}-\frac{\partial^3 f_i(x^*_i, y^*_i, \theta)}{\partial\theta_g\partial\theta_l\partial\theta_s}}& \leq c_{1gls}\norm{x_i-x^*_i}+c_{2gls}\norm{y_i-y^*_i}\,,\,\, g,l,s=1,\dots,m .
					\nonumber
				\end{align}
				Furthermore, $\forall g,l,s$, $\max_{i\geq 1}\norm{\partial f_i(0, 0, \theta)/\partial\theta_g}<\infty$, $\max_{i\geq 1}\norm{\partial^2 f_i(0, 0, \theta)/\partial\theta_g\partial\theta_l}<\infty$, \break $\max_{i\geq 1}\norm{\partial^3 f_i(0, 0, \theta)/\partial\theta_g\partial\theta_l\partial\theta_s}<\infty$,  $\sum_g(c_{1g}+c_{2g})<\infty$, $\sum_{g,l}(c_{1gl}+c_{2gl})<\infty$, $\sum_{g,l,s}(c_{1gls}+c_{2gls})<\infty$. In addition, the components of $\partial f_i / \partial \theta$ are linearly independent.
				\label{smoothness}
				%	\end{assumptionalph}
			%	\begin{assumptionalph} 
				\item	For $i=1,\dots,N$, $f_i(x_i,y_i,\theta)\geq C > 0$, where $C$ is a generic constant.
				\label{bounded away from 0}
				%	\end{assumptionalph}
		\end{enumerate}
		
		Such regularity conditions have been employed in the literature to guarantee consistency and asymptotic normality of the QMLE in the context of nonlinear time series models; see \cite[Ch.~3]{tani2000}, among others.
		%		Assumptions~\ref{smoothness}-\ref{bounded away from 0} are also standard regularity conditions for proving asymptotic normality of the QMLE. These imply that $f(\cdot)$ is sufficiently smooth and bounded away from 0, so that higher order derivatives of the quasi log-likelihood function exist and are finite. The last part of Assumption~\ref{smoothness} is crucial for the invertibility of the Hessian matrix. These assumptions are commonly employed  in the estimation of nonlinear time series models. 
		We now give additional assumptions employed for developing inference when \break $\left\lbrace N, T_N \right\rbrace \to\infty$ and the necessary network properties.
		%	\subsection{Assumptions of Lemma \ref{Lem. Consistency and Asymptotic Normality nonlinaer PNAR}} % and Thm~\ref{Thm. Consistency and Asymptotic Normality nonlinaer NAR}}
		%\label{Assumptions 1-2}
		Define
		\begin{equation}
		H_N(\theta)=\E\Bigg[\frac{\partial\lambda^\prime _{t}(\theta)}{\partial\theta}D_t^{-1}(\theta)\frac{\partial\lambda_{t}(\theta)}{\partial\theta^\prime }\Bigg]\,,
		\label{H}
		\end{equation}
		\begin{equation}
		B_N(\theta)=\E\Bigg[\frac{\partial\lambda^\prime _{t}(\theta)}{\partial\theta}D_t^{-1}(\theta)\Sigma_t(\theta) D_t^{-1}(\theta)\frac{\partial\lambda_{t}(\theta)}{\partial\theta^\prime }\Bigg]\,,
		\label{B}
		\end{equation}
		as, respectively, (minus) the expected Hessian matrix and the information matrix. Consider the following assumptions.
		%[label=\textbf{\arabic*})]
		\begin{enumerate}[label=\textbf{H\arabic*}]
		\item  The process $\left\lbrace \xi_t,\,\mathcal{F}_{t}:\,N\in\N, t\in\Z\right\rbrace $
		is $\alpha$-mixing with mixing coefficients $\{ \alpha(J)\}$.
		\label{Ass alpha mixing}
		\item  Define the standardized random process $\dot{Y}_t=D_t^{-1/2}(Y_t-\lambda_t)$. There exists a  non negative, non increasing sequence $\left\lbrace \varphi_h \right\rbrace_{h=1,\dots,\infty}$ such that $\sum_{h=1}^{\infty} h\varphi_h = \Phi<\infty$ and, for $i<j<k<l$, a.s. \label{Ass weak dependence}
		%\begin{itemize}
		%	\item $\norm{\textrm{Cov}(X_{i,t}, X_{j,t}X_{k,t}X_{l,t}\left| \right. \Fb_{t-1} )}\leq \varphi_{j-i}$
		%	\item $\norm{\textrm{Cov}(X_{i,t}X_{j,t}X_{k,t}, X_{l,t}\left| \right. \Fb_{t-1} )}\leq \varphi_{l-k}$
		%	\item $\norm{\textrm{Cov}(X_{i,t}X_{j,t}, X_{k,t}X_{l,t}\left| \right. \Fb_{t-1} )}\leq \varphi_{k-j}$
		%\end{itemize}
		\begin{equation}
			%\norm{\textrm{Cov}(X_{i,t},X_{j,t},\left| \right. \Fb_{t-1} )}\leq \varphi^*_{k-j}\,,\quad \\
			\norm{\textrm{Cov}(\dot{Y}_{i,t}, \dot{Y}_{j,t} \dot{Y}_{k,t}\dot{Y}_{l,t}\left| \right. \Fb_{t-1} )}\leq \varphi_{j-i}\,, \quad \norm{\textrm{Cov}(\dot{Y}_{i,t} \dot{Y}_{j,t} \dot{Y}_{k,t}, \dot{Y}_{l,t}\left| \right. \Fb_{t-1} )}\leq \varphi_{l-k} \,,
			\nonumber %\label{weak dependence fourth moments}
		\end{equation}
		\begin{equation}
			\norm{\textrm{Cov}(\dot{Y}_{i,t} \dot{Y}_{j,t}, \dot{Y}_{k,t} \dot{Y}_{l,t}\left| \right. \Fb_{t-1} )}\leq \varphi_{k-j}\,, \quad \norm{\textrm{Cov}(\dot{Y}_{i,t}, \dot{Y}_{j,t}\left| \right. \Fb_{t-1} )}\leq \varphi_{j-i} \,. \nonumber
		\end{equation}
		\item  For model \eqref{nonlinear_pnar} with network $W$, the following limits exist, at $\theta=\theta_0$: \label{Ass limits existence}
		\begin{enumerate} [label*=\textbf{.\arabic*}]
			\item $\lim_{N\to\infty}N^{-1}H_{N}=H$, with $H$ a $m\times m$ positive definite matrix.%, where $H_{N}$ is defined by \eqref{H}.
			\item $\lim_{N\to\infty}N^{-1}B_{N}=B$. %, with $B$ a $m\times m$ positive definite matrix.%, where $B_{N}$ is defined by \eqref{B}.
			\item The third derivative of the quasi-log-likelihood \eqref{pois-log-lik} is bounded by functions $m_{i,t}$ which satisfy
			%\item $\exists$ functions $\left\lbrace m_{i,t}\right\rbrace_{i,t\in\N} $ :
			$\lim_{N\to\infty}N^{-1}\sum_{i=1}^{N}\E(m_{i,t})=M\,,$ where $M$ is a finite constant.
		\end{enumerate}
		\end{enumerate}
		%\label{ass1}
		%\begin{assumption} Let
		%\begin{enumerate}
		%%\item $\E\left|\sum_{i=1}^{\infty}Y_{i,t}\right|^r < \infty\,,\quad\E\left|\sum_{i=1}^{\infty}\lambda_{i,t}\right|^r < \infty\,,\quad$ for any $r > 1$.
		%\item %$\E\norm{\xi_t}_2^8 < \infty\,,\quad  
		%$\frac{1}{N}\sum_{i,j=1}^{N}\lnorm{\xi_{i,t}\xi_{j,t}}_a < \infty\,,\quad $ for some $a\geq4$.
		%\item The process $\left\lbrace \xi_t=Y_t-\lambda_t,\,\mathcal{F}^{N}_{t}:\,N\in\N, t\in\Z\right\rbrace $
		%is $\alpha$-mixing, $\mathcal{F}^{N}_{t}=\sigma\left(\xi_{is}:\,1\leq i\leq N, s\leq t\right)$.
		%%\item $\lambda_{\max}\E\left(\xi_t \xi_t^\prime \right)^4<\lambda<\infty\,.$
		%\end{enumerate}
		%\label{moments}
		%\end{assumption}
		
		Assumption \ref{Ass alpha mixing} is useful for studying  processes with dependent errors \cite{douk1994}. %the $\alpha$-mixing condition  is weaker than the $IID$ assumption usually imposed on the error process \cite{zhu2017, zhu2019}. %Condition \ref{Ass alpha mixing} implies that the errors $\xi_t=Y_t-\lambda_t$ are also $\alpha$-mixing.
		When $N$ is fixed, a combination of Theorem~1-2 in \cite{doukhan_2012} and Remark~2.1 in \cite{Doukhanetal(2012)} shows  that the process $\left\lbrace \xi_t: t \in \mathbb{Z}\right\rbrace$, is $\alpha$-mixing, with exponentially decaying coefficients, provided that  $\vertiii{G}_1<1$. Analogous conclusion follow by  \cite[Prop.~3.1-3.4]{fok2020}. Condition \ref{Ass weak dependence} represents a contemporaneous weak dependence assumption.  Indeed, 
		even in the simple case of the independence model, i.e. $\lambda_{i,t}=\beta_0$, for all $i=1,\dots,N$, the reader can easily verify that, without any further constraints, $N^{-1}B_N=\mathcal{O}(N)$, so the limiting variance of the QMLE diverges. 
		Note that \ref{Ass weak dependence} does not guarantee  finiteness of the Hessian and information matrices, as $N\to\infty$. Such requirement is imposed by   Assumption \ref{Ass limits existence}. Obviously, such properties  depend on the structure of  $W$ and on the functional form of  $f(\cdot)$ in \eqref{nonlinear_pnar}; without the knowledge of these components it cannot be simplified any further.
		We present a detailed example involving the nonlinear PNAR model \eqref{nonlinear} in Section~\ref{SEC: detailed example} to offer further insight about \ref{Ass limits existence}.  Proofs for all the following results are given in  Supp. Mat. \ref{Supp:Sec:Proofs for Section 3}.

		%\begin{enumerate}%[label=B\arabic*]
		%	\item[B3$^\prime$] Set $\boldsymbol{\Lambda}_t=\mathbf{\Sigma}_t^{1/2}\mathbf{D}^{-1}_t$, $\boldsymbol{\Lambda}=\E(\boldsymbol{\Lambda}^\prime _t\boldsymbol{\Lambda}_t)$, $\bar{\boldsymbol{\Gamma}}(0)=\E[\boldsymbol{\Lambda}_t(Y_{t-1}-\mu)(Y_{t-1}-\mu)^\prime \boldsymbol{\Lambda}^\prime _t]$ and $\boldsymbol{\Delta}(0)=\E[\boldsymbol{\Lambda}_tW(Y_{t-1}-\mu)(Y_{t-1}-\mu)^\prime W^\prime \boldsymbol{\Lambda}^\prime _t]$. Assume that the following limits exist:\\ $f_1=\lim_{N\to\infty}N^{-1}\left( \mathbf{1}_N^\prime \boldsymbol{\Lambda} \mathbf{1}_N\right)$, $f_2=\lim_{N\to\infty}N^{-1}\textrm{tr}\left[ \bar{\boldsymbol{\Gamma}}(0)\right] $, $f_3=\lim_{N\to\infty}N^{-1}\textrm{tr}\left[ W\bar{\boldsymbol{\Gamma}}(0)\right] $,\\ $f_4=\lim_{N\to\infty}N^{-1}\textrm{tr}\left[ \boldsymbol{\Delta}(0)\right] $ and, if $(j^*,l^*,k^*)\in\Omega_d$, $d_*=\lim_{N\to\infty}\Pi_{j^*,l^*,k^*}$.

		\begin{lemma}
		Consider model \eqref{nonlinear_pnar} with $S_{NT}$, $H_{NT}$ and $B_{NT}$ defined as in \eqref{score_poisson},\eqref{H_T} and \eqref{B_T}, respectively. Let $\theta\in\Theta\subset\R^m_{+}$. Suppose the conditions of Theorem~\ref{Thm. Ergodicity of nonlinear model div N}, Assumption~\ref{smoothness}-\ref{bounded away from 0} and \ref{Ass alpha mixing}-\ref{Ass limits existence} hold. Then, as $ \left\lbrace N, T_N \right\rbrace \to\infty$
		\begin{enumerate} 
			\item $(NT_N)^{-1}H_{NT_N}\xrightarrow{p}H\,,$
			\item $(NT_N)^{-1}B_{NT_N}\xrightarrow{p}B\,,$
			\item $(NT_N)^{-\frac{1}{2}}S_{NT_N}\xrightarrow{d}N(0,B)\,,$
			\item  $\displaystyle \max_{g,l,s}\sup_{\theta\in\mathcal{O}(\theta_0)}\left|\frac{1}{NT_N}\sum_{t=1}^{T_N}\sum_{i=1}^{N}\frac{\partial^3l_{i,t}(\theta)}{\partial\theta_g\partial\theta_l\partial\theta_s}\right|\leq M_{NT_N}\xrightarrow{p}M\,,$
		\end{enumerate}
		where $M_{NT_N}\coloneqq (NT_N)^{-1}\sum_{t=1}^{T_N}\sum_{i=1}^{N}m_{i,t}$ and $\mathcal{O}(\theta_0)=\left\lbrace \theta:|\theta-\theta_0|_2<\delta\right\rbrace$ is a neighbourhood  of $\theta_0$.  % and $M$ is a finite constant.
		\label{Lem. Consistency and Asymptotic Normality nonlinaer PNAR}
		\end{lemma}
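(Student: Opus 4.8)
The plan is to treat all four assertions by the same route, since each is a double-array limit as $\min\{N,T\}\to\infty$: split it into the iterated limit in which $T\to\infty$ first, for fixed $N$ (delivered by stationarity and ergodicity), and then $N\to\infty$ (delivered by Assumption~\ref{Ass limits existence}), and finally upgrade the iterated limit to a genuine joint one by bounding the relevant second moments \emph{uniformly in $N$}. Two preliminary observations feed every step. First, the hypotheses of Theorem~\ref{Thm. Ergodicity of nonlinear model div N} also force, for each fixed $N$, those of Theorem~\ref{Thm. Ergodicity of nonlinear model} (because $\mu_1+\mu_2<1$ implies $\rho(G)<1$), so for fixed $N$ the process $\{Y_t\}$ is strictly stationary and ergodic with all moments finite. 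Second, Theorem~\ref{Thm. Ergodicity of nonlinear model div N} gives $\max_{i\geq1}\E\norm{Y_{i,t}}^a\le C_a$ for every $a\ge1$, uniformly in $N$, and hence, via Assumptions~\ref{smoothness}--\ref{bounded away from 0} (Lipschitz growth of the derivatives of $f_i$, boundedness of their values at the origin, and $\lambda_{i,t}\ge C>0$), the quantities $\lambda_{i,t}^{-1}$ and the $\theta$-derivatives of $\lambda_{i,t}$ up to third order have $L^p$-norms bounded uniformly in $i$, $t$ and $N$, for every $p$.

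For part~(1) I would expand $(NT)^{-1}H_{NT}$ using \eqref{H_T} together with the identities $\E(Y_{i,t}/\lambda_{i,t}^2\mid\Fb_{t-1})=\lambda_{i,t}^{-1}$ and $\E(Y_{i,t}/\lambda_{i,t}-1\mid\Fb_{t-1})=0$, obtaining
\[
(NT)^{-1}H_{NT}=(NT)^{-1}\sum_{t=1}^{T}\sum_{i=1}^{N}\lambda_{i,t}^{-1}\frac{\partial\lambda_{i,t}}{\partial\theta}\frac{\partial\lambda_{i,t}}{\partial\theta^\prime}+R_{NT},
\]
where $R_{NT}$ gathers the remainders driven by $\xi_{i,t}=Y_{i,t}-\lambda_{i,t}$, which are martingale differences once summed over $i$; by orthogonality of martingale increments and the uniform moment bounds, $\E\norm{R_{NT}}^2=\mathcal{O}(1/T)\to0$. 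The leading summand $g_{Nt}:=\sum_i\lambda_{i,t}^{-1}(\partial\lambda_{i,t}/\partial\theta)(\partial\lambda_{i,t}/\partial\theta^\prime)$ is $\Fb_{t-1}$-measurable with $\E g_{Nt}=H_N$, cf.\ \eqref{H}; for fixed $N$ the ergodic theorem gives $(NT)^{-1}\sum_t g_{Nt}\to N^{-1}H_N$, and $N^{-1}H_N\to H$ by Assumption~\ref{Ass limits existence}. To combine the two I would bound $\E\norm{N^{-1}T^{-1}\sum_t(g_{Nt}-H_N)}^2$: the (crude) estimate $\norm{g_{Nt}}_p=\mathcal{O}(N)$ together with the $\alpha$-mixing in $t$ of Assumption~\ref{Ass alpha mixing} --- with exponentially decaying coefficients, as recalled after that assumption --- gives, via Davydov's covariance inequality, a bound of order $\mathcal{O}(T^{-1})$ uniformly in $N$, and Markov's inequality closes part~(1).

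Part~(2) is structurally simpler: $B_{NT}=\sum_t b_{Nt}$ with $b_{Nt}=\E(s_{Nt}s_{Nt}^\prime\mid\Fb_{t-1})$, $s_{Nt}$ as in \eqref{score_poisson}, which is stationary and ergodic in $t$ for fixed $N$, so $(NT)^{-1}B_{NT}\to N^{-1}\E(b_{Nt})=N^{-1}B_N$ as $T\to\infty$, and $N^{-1}B_N\to B$ by Assumption~\ref{Ass limits existence}. Here the contemporaneous weak-dependence Assumption~\ref{Ass weak dependence} is what keeps $N^{-1}B_N$ finite: writing $b_{Nt}=\sum_{i,j}(\lambda_{i,t}\lambda_{j,t})^{-1}[\Sigma_t]_{ij}(\partial\lambda_{i,t}/\partial\theta)(\partial\lambda_{j,t}/\partial\theta^\prime)$ and using $|[\Sigma_t]_{ij}|(\lambda_{i,t}\lambda_{j,t})^{-1/2}=|\textrm{Cov}(\dot{Y}_{i,t},\dot{Y}_{j,t}\mid\Fb_{t-1})|\le\varphi_{|i-j|}$ with $\sum_h h\varphi_h<\infty$ yields $\norm{b_{Nt}}_p=\mathcal{O}(N)$, after which the joint limit follows by the same Markov/mixing argument as in part~(1). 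Part~(4) is the same template once more: by Assumptions~\ref{smoothness}--\ref{bounded away from 0} the third derivatives $\partial^3l_{i,t}(\theta)/\partial\theta_g\partial\theta_l\partial\theta_s$ are dominated, uniformly over $\theta\in\mathcal{O}(\theta_0)$, by the functions $m_{i,t}$ of Assumption~\ref{Ass limits existence}, so the supremum in the statement is at most $M_{NT}=(NT)^{-1}\sum_{t,i}m_{i,t}$, and the ergodic theorem ($T\to\infty$), Assumption~\ref{Ass limits existence} ($N\to\infty$), and the uniform-in-$N$ variance bound give $M_{NT}\xrightarrow{p}M$.

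Part~(3) is the central limit theorem and will be the main obstacle. The structural fact to exploit is that, for fixed $N$, $\{s_{Nt}\}_{t=1}^{T}$ is a martingale difference sequence with respect to $\{\Fb_t\}$ (since $\E(Y_{i,t}/\lambda_{i,t}-1\mid\Fb_{t-1})=0$), so no mixing in $t$ is needed for the CLT itself; I would invoke a central limit theorem for martingale-difference triangular arrays and reduce to the scalar statements $c^\prime(NT)^{-1/2}S_{NT}$, $c\in\R^m$, by the Cram\'er--Wold device. Two conditions must be checked: (i) convergence of the normalized conditional variance, $(NT)^{-1}\sum_t\E(s_{Nt}s_{Nt}^\prime\mid\Fb_{t-1})=(NT)^{-1}B_{NT}\xrightarrow{p}B$, which is exactly part~(2); and (ii) a Lyapunov negligibility condition. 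For (ii) the crucial input is again Assumption~\ref{Ass weak dependence}, which together with the uniform moment bounds of Theorem~\ref{Thm. Ergodicity of nonlinear model div N} should give $\E\norm{s_{Nt}}^{2+\delta}=\mathcal{O}(N^{1+\delta/2})$ for some $\delta>0$ --- the $L^{2+\delta}$-norm of a sum of $N$ contemporaneously weakly dependent centered terms behaving like $N^{1/2}$ --- whence $(NT)^{-1-\delta/2}\sum_{t=1}^{T}\E\norm{s_{Nt}}^{2+\delta}=\mathcal{O}(T^{-\delta/2})\to0$. The two places where I expect the work to concentrate are establishing this $\mathcal{O}(N^{1+\delta/2})$ bound --- essentially a Rosenthal-type moment inequality tailored to the covariance structure of Assumption~\ref{Ass weak dependence} --- and making the uniform-in-$N$ controls throughout sharp enough that the iterated limits genuinely coalesce into a single limit along $\min\{N,T\}\to\infty$.
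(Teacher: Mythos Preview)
Your strategy of upgrading the iterated limit via a uniform-in-$N$ variance bound has a real gap in parts~(1), (2) and~(4). The Davydov step hinges on $\alpha$-mixing of $\{Y_t\}$ \emph{with exponentially decaying coefficients, uniformly in $N$}. Assumption~\ref{Ass alpha mixing} only asserts $\alpha$-mixing; the exponential rate quoted after it is stated for fixed $N$ and under the extra condition $\vertiii{G}_1<1$, which is \emph{not} part of the lemma's hypotheses (only $\mu_1+\mu_2<1$, i.e.\ $\vertiii{G}_\infty<1$, is). Worse, there is no reason for the mixing coefficients to be bounded in $N$: the discussion in Remark~\ref{Rem: no double} makes clear that the $\beta$-mixing coefficients satisfy $\beta(J)=\mathcal{O}(N)$, and the same dimension dependence infects $\alpha$. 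With $\alpha_N(j)=C(N)\rho^j$ and $C(N)$ growing, Davydov's bound on $\operatorname{Var}\!\bigl(N^{-1}T^{-1}\sum_t g_{Nt}\bigr)$ picks up a factor $C(N)^{1-2/p}$, and the resulting estimate is no longer $\mathcal{O}(T^{-1})$ uniformly in $N$; since the statement allows $N$ and $T$ to diverge at comparable rates, Markov's inequality does not close.

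The paper's route avoids this entirely by working with $L^p$-near-epoch dependence rather than direct mixing of $\{Y_t\}$. It constructs a $J$-step approximation $\hat{Y}^t_{t-J}$ from the contraction \eqref{contraction} and shows $\lnorm{h_{glt}-h^t_{gl,t-J}}_2\le c_{gl}\,d^{J-1}$ with $d=\mu_1+\mu_2<1$; the geometric rate comes from the model's Lipschitz structure and is manifestly uniform in $N$. That NED bound, together with the bare $\alpha$-mixing of Assumption~\ref{Ass alpha mixing}, makes the centered triangular array a uniformly integrable $L^1$-mixingale, and the LLN of \citet[Thm.~2]{and1988} delivers parts~(1), (2), (4) without ever needing a rate on the mixing coefficients. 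Your martingale-CLT plan for part~(3) is structurally the same as the paper's (which also invokes \citet[Cor.~3.1]{hall1980} and verifies the conditional variance via part~(2)); the paper handles the Lindeberg/Lyapunov check by using the four-way conditional-covariance bounds of Assumption~\ref{Ass weak dependence} directly on $\E(c^\prime s_{Nt})^4$ rather than through a Rosenthal inequality, which is what those higher-order clauses of \ref{Ass weak dependence} are there for.
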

		%The function ($m_{i,t}$) and the matrices $H$ and $B$ are defined as in Assumption \ref{ass1} which, for convenience, is discussed in Appendix \ref{Assumptions 1-2}. %The proof of Lemma~\ref{Lem. Consistency and Asymptotic Normality nonlinaer PNAR} is postponed to Appendix \ref{Proof Lem. Consistency and Asymptotic Normality nonlinaer PNAR}.
		\begin{theorem} \label{Thm. Consistency and Asymptotic Normality nonlinaer PNAR}
		For model \eqref{nonlinear_pnar}, suppose that Assumption~\ref{interior} and conditions of Lemma~\ref{Lem. Consistency and Asymptotic Normality nonlinaer PNAR} hold. Then, there exists a fixed open neighbourhood $\mathcal{O}(\theta_0) %\subseteq \mathcal{N}(\theta_0)
		=\left\lbrace \theta:|\theta-\theta_0|_2<\delta\right\rbrace$
		of $\theta_0$ such that  with probability tending to 1, as $ \left\lbrace N, T_N \right\rbrace \to\infty$, the equation $S_{NT_N}(\theta)=0$ has a unique solution, denoted by $\hat{\theta}$, such that $\hat{\theta}\xrightarrow{p}\theta_0$
		% which is consistent and asymptotically normal:
		and $\sqrt{NT_N}(\hat{\theta}-\theta_0)\xrightarrow{d}N(0,H^{-1}BH^{-1})$.
		\end{theorem}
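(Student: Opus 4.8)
The plan is to follow the classical Cramér-type argument for consistency and asymptotic normality of an M-estimator defined through an estimating equation, now adapted to the double-index asymptotics $\min\{N,T\}\to\infty$. All the analytic ingredients have been packaged into Lemma~\ref{Lem. Consistency and Asymptotic Normality nonlinaer PNAR}: a weak law for the normalized Hessian $(NT)^{-1}H_{NT}\xrightarrow{p}H$ with $H$ positive definite (by \ref{Ass limits existence}.1), a weak law for $(NT)^{-1}B_{NT}\xrightarrow{p}B$, a central limit theorem $(NT)^{-1/2}S_{NT}\xrightarrow{d}N(0,B)$, and a uniform-in-$\theta$ bound on the normalized third derivatives of the quasi-log-likelihood over a neighbourhood $\mathcal{O}(\theta_0)$, with limiting constant $M<\infty$. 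What remains is to stitch these together.

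First I would establish existence and consistency of a solution $\hat\theta$ to $S_{NT}(\theta)=0$ inside a fixed neighbourhood of $\theta_0$. Expand $S_{NT}(\theta)$ around $\theta_0$ using a second-order Taylor expansion (legitimate by Assumption~\ref{smoothness}, which gives three times differentiability): for $\theta$ in $\mathcal{O}(\theta_0)$,
\begin{equation}
	\frac{1}{NT}S_{NT}(\theta)=\frac{1}{NT}S_{NT}(\theta_0)-\frac{1}{NT}H_{NT}(\theta_0)(\theta-\theta_0)+R_{NT}(\theta)\,,
	\nonumber
\end{equation}
where the remainder $R_{NT}(\theta)$ is controlled quadratically in $|\theta-\theta_0|$ by the third-derivative bound $M_{NT}\xrightarrow{p}M$ from part~4 of the Lemma. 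Since $(NT)^{-1/2}S_{NT}(\theta_0)=O_p(1)$ (part~3), we have $(NT)^{-1}S_{NT}(\theta_0)=o_p(1)$; since $(NT)^{-1}H_{NT}(\theta_0)\xrightarrow{p}H$ with $H$ invertible, the linear term dominates on a sphere of radius $\eta/\sqrt{NT}$ for suitable $\eta$. A standard Brouwer fixed-point / boundary-sign argument (as in \citealp{tani2000}, Ch.~3, or the analogous step in \citealp{armillotta_fokianos_2021}) then yields, with probability tending to one, a root $\hat\theta$ of $S_{NT}(\theta)=0$ with $|\hat\theta-\theta_0|=O_p((NT)^{-1/2})$; in particular $\hat\theta\xrightarrow{p}\theta_0$ and it lies in the interior of $\Theta$ by Assumption~\ref{interior}. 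Uniqueness within the fixed neighbourhood follows because $(NT)^{-1}H_{NT}(\theta)$ is, uniformly on $\mathcal{O}(\theta_0)$, within $o_p(1)$ of a positive-definite matrix (again using the uniform third-derivative bound to pass from $\theta_0$ to nearby $\theta$), so $S_{NT}$ is strictly monotone there and cannot vanish twice.

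Then asymptotic normality follows by the usual rearrangement. Evaluating the Taylor expansion at $\hat\theta$, where $S_{NT}(\hat\theta)=0$,
\begin{equation}
	\sqrt{NT}(\hat\theta-\theta_0)=\Big(\tfrac{1}{NT}H_{NT}(\bar\theta)\Big)^{-1}\,\tfrac{1}{\sqrt{NT}}S_{NT}(\theta_0)\,,
	\nonumber
\end{equation}
for an intermediate value $\bar\theta$ on the segment between $\hat\theta$ and $\theta_0$ (or, equivalently, absorbing the Taylor remainder: since $|\hat\theta-\theta_0|=O_p((NT)^{-1/2})$ and the third-derivative term is $O_p(1)$, the remainder contributes $o_p(1)$). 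Consistency of $\hat\theta$ plus the uniform convergence of the normalized Hessian gives $(NT)^{-1}H_{NT}(\bar\theta)\xrightarrow{p}H$; combining with $(NT)^{-1/2}S_{NT}(\theta_0)\xrightarrow{d}N(0,B)$ and Slutsky's theorem yields $\sqrt{NT}(\hat\theta-\theta_0)\xrightarrow{d}N(0,H^{-1}BH^{-1})$, as claimed.

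The main obstacle is not any single algebraic step but making the two-index limit $\min\{N,T\}\to\infty$ rigorous throughout: the fixed-point/root-location argument, the uniform-neighbourhood control of the Hessian, and the negligibility of the Taylor remainder must all hold simultaneously as $N$ and $T$ grow, using only the normalizations supplied by Assumption~\ref{Ass limits existence} and the summability/weak-dependence conditions \ref{Ass weak dependence}. In effect, the delicate part has been pushed into Lemma~\ref{Lem. Consistency and Asymptotic Normality nonlinaer PNAR}; granting that Lemma, the present theorem is the routine-but-careful Cramér wrap-up, and the only point requiring genuine attention is ensuring the neighbourhood $\mathcal{O}(\theta_0)$ can be chosen \emph{fixed} (independent of $N,T$) — which is exactly why the uniform third-derivative bound with a finite limit $M$ is invoked rather than a merely pointwise one.
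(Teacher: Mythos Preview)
Your proposal is correct and is essentially the same approach as the paper: the paper's proof consists of the single sentence that the theorem follows ``by an application of \citet[Thm.~3.2.23]{tani2000} and Lemma~\ref{Lem. Consistency and Asymptotic Normality nonlinaer PNAR}'', and what you have written is precisely an expanded sketch of that Cram\'er-type theorem in Taniguchi and Kakizawa, fed with the four conclusions of the Lemma. The only cosmetic difference is that you spell out the Brouwer/boundary-sign step and the Taylor-remainder bookkeeping explicitly, whereas the paper simply invokes the packaged result.
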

		Thm~\ref{Thm. Consistency and Asymptotic Normality nonlinaer PNAR} follows by Lemma~\ref{Lem. Consistency and Asymptotic Normality nonlinaer PNAR} as proved by \cite[Sec.~S-3.3]{armillotta_fokianos_2021}. In addition, it extends  the results of \cite[Thm.~3]{armillotta_fokianos_2021} to nonlinear Poisson NAR models. The novelty   of Theorem~\ref{Thm. Consistency and Asymptotic Normality nonlinaer PNAR} 
		is that  both $N$ and $T$ tend to infinity as opposed to  standard case (when $N$ is fixed). Additional   conditions  guarantee strong consistency of the estimators, i.e. 
		
		\begin{theorem} \label{Thm. Consistency and Asymptotic Normality nonlinaer PNAR as}
		If $T_N=\lambda N$, for some $\lambda>0$ and Assumption~\ref{Ass alpha mixing} is such that the mixing coefficients satisfy $\alpha(J)^{1-1/r}=\mathcal{O}(J^{-3-\epsilon})$, for some $r>2$  and some $\epsilon>0$, then,  as $ \left\lbrace N, T_N \right\rbrace \to\infty$, all the convergences $``\xrightarrow{p}"$ in Lemma~\ref{Lem. Consistency and Asymptotic Normality nonlinaer PNAR} are replaced by  $``\xrightarrow{a.s.}"$ and Theorem~\ref{Thm. Consistency and Asymptotic Normality nonlinaer PNAR} holds with $\hat{\theta}\xrightarrow{a.s.}\theta_0$.
		\end{theorem}
		
		For instance,  exponential decay of the mixing coefficients  $\alpha(J)$ satisfies the assumption for any $r$ and $\epsilon$. Theorem \ref{Thm. Consistency and Asymptotic Normality nonlinaer PNAR as} is a new result, to the best of our knowledge, as 
		strong laws of large numbers for generally dependent double-indexed processes are scarce
		in the literature   (for an exception,  see \cite{dejong_1996strong}); see the discussion in \cite[Com.~6]{andrews_1987} and \cite[p.~256]{andrews_1992generic}. It is pointed out again  that the proof of Theorem~\ref{Thm. Consistency and Asymptotic Normality nonlinaer PNAR} %as well as the proof of Proposition~\ref{pnar chi general}
		does not depend on the specification of the data generating process for the joint dependence of $\left\lbrace Y_t\right\rbrace $. 
		%Hence, estimation of the related copula parameters is not necessary for obtaining $\hat{\theta}$.

		\subsection{A detailed example} \label{SEC: detailed example}
		We give a detailed discussion for proving Theorem~\ref{Thm. Consistency and Asymptotic Normality nonlinaer PNAR} for the nonlinear PNAR model \eqref{nonlinear} case.
		Let %$\mu=\beta_0/(1-\beta_1-\beta_2)$, $\mu=\mu 1$, $\mu=\E(Y_t)$,
		$\Sigma_{\xi}=\E\norm{\xi_t\xi_t^\prime }_{vec}$ and %$\mathcal{F}^{N}_{t}=\sigma\left(Y_{i,s}:\,1\leq i\leq N, s\leq t\right)$.  Moreover, define
		$\lambda_{\max}(X)$ the largest absolute eigenvalue of an arbitrary symmetric matrix $X$. Consider the vector form of model \eqref{nonlinear}: $\lambda_t= \beta_0 C_{t-1}+ GY_{t-1}$, where $G=\beta_1W+\beta_2I$ and $C_{t-1}=(1+X_{t-1})^{-\gamma}$. Under the conditions of Theorem~\ref{Thm. Ergodicity of nonlinear model div N}, and by using a infinite backward substitution argument on $Y_{t-1}$ we can rewrite the model as $Y_t=\mu^{\infty}_{t-1}+\tilde{Y}_{t}$, where $\mu^{\infty}_{t-1}=\beta_0\sum_{j=0}^{\infty}G^jC_{t-1-j}$ and $\tilde{Y}_{t}=\sum_{j=0}^{\infty}G^j\xi_{t-j}$. The proof of such representation is given in Supp. Mat. \ref{Proof of Prop condition limits nonlinear pnar}.  Define the following quantities: $L_{t-1}=\log(1+X_{t-1})$, $E_{t-1}= C_{t-1} \odot L_{t-1}$, $F_{t-1}= \log^2(1+X_{t-1}) \odot C_{t-1}$, $J_{t-1}= \log^3(1+X_{t-1}) \odot C_{t-1}$, where $\odot$ is the Hadamard product \cite[Sec.~11.7]{seber2008}; $I_{1,t-1}=C_{t-1}$, $I_{2,t-1}=W \mu^{\infty}_{t-1}$, $I_{3,t-1}=\mu^{\infty}_{t-1}$, $\Lambda_t=\Sigma_t^{1/2}D^{-1}_t$, $\Gamma(0)=\E[\Lambda_t(Y_{t-1}-\mu^{\infty}_{t-1})(Y_{t-1}-\mu^{\infty}_{t-1})^\prime \Lambda^\prime _t]$ and $\Delta(0)=\E[\Lambda_t W (Y_{t-1}-\mu^{\infty}_{t-1})(Y_{t-1}-\mu^{\infty}_{t-1})^\prime W^\prime \Lambda^\prime _t]$, $\Gamma_{1,t-1}=I_{1,t-1}$, $\Gamma_{2,t-1}=X_{t-1}$, $\Gamma_{3,t-1}=Y_{t-1}$, $\Gamma_{4,t-1}=E_{t-1}$; moreover let %$\Pi_s=\E(\Gamma_) $\Pi_{jlk}=N^{-1}\sum $ and
		$(j^*,l^*,k^*)=\argmax_{j,l,k}\norm{{N}^{-1}\sum_{i=1}^{N}\partial^3l_{i,t}(\theta)/\partial\theta_j\partial\theta_l\partial\theta_k}$, $\Pi_{jlk}=N^{-1}\sum_{i=1}^{N}\E(\Gamma_{i,j,t-1}\Gamma_{i,l,t-1}\Gamma_{i,k,t-1}/\lambda_{i,t})$, $\Pi_{F,k}=N^{-1}\E(F_{t-1}^\prime D_t^{-1}\Gamma_{k,t-1})$, $\Pi_{J}=N^{-1}\E(J_{t-1}^\prime D_t^{-1}\norm{\xi_t}_{vec})$, for $j,l,k=1,\dots,4$. Consider the following assumptions.
		
		\begin{enumerate}[label=\textbf{Q\arabic*}]
		\item Let $W$ be a sequence of matrices with non-stochastic entries indexed by $N$. \label{Ass network}
		\begin{enumerate}[label*=\textbf{.\arabic*}]
			\item Consider $W$ as a transition probability matrix of a Markov chain, whose state space is defined as the set of all the nodes in the network (i.e., $\left\lbrace 1,\dots, N\right\rbrace$). The Markov chain is assumed to be irreducible and aperiodic. Further, define $\pi = (\pi_1,\dots, \pi_N)^\prime \in\R^N$ as the stationary distribution of the Markov chain, where $\pi\geq 0$, $\sum_{i=1}^{N}\pi_i=1$ and $\pi=W^\prime \pi$. Furthermore, assume that $\lambda_{\max}(\Sigma_\xi)\sum_{i=1}^{N}\pi_i^2\to0$ as $N\to\infty$.
			\item Define $W^*=W+W^\prime $  and assume $\lambda_{\max}(W^*)=\mathcal{O}(\log N)$ and $\lambda_{\max}(\Sigma_{\xi})=\mathcal{O}((\log N)^\delta)$, for some $\delta\geq 1$.
		\end{enumerate}
		\item  Assume that the following limits exist: %$l^B_k=\lim_{N\to\infty}N^{-1}\E( I_{k,t-1}^\prime \Lambda_t^\prime \Lambda_tC_{t-1})$,
		$l^B_{kj}=\lim_{N\to\infty}N^{-1}\E( I_{k,t-1}^\prime \Lambda_t^\prime \Lambda_t I_{j,t-1})$, for $k,j=1,\dots,4$, $u^B_1=\lim_{N\to\infty}N^{-1}\textrm{tr}[\Delta(0)] $, $u^B_2=\lim_{N\to\infty}N^{-1}\textrm{tr}[W\Gamma(0)] $, $u^B_3=\lim_{N\to\infty}N^{-1}\textrm{tr}[\Gamma(0)] $, $v^B_{k4}=\lim_{N\to\infty}N^{-1}\E( \Gamma_{k,t-1}^\prime \Lambda_t^\prime \Lambda_t E_{t-1})$, $d^*=\lim_{N\to\infty}\Pi_{j^*,l^*,k^*}$.
		If at least two indices  among $(j^*,l^*,k^*)$ equal 4, $d_F^*=\lim_{N\to\infty}\Pi_{F,s^*}$, where $s^*=j^*,l^*,k^*$. Moreover, if  all three  $(j^*,l^*,k^*)$ equal 4, $d_J^*=\lim_{N\to\infty}\Pi_{J}$. \label{Ass final limits}
		
		\end{enumerate}
		
		\begin{theorem}   \label{Prop condition limits nonlinear pnar}
		Consider   \eqref{nonlinear} and suppose the conditions of Theorem~\ref{Thm. Ergodicity of nonlinear model div N}, Assumption~\ref{interior}-\ref{bounded away from 0},\ref{Ass alpha mixing}-\ref{Ass weak dependence} and \ref{Ass network}-\ref{Ass final limits} hold. Then, the conclusions of Theorem~\ref{Thm. Consistency and Asymptotic Normality nonlinaer PNAR} hold true for model \eqref{nonlinear}, with corresponding  limiting matrices: 
		\begin{equation}
			H=\begin{pmatrix}
				l^H_{11} & l^H_{12} & l^H_{13} & -\beta_0v^H_{14}  \\
				&  l^H_{22}+u^H_1 & l^H_{23}+u^H_2 & -\beta_0v^H_{24}\\
				&  & l^H_{33}+u^H_3 & -\beta_0v^H_{34} \\
				& & & \beta^2_0v^H_{44}
			\end{pmatrix}
			\,,\,\,B=\begin{pmatrix}
				l^B_{11} & l^B_{12} & l^B_{13} & -\beta_0v^B_{14}  \\
				&  l^B_{22}+u^B_1 & l^B_{23}+u^B_2 & -\beta_0v^B_{24}\\
				&  & l^B_{33}+u^B_3 & -\beta_0v^B_{34} \\
				& & & \beta^2_0v^B_{44}
			\end{pmatrix} \,, \label{H,B div N log}
		\end{equation}
		where the elements of the Hessian matrix are obtained by the elements  of  the information matrix  with $\Sigma_t=D_t$.
		\end{theorem}
		
		\begin{rem} \label{Rem: network impact} \rm
		Clearly, the network structure influences    the  results of Theorem~\ref{Prop condition limits nonlinear pnar}. Indeed, Assumption~\ref{Ass network} requires a well-behaved  underlying network: i) there should exists a non-zero probability to connect each pair of nodes; this allows the network to converge to its stationary distribution, i.e. $\lim_{N\to\infty}W^N=1\pi^\prime$. ii) The growth of the network should be such that  certain regularity properties hold. For instance, the covariances of the errors do not diverge fast, as $N\to\infty$.  The proof in Supp. Mat. \ref{Proof of Prop condition limits nonlinear pnar} shows that the leading  terms of Hessian and information matrices depend on the error component $\xi_{t-j}$ and the pseudo covariance matrix $\Sigma_\xi$  and are asymptotically negligible  (compare also with \cite[Lem.~S-1]{armillotta_fokianos_2021}). In this way,  the remaining terms appearing in  Assumption~\ref{Ass final limits} show existence of the limiting Hessian matrix $H$ and (together with Assumption~\ref{Ass weak dependence}) of the limiting information $B$. 
		Without any assumptions for the network,  the structure of matrices $H$ and $B$ is unknown and  conditions of finiteness of the limiting matrices could not be specified explicitly. 
		\end{rem}

		\subsection{Inference for NAR models}
		\label{ex nar}

		In this case, define $\hat{\theta}$, as the maximizer of the least squares criterion
		\begin{equation} 
		l_{NT}(\theta)=-\sum_{t=1}^{T}\left( Y_t-\lambda_t(\theta)\right)^\prime\left( Y_t-\lambda_t(\theta)\right)\,.
		\label{quasi log-lik nar}
		\end{equation}
		%	where $Z_{t-1}=(Z_{1,t-1},\dots,Z_{N,t-1})^\prime\in\R^{N\times m}$ and $Z_{i,t-1}=(1, X_{i,t-1},Y_{i,t-1})^\prime$, for $i=1,\dots,N$. 
		It follows that 
		\begin{equation}
		S_{NT}(\theta)=\sum_{t=1}^{T}\frac{\partial\lambda^\prime_t(\theta)}{\partial\theta}\left( Y_t-\lambda_t(\theta)\right)=\sum_{t=1}^{T}s_{Nt}(\theta).
		\label{score_cont}
		\end{equation}
		The empirical Hessian and information matrices are respectively
		\begin{align}
		H_{NT}(\theta)&=\sum_{t=1}^{T}\sum_{i=1}^{N}\frac{\partial\lambda_{i,t}(\theta)}{\partial\theta}\frac{\partial\lambda_{i,t}(\theta)}{\partial\theta^\prime}-\sum_{t=1}^{T}\sum_{i=1}^{N}\left( Y_{i,t}-\lambda_{i,t}(\theta)\right)\frac{\partial^2\lambda_{i,t}(\theta)}{\partial\theta\partial\theta^\prime} \label{H_T_cont}\,, \\
		B_{NT}(\theta)&=\sum_{t=1}^{T}\frac{\partial\lambda^\prime_t(\theta)}{\partial\theta}\Sigma_t(\theta) \frac{\partial\lambda_t(\theta)}{\partial\theta^\prime}\,,
		\label{B_T_cont}
		\end{align}
		where notation is as in  Section~\ref{ex pnar}. In addition  
		\begin{equation}
		H_N(\theta)=\E\left( \frac{\partial\lambda^\prime_{t}(\theta)}{\partial\theta}\frac{\partial\lambda_{t}(\theta)}{\partial\theta^\prime}\right) \,,\label{H_cont}
		\end{equation}
		\begin{equation}
		B_N=\E\left( \frac{\partial\lambda^\prime_t(\theta)}{\partial\theta}\xi_t(\theta)\xi^\prime_t(\theta) \frac{\partial\lambda_t(\theta)}{\partial\theta^\prime}\right) \,, \label{B_cont}
		\end{equation}
		and the latter equals $\sigma^2H_N$, when  $\theta=\theta_0$, because  $\xi_t$ an is $IID(0,\sigma^2)$ process. For the same reasons Assumption~\ref{Ass alpha mixing}-\ref{Ass weak dependence} hold trivially. Assumption \ref{Ass limits existence} is modified as 
		\begin{enumerate} [label=\textbf{H3$^\prime$}]
		\item For model \eqref{nonlinear_nar} with network $W$, the following limits exist, at $\theta=\theta_0$: \label{Ass limits existence cont}
		\begin{enumerate} [label=\textbf{H3$^\prime$.\arabic*}]
			\item $\lim_{N\to\infty}N^{-1}H_{N}=H$, with $H$ a $m\times m$ positive definite matrix.%, where $H_{N}$ is defined by \eqref{H}.
			%\item $\lim_{N\to\infty}N^{-1}B_{N}=B$. %, with $B$ a $m\times m$ positive definite matrix.%, where $B_{N}$ is defined by \eqref{B}.
			\item The third derivative of the quasi-log-likelihood \eqref{quasi log-lik nar} is bounded by functions $m_{i,t}$ which satisfy
			%\item $\exists$ functions $\left\lbrace m_{i,t}\right\rbrace_{i,t\in\N} $ :
			$\lim_{N\to\infty}N^{-1}\sum_{i=1}^{N}\E(m_{i,t})=M\,,$ where $M$ is a finite constant.
		\end{enumerate}
		\end{enumerate}
		%\begin{assumptionexp}  For model \eqref{nonlinear_nar}, the following limits exist, at $\theta=\theta_0$:
		%	\begin{enumerate}
		%		\item $\lim_{N\to\infty}N^{-1}H_{N}=H$, with $H$ a $m\times m$ positive definite matrix, where $H_{N}$ is defined by \eqref{H_cont}.
		%		\item Assume the third derivative of the quasi-log-likelihood \eqref{quasi log-lik nar} is bounded by functions $m_{i,t}$ which satisfy
		%		%\item $\exists$ functions $\left\lbrace m_{i,t}\right\rbrace_{i,t\in\N} $ :
		%		$\lim_{N\to\infty}N^{-1}\sum_{i=1}^{N}\E(m_{i,t})=M\,.$
		%	\end{enumerate}
		%	\label{ass1prime}
		%\end{assumptionexp}
		
		\begin{theorem} 
		\label{Thm. Consistency and Asymptotic Normality nonlinaer NAR}
		Consider model \eqref{nonlinear_nar} with $S_{NT}$, $H_{NT}$ and $B_{NT}$ defined as in \eqref{score_cont},\eqref{H_T_cont} and \eqref{B_T_cont}, respectively. Let $\theta\in\Theta\subset\R^m$. Suppose that the conditions of  Theorem~\ref{Thm. Ergodicity of nonlinear continuous model div N}, Assumption~\ref{interior}-\ref{smoothness} and \ref{Ass limits existence cont} hold. Then, there exists a fixed open neighbourhood $\mathcal{O}(\theta_0)=\left\lbrace \theta:|\theta-\theta_0|_2<\delta\right\rbrace$ of $\theta_0$ such that with probability tending to 1 as $ \left\lbrace N, T_N \right\rbrace \to\infty$, the equation $S_{NT_N}(\theta)=0$ has a unique solution, denoted by $\hat{\theta}$, such that $\hat{\theta}\xrightarrow{p}\theta_0$ and $\sqrt{NT_N}(\hat{\theta}-\theta_0)\xrightarrow{d}N(0,B^{-1})$,
		where $B=\sigma^2H$ and $H$ is defined as in \eqref{H,B div N log}, with $\Sigma_t=D_t=I$. %, $H=\lim_{N\to\infty}N^{-1}H_N$.
		\end{theorem}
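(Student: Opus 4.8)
The plan is to mirror the proof of Theorem~\ref{Thm. Consistency and Asymptotic Normality nonlinaer PNAR}, substituting the least-squares score \eqref{score_cont} for the Poisson quasi-score and exploiting the fact that for model \eqref{nonlinear_nar} we are working with an $IID(0,\sigma^2)$ error sequence rather than a copula-Poisson one. The key step is to establish the continuous analogue of Lemma~\ref{Lem. Consistency and Asymptotic Normality nonlinaer PNAR}, namely $(NT)^{-1}H_{NT}\xrightarrow{p}H$, $(NT)^{-1}B_{NT}\xrightarrow{p}\sigma^2 H$, $(NT)^{-1/2}S_{NT}\xrightarrow{d}N(0,\sigma^2 H)$, and a uniform bound on the third derivatives; once this is in hand, the theorem follows from Assumption~\ref{interior}, a Taylor expansion of $S_{NT}$ around $\theta_0$, and a standard fixed-point / invertibility argument as in \citet[Thm.~3.2.23]{tani2000}, exactly as in the Poisson case.

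First I would verify the moment and ergodicity prerequisites: Theorem~\ref{Thm. Ergodicity of nonlinear continuous model div N}, together with the imposed $a\geq 8$ moments on $\xi_t$, gives a strictly stationary solution with $\max_{i\geq1}\E\norm{Y_{i,t}}^a\leq C_a<\infty$ uniformly in $N$; the smoothness Assumption~\ref{smoothness} (the Lipschitz bounds on $\partial f_i/\partial\theta$ and its higher derivatives, plus finiteness at $0$) then transfers to uniform moment bounds on $\partial\lambda_{i,t}/\partial\theta$, $\partial^2\lambda_{i,t}/\partial\theta\partial\theta'$ and $\partial^3\lambda_{i,t}/\partial\theta\partial\theta'\partial\theta''$, again via backward-substitution representations of $\lambda_{i,t}$ in the spirit of the detailed example in Section~\ref{SEC: detailed example}. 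Note that Assumption~\ref{bounded away from 0} is \emph{not} needed here, since the least-squares objective \eqref{quasi log-lik nar} does not divide by $\lambda_{i,t}$; this is why the statement only invokes Assumptions~\ref{interior}--\ref{smoothness} and \ref{Ass limits existence cont}.

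Next I would prove the four convergences. For (1), write $(NT)^{-1}H_{NT}$ from \eqref{H_T_cont} as an average over $t$ of a stationary ergodic sequence (by Theorem~\ref{Thm. Ergodicity of nonlinear continuous model div N}) whose mean converges to $H$ by Assumption~\ref{Ass limits existence cont}; the second sum in \eqref{H_T_cont} has conditional mean zero because $\E(\xi_t\mid\Fb_{t-1})=0$, and a martingale/ergodic LLN kills it. For (2), note $B_{NT}$ in \eqref{B_T_cont} involves $\Sigma_t$, which here equals $\sigma^2 I$ because $\xi_t$ is $IID(0,\sigma^2)$, so $(NT)^{-1}B_{NT}\xrightarrow{p}\sigma^2 H$ directly. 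For (3), $S_{NT}=\sum_t s_{Nt}$ is a martingale difference array with respect to $\Fb_t$ (since $s_{Nt}=\partial\lambda_t'/\partial\theta\,\xi_t$ and $\E(\xi_t\mid\Fb_{t-1})=0$); I would apply the martingale CLT of \citet[Cor.~3.1]{hall1980} (for fixed $N$, $T\to\infty$) combined with a blocking / Cramér–Wold device for the joint $\min\{N,T\}\to\infty$ regime, using the uniform moment bounds and Assumption~\ref{Ass limits existence cont} to check the conditional-variance and Lindeberg conditions — this is the step that requires the $a\geq 8$ moments. For (4), the uniform bound on third derivatives over $\mathcal{O}(\theta_0)$ follows from the Lipschitz structure in Assumption~\ref{smoothness} and Assumption~\ref{Ass limits existence cont}, verbatim as in Lemma~\ref{Lem. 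Consistency and Asymptotic Normality nonlinaer PNAR}(4). Finally, a mean-value expansion $0=S_{NT}(\hat\theta)=S_{NT}(\theta_0)-H_{NT}(\bar\theta)(\hat\theta-\theta_0)$ with $\bar\theta$ on the segment, together with (1), (4) and positive definiteness of $H$, gives existence, uniqueness and consistency of $\hat\theta$ in $\mathcal{O}(\theta_0)$, and then $\sqrt{NT}(\hat\theta-\theta_0)=H^{-1}(NT)^{-1/2}S_{NT}+o_p(1)\xrightarrow{d}N(0,H^{-1}(\sigma^2 H)H^{-1})=N(0,\sigma^2 H^{-1})=N(0,B^{-1})$.

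The main obstacle is step (3) in the doubly-asymptotic regime $\min\{N,T\}\to\infty$: unlike the fixed-$N$ case, one cannot invoke a one-dimensional martingale CLT, and the cross-sectional dependence among the coordinates of $\xi_t$ — even though each $\xi_{i,t}$ is $IID(0,\sigma^2)$ across $t$, the continuous NAR model as stated permits contemporaneous dependence across $i$ — must be controlled so that the normalized conditional variance $(NT)^{-1}B_{NT}$ stabilizes and the Lindeberg condition holds uniformly. Here I would lean on the same weak-dependence machinery used for the PNAR case (Assumptions~\ref{Ass alpha mixing}--\ref{Ass weak dependence} hold trivially, as the excerpt notes, because the errors are $IID$), so that $N^{-1}B_N$ is finite and the array CLT of \citet[Cor.~3.1]{hall1980} applies after a standard truncation argument; the uniform $L^a$ bounds with $a\geq 8$ are exactly what make the truncation error negligible.
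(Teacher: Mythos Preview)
Your proposal is correct and follows exactly the route the paper takes: the paper itself omits the proof, stating it is ``closely analogous to the proof of Theorem~\ref{Thm. Consistency and Asymptotic Normality nonlinaer PNAR}'' (i.e., verify the continuous analogue of Lemma~\ref{Lem. Consistency and Asymptotic Normality nonlinaer PNAR} and then apply \citet[Thm.~3.2.23]{tani2000}). One small over-complication: your ``main obstacle'' about cross-sectional dependence is not really an obstacle here, since in model \eqref{nonlinear_nar} the errors $\xi_{i,t}$ are assumed $IID(0,\sigma^2)$ across both $i$ and $t$, so $\Sigma_t=\sigma^2 I$ exactly and Assumptions~\ref{Ass alpha mixing}--\ref{Ass weak dependence} hold trivially, which is why the paper only requires \ref{Ass limits existence cont} in place of \ref{Ass limits existence}.
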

		The proof is omitted since it is analogous to the proof of Theorem~\ref{Thm. Consistency and Asymptotic Normality nonlinaer PNAR}. Theorem~\ref{Thm. Consistency and Asymptotic Normality nonlinaer NAR} generalises the results of \cite[Thm.~3]{zhu2017} to nonlinear NAR models, and it can be proved to entail results analogous to Proposition~\ref{Prop condition limits nonlinear pnar}, by considering \cite[Assumption~C2]{zhu2017} instead of \ref{Ass network}, and \ref{Ass final limits} holding, with $D_t=\Lambda_t=I$ and $\Pi_{jlk}=0$, for $j,k,l=1,\dots,4$. See \cite[Thm.~3]{zhu2017} for a detailed proof concerning the case of model \eqref{nar_1}.
		A result similar to Theorem~\ref{Thm. Consistency and Asymptotic Normality nonlinaer PNAR as}
		%Corollary~\ref{Cor. Consistency and Asymptotic Normality nonlinaer PNAR}
		for model \eqref{nonlinear_nar} is also established by setting $T_N=\lambda N$.
		
		\begin{rem} \rm
		Reiterating the discussion following 	Theorems~\ref{Thm. Ergodicity of nonlinear model div N}-\ref{Thm. Ergodicity of nonlinear continuous model div N}  and noting that Assumption~\ref{smoothness} does  not hold  for \eqref{tnar}, the double asymptotic based inference derived in this section and the associated testing theory (see Section~\ref{SEC: identifiability}) do not hold  for  threshold models  when $N$ is increasing.  However, Supp. Mat.~\ref{threshold models} provides all these results for the threshold model if $N$ is fixed.
		\label{Remark:TNAR}
		\end{rem}
		
		\begin{rem} \rm
		The asymptotic theory of  this section applies for  parameter values satisfying the  conditions of  Theorems~\ref{Thm. Ergodicity of nonlinear model div N}-\ref{Thm. Ergodicity of nonlinear continuous model div N}.
		In practical applications, the QMLE is obtained  using  constrained optimization where the constraints satisfy such  conditions. In the integer-valued case,  additional constraints should be introduced so that the mean process is positive.
		\end{rem}
		
		\section{Hypothesis testing on network autoregressive models} \label{SEC: linearity test}
		
		%	\subsection{A general theorem about testing on networks} 
		With the same notation as in Sections~\ref{SEC: model} and \ref{SEC: inference}, recall  \eqref{general model} 	and consider the following testing problems 
		\begin{equation}
		H_0: \theta^{(2)}=\theta_0^{(2)}\quad\text{vs.}\quad H_1: \theta^{(2)}\neq \theta_0^{(2)},\quad \text{componentwise}\, ,
		\label{test}
		\end{equation}
		%	Then, the hypothesis test constitutes a test on the linearity of the model, against a general nonlinear alternative.
		against the Pitman's local alternatives
		\begin{equation}
		H_0: \theta^{(2)}=\theta_0^{(2)}\quad\text{vs.}\quad H_{1}: \theta^{(2)}= \theta_0^{(2)}+\frac{\delta_2}{\sqrt{NT}}, ~~ \delta_2\in\R^{m_2} . 
		\label{test_local}
		\end{equation}
		%for some fixed $\delta_2\in\R^{m_2}$.
		
		To develop a test statistic for testing \eqref{test}-\eqref{test_local}, we  employ a quasi-score test based on \eqref{log-lik}. An appealing property of the score test  is that it is computed   under the null, which is computationally simpler.  Moreover, the asymptotic distribution of the test is not affected  when  $\theta^{(2)}$ belongs to the boundary of the parameter space.
		Define  $\tilde{\theta}=( \tilde{\theta}^{(1)\prime}, \tilde{\theta}^{(2)\prime})^\prime$ the constrained quasi-likelihood estimator of  $\theta=(\theta^{(1)\prime}, \theta^{(2)\prime})^\prime$, under the null hypothesis, and $S_{NT}(\theta)=( S^{(1)\prime}_{NT}(\theta), S_{NT}^{(2)\prime}(\theta)) ^\prime$ denote the corresponding partition of the quasi-score function. Because we study a quasi-score test, we correct the test statistic to obtain thoroughly its limiting distribution;  see 
		\cite{fran2019}, among others. Accordingly the test statistic is given by (Supp. Mat. \ref{Supp:Sec:Proofs for Section 4})
		\begin{equation}
		LM_{NT}= S^{(2)\prime}_{NT}(\tilde{\theta})\Sigma_{NT}^{-1}(\tilde{\theta})  S^{(2)}_{NT}(\tilde{\theta}).
		\label{score test}
		\end{equation}
		Here  %$\tilde{\Sigma}\equiv\tilde{\Sigma}(\tilde{\theta})=(NT)^{-1}\Sigma_{NT}(\tilde{\theta})$
		$(NT)^{-1}\Sigma_{NT}(\tilde{\theta})$ is a suitable estimator for the covariance matrix defined as \break %$\Sigma=\mathrm{Var}[(NT)^{-1/2}S_{NT}^{(2)}(\theta_0)]$. 
		$\Sigma=\mathrm{Var}[(NT)^{-1/2}S_{NT}^{(2)}(\tilde{\theta})]$, where 
		\begin{equation}
		\Sigma=B_{22}-H_{21}H_{11}^{-1}B_{12}-B_{21}H_{11}^{-1}H_{12}+H_{21}H_{11}^{-1}B_{11}H_{11}^{-1}H_{12}\,,
		\label{sigma}
		\end{equation}
		with  $B_{gl}, H_{gl}, g,l=1,2$ with dimension $m_g\times m_l$, are blocks of the matrices $H,B$ %defined in Theorem \ref{limits} below,
		such that 
		\begin{equation*}
		H=
		\begin{pmatrix}
			H_{11} & H_{12} \\
			H_{21} & H_{22}
		\end{pmatrix}\,,\quad
		B=
		\begin{pmatrix}
			B_{11} & B_{12} \\
			B_{21} & B_{22}
		\end{pmatrix}\,.
		\end{equation*}
		If \eqref{log-lik} is the \emph{true} likelihood, then $LM_{NT}$ reduces to the standard score test with $B\equiv H$ and $\Sigma=B_{22}-B_{21}B_{11}^{-1}B_{12}\eqqcolon \Sigma_B$. 
		\begin{rem}  \label{Rem. sample information matrix} \rm
		Following \cite{armillotta_fokianos_2021}, the estimator $\Sigma_{NT}(\tilde{\theta})$ of \eqref{score test} is computed as the sample counterpart of \eqref{sigma}, obtained by replacing the partitioned matrices $H$ and $B$ respectively by $H_{NT}(\tilde{\theta})$ and $B_{NT}(\tilde{\theta})$, where $H_{NT}(\theta)$ is defined in \eqref{hessians} and $B_{NT}(\theta)$ is the sample information matrix. 
		%	\begin{equation}
			%		B_{NT}(\theta)=\sum_{t=1}^{T}s_{Nt}(\theta)s^\prime_{Nt}(\theta)\,. \label{B_hat}
			%	\end{equation}
		%	For a more extensive discussion, see Lemma~\ref{Lem. sample information matrix}.
		\end{rem}

		\begin{enumerate} [label=\textbf{A$^\prime$}] 
		\item The parameter space $\Theta$ is compact. Define the partition of the parameter space $\Theta^{(1)}$ such that $\theta^{(1)}\in\Theta^{(1)}$ and the true value $\theta^{(1)}_0$ belongs to its interior.\label{interior H0}
		\end{enumerate}
		\begin{theorem} 
		Suppose that model \eqref{general model} admits a stationary solution, for $N\to\infty$. Consider $l_{NT}$,  $S_{NT}$, $H_{NT}$ and $B_{NT}$ defined by \eqref{log-lik}, \eqref{score} and \eqref{hessians}, respectively. Assume that, under $H_0$, \ref{interior H0} is satisfied  such that, as  $ \left\lbrace N, T_N \right\rbrace \to\infty$, Lemma \ref{Lem. Consistency and Asymptotic Normality nonlinaer PNAR} and Theorem \ref{Thm. Consistency and Asymptotic Normality nonlinaer PNAR} holds.
		%		and, as $ \left\lbrace N, T_N \right\rbrace \to\infty$, there exist symmetric matrices $ H$ and $B$, evaluated at $\theta_0$, such that Lemma \ref{Lem. Consistency and Asymptotic Normality nonlinaer PNAR} holds.
		%		\begin{enumerate} 
			%			\item $(NT_N)^{-1}H_{NT_N}\xrightarrow{p}H\,, \text{where}\,\,H\,\, \text{is positive definite}\,,$
			%			\item $(NT_N)^{-1}B_{NT_N}\xrightarrow{p}B\,,$
			%			\item $(NT_N)^{-\frac{1}{2}}S_{NT_N}\xrightarrow{d}N(0,B)\,,$
			%			\item $\tilde{\theta}\xrightarrow{p}\theta_0\,.$ %$\sqrt{NT}(\tilde{\theta}-\theta_0)\xrightarrow{d}N(0,H^{-1}BH^{-1})\,.$
			%			\item %The third derivative of the quasi-log-likelihood \eqref{log-lik} is bounded by functions $M_{NT}$ such that
			%			$ \displaystyle
			%			\max_{g,l,s}\sup_{\theta\in\mathcal{O}(\theta_0)}\left|\frac{1}{NT_N}\sum_{t=1}^{T_N}\sum_{i=1}^{N}\frac{\partial^3l_{i,t}(\theta)}{\partial\theta_g\partial\theta_l\partial\theta_s}\right|\leq M_{NT_N}\,, $
			%			where $M_{NT_N}\xrightarrow{p}M$ and $M$ is a constant.
			%		\end{enumerate}
		Recall the testing problem \eqref{test}. Then, as $ \left\lbrace N, T_N \right\rbrace \to\infty$, the quasi-score test statistic \eqref{score test} converges to a chi-square random variable,
		\begin{equation}
			LM_{NT_N}\xrightarrow{d}\chi^2_{m_2}\,,
			\nonumber%\label{chi-square}
		\end{equation}
		under $H_0$. Moreover, under the alternative  \eqref{test_local},  \eqref{score test} converges to a non-central chi-square random variable,
		\begin{equation}
			LM_{NT_N}\xrightarrow{d}\chi^2_{m_2}(\delta_2^{\prime}\tilde{\Delta}\delta_2)\,,
			\nonumber%\label{noncentral chi-square}
		\end{equation} 
		where $\tilde{\Delta}=\tilde{\Sigma}_H\tilde{\Sigma}^{-1}\tilde{\Sigma}_H$ and $ \Sigma_H\coloneqq H_{22}-H_{21}H_{11}^{-1}H_{12}$; $\tilde{\Sigma}$ and $\tilde{\Sigma}_H$ are sample counterparts of $\Sigma$ and $\Sigma_H$, respectively, evaluated at $\tilde{\theta}$. % computed according to Remark~\ref{Rem. sample information matrix}.
		\label{limits}
		\end{theorem}
		%The proof is postponed to Appendix \ref{Proof general nonlinear score}.
		
		Theorem~\ref{limits} extends the results of \cite{christou_fokianos_2015} for the case of multivariate discrete and continuous network autoregressive models with infinite dimensional data. In addition, it  implies that even though  $\theta^{(2)}$ belongs to the boundary of the parameter space, the asymptotic $\chi^2$ distribution remains unaffected. Instead, the asymptotic distribution of the Wald and likelihood ratio tests depends on the  null hypothesis  and do not
		converge  to  $\chi^2$ distributed when $N$ is fixed; see  \cite[Sec.~8.3.2]{fran2019} and \cite{fra2016}. We illustrate some  applications  of Theorem~\ref{limits} to the network models \eqref{nonlinear_pnar}-\eqref{nonlinear_nar} but we emphasize  that its conclusion applies to more general settings. %The following result is derived  by Lemma~\ref{Lem. Consistency and Asymptotic Normality nonlinaer PNAR} and  Theorem~\ref{Thm. Ergodicity of nonlinear model div N}. %, \ref{Thm. Consistency and Asymptotic Normality nonlinaer PNAR}
		%and \ref{limits}. 
		% general and can be employed for any model such that, under $H_0$, a linear model is restored.
		%Consider the modification of assumption \ref{interior}:

		\begin{proposition}  \label{pnar chi general}
		Assume $Y_t$ follows \eqref{nonlinear_pnar} and the process $\lambda_t$ is defined as in $\eqref{general model}$.
		Consider the test $H_0: \theta^{(2)}=\theta_0^{(2)}$ vs. $H_1: \theta^{(2)}\neq \theta_0^{(2)}$. Then, under $H_0$, \ref{interior H0} and the conditions of Lemma~\ref{Lem. Consistency and Asymptotic Normality nonlinaer PNAR}, Theorem \ref{limits} is true. 
		
		%		 the test statistic \eqref{score test} converges to a chi-square random variable with $m_2$ degrees of freedom. Moreover, under $H_1$ in \eqref{test_local}, the test statistic \eqref{score test} converges to a noncentral chi-square random variable, that is, as $ \left\lbrace N, T_N \right\rbrace \to\infty$
		%		\begin{align}
			%			LM_{NT_N}\xrightarrow{d}\chi^2_{m_2}\,,\quad
			%			LM_{NT_N}\xrightarrow{d}\chi^2_{m_2}(\delta_2^{\prime}\tilde{\Delta}\delta_2)\,,\nonumber
			%		\end{align}
		%		where $\tilde{\Delta}=\tilde{\Sigma}_H\tilde{\Sigma}^{-1}\tilde{\Sigma}_H$.
		\end{proposition}
		Proposition \ref{pnar chi general} follows by Lemma~\ref{Lem. Consistency and Asymptotic Normality nonlinaer PNAR}, Theorems~\ref{Thm. Ergodicity of nonlinear model div N} %, \ref{Thm. Consistency and Asymptotic Normality nonlinaer PNAR}
		and \ref{limits}. 
		
		\subsection{A detailed example (continued)} \label{SEC: detailed example conti}
		For model \eqref{nonlinear},  the linearity test \eqref{test} is equivalent to testing  $H_0: \gamma=0$ versus $H_1: \gamma>0$. Convergence for all necessary asymptotic quantities is required only under the null. Recall the notation of Section~\ref{SEC: detailed example}. Then, under $H_0$, $C_{t-1}=1$, so the decomposition of the count process simplifies to $Y_t=\mu+\sum_{j=0}^{\infty}G^j\xi_{t-j}$, since $\mu_{t-1}=\mu=\beta_0/(1-\beta_1-\beta_2)^{-1}$. This entails that
		$\Gamma_{1,t-1}=1$, $\Gamma_{4,t-1}=L_{t-1}$, $F_{t-1}=\log^2(1+X_{t-1})$ and $J_{t-1}=\log^2(1+X_{t-1})$. Moreover, set  $\Lambda=\E(\Lambda^\prime _t\Lambda_t)$, $\Gamma(0)=\E[\Lambda_t(Y_{t-1}-\mu)(Y_{t-1}-\mu)^\prime \Lambda^\prime _t]$ and $\Delta(0)=\E[\Lambda_tW(Y_{t-1}-\mu)(Y_{t-1}-\mu)^\prime W^\prime \Lambda^\prime _t]$. 
		So condition \ref{Ass final limits} %included in Proposition~\ref{Prop condition limits nonlinear pnar} 
		simplifies as follows: 
		\begin{enumerate} [label=\textbf{Q2$^\prime$}]
		\item Assume that the following limits exist: $f_1=\lim_{N\to\infty}N^{-1}\left( 1^\prime \Lambda 1\right)$, \\ $f_2=\lim_{N\to\infty}N^{-1}\textrm{tr}\left[ \Gamma(0)\right] $, $f_3=\lim_{N\to\infty}N^{-1}\textrm{tr}\left[ W\Gamma(0)\right] $, $f_4=\lim_{N\to\infty}N^{-1}\textrm{tr}\left[ \Delta(0)\right] $, $\dot{v}^B_{k4}=\lim_{N\to\infty}N^{-1}\E( \Gamma_{k,t-1}^\prime \Lambda_t^\prime \Lambda_t L_{t-1})$, $d^*=\lim_{N\to\infty}\Pi_{j^*,l^*,k^*}$.
		If at least two indices among $(j^*,l^*,k^*)$ equal 4, $d_F^*=\lim_{N\to\infty}\Pi_{F,s^*}$, where $s^*=j^*,l^*,k^*$. Moreover, if all three indices $(j^*,l^*,k^*)$ equal 4, $d_J^*=\lim_{N\to\infty}\Pi_{J}$.  \label{Ass final limits H0}
		\end{enumerate}
		In this case, the limiting Hessian and information matrices in \eqref{H,B div N log} are  equal to the respective matrices obtained by  the  linear model fitting  \cite[Eq.~22]{armillotta_fokianos_2021} plus the addition of the fourth row and column whose elements are given by $(-\beta_0)^\nu\dot{v}^B_{k4}$, for $k=1,\dots,4$ and $\nu=2$, when $k=4$ and $\nu=1$, otherwise.
		\begin{proposition}  \label{pnar chi}
		Assume $Y_t$  ~follows \eqref{nonlinear_pnar} and the process $\lambda_t$ is defined as in $\eqref{nonlinear}$. Suppose the conditions of Theorem~\ref{Thm. Ergodicity of nonlinear model div N}, Assumptions~\ref{interior H0}, \ref{smoothness}-\ref{bounded away from 0},\ref{Ass alpha mixing}-\ref{Ass weak dependence} and \ref{Ass network}-\ref{Ass final limits H0} hold. Consider the test $H_0: \gamma=0$ vs. $H_1: \gamma>0$. Then, Theorem \ref{limits} holds true.
		
		%		%and the conditions of Theorem~\ref{Thm. Consistency and Asymptotic Normality nonlinaer PNAR},
		%		the test statistic \eqref{score test} converges to a chi-square random variable with one degree of freedom. Moreover, under $H_1$ in \eqref{test_local}, the test statistic \eqref{score test} converges to a noncentral chi-square random variable, in symbols, as $ \left\lbrace N, T_N \right\rbrace \to\infty$
		%		\begin{align}
			%			LM_{NT_N}\xrightarrow{d}\chi^2_{1}\,,\quad
			%			LM_{NT_N}\xrightarrow{d}\chi^2_{1}(\delta_2^{\prime}\tilde{\Delta}\delta_2)\,,\nonumber
			%		\end{align}
		%		where $\tilde{\Delta}=\tilde{\Sigma}_H\tilde{\Sigma}^{-1}\tilde{\Sigma}_H$.
		\end{proposition}
		Denoting the constrained QMLE  by $\tilde{\theta}=(\tilde{\theta}^{(1)},0)^\prime$, where $\tilde{\theta}^{(1)}$ is the QMLE of the linear model \eqref{nar_1}, the
		partial quasi-score  \eqref{score test} is given by  $S^{(2)}_{NT}(\tilde{\theta})=\sum_{t=1}^{T}\sum_{i=1}^{N}(Y_{i,t}/\lambda_{i,t}(\tilde{\theta})-1)\partial\lambda_{i,t}(\tilde{\theta})/\partial\gamma$, with $\partial\lambda_{i,t}(\tilde{\theta})/\partial\gamma=-\tilde{\beta_0}\log(1+X_{i,t-1})$, where $\tilde{\beta_0}$ is the QMLE of the intercept $\beta_0$ in the linear model \eqref{nar_1}. Furthermore, the covariance estimator $\Sigma_{NT}(\tilde{\theta})$ for the test statistic \eqref{score test} is defined as in Remark~\ref{Rem. sample information matrix}.  Note that in Proposition~\ref{pnar chi}, the nonlinear perturbation is due to the network structure. Moreover,  since the asymptotic distribution  of the  score test \eqref{score test} depends on the convergence of sample Hessian and information matrices to \eqref{H,B div N log},  the  approximation to the chi-square distribution  depends  by the convergence of the network according to the regularity properties given by \ref{Ass network}-\ref{Ass final limits H0} (see Remark~\ref{Rem: network impact}). Analogous result and conclusions are  obtained for \eqref{nonlinear_nar}, by using  Thm. ~\ref{Thm. Ergodicity of nonlinear continuous model div N}, \ref{Thm. Consistency and Asymptotic Normality nonlinaer NAR} and \ref{limits}, and therefore it is omitted. Consider the following condition:
		\begin{enumerate} [label=\textbf{Q2$^{\prime\prime}$}]
		\item For $Y_t$ defined as in \eqref{nonlinear_nar} and $\lambda_t$ following \eqref{nonlinear_cont}, Assumption~\ref{Ass final limits H0} holds, with $D_t=\Lambda_t=I$ and $\Pi_{jlk}=0$, for $j,k,l=1,\dots,4$. \label{Ass final limits H0 cont}
		\end{enumerate}
		\begin{proposition}  \label{nar chi}
		Assume $Y_t$ ~ follows \eqref{nonlinear_nar} and the process $\lambda_t$ is defined as in $\eqref{nonlinear_cont}$. Suppose the conditions of  Theorem~\ref{Thm. Ergodicity of nonlinear continuous model div N}, Assumptions~\ref{interior H0}-\ref{smoothness},\ref{Ass alpha mixing}-\ref{Ass weak dependence}, \cite[Assumption~C2]{zhu2017} and \ref{Ass final limits H0 cont} hold.  Consider the test $H_0: \gamma=0$ vs $H_1: \gamma>0$. 
		Then, Theorem \ref{limits} holds true.
		%		Then, under $H_0$,
		%		%and the conditions of Theorem~\ref{Thm. Consistency and Asymptotic Normality nonlinaer NAR},
		%		the test statistic \eqref{score test} converges to a chi-square random variable with one degree of freedom. Moreover, under $H_1$ in \eqref{test_local}, the test statistic \eqref{score test} converges to a noncentral chi-square random variable, in symbols, as $ \left\lbrace N, T_N \right\rbrace \to\infty$
		%		\begin{align}
			%			LM_{NT_N}\xrightarrow{d}\chi^2_{1}\,,\quad
			%			LM_{NT_N}\xrightarrow{d}\chi^2_{1}(\delta_2^{\prime}\tilde{\Delta}\delta_2)\,,\nonumber
			%		\end{align}
		%		where $\tilde{\Delta}=\tilde{\Sigma}_H$.%\tilde{\Sigma}^{-1}\tilde{\Sigma}_H$.
		\end{proposition}

		\section{Testing under non identifiable parameters}
		\label{SEC: identifiability}
		We develop testing theory when the parameters are not identifiable under the
		linearity  hypothesis. A case in point is model \eqref{stnar}, with $\theta^{(1)}=(\beta_0, \beta_1, \beta_2)^\prime$ and $\theta^{(2)}=(\alpha, \gamma)^\prime$. Then testing  $H_0: \alpha=0$, makes $\gamma$ non-identifiable but 
		the score partition \eqref{score_poisson}--and consequently the test statistic-- still depends on the value of  $\gamma$. Hence the theory of Sec. \ref{SEC: linearity test} does not apply any more. Similar remarks hold for the threshold parameter $\gamma$ of model \eqref{tnar}, when testing $H_0: \alpha_0=\alpha_1=\alpha_2=0$. Assigning a fixed arbitrary value for $\gamma$ resolves such issues but this approach might lack  power as the test is sensitive to the choice of $\gamma$, especially when $\gamma$ is far from its true value. 
		It is well known (see  \cite[Sec.~5.1,5.5]{Terasvirtaetal(2010)} ) that testing linearity is an important issue because non-identifiable parameters have tremendous impact on properties of estimators. 
		Usually  a sup-type test, say  $g_{NT}=\sup_{\gamma\in \Gamma}LM_{NT}(\gamma)$, is employed in applications, where  $\Gamma=[\gamma_L,\gamma_U]$ is a compact domain for $\gamma$; %which, in practice, will be a grid of values;
		e.g. \cite{davies_1987, francq_etal_2010} and \cite[Par.~3.2]{christou_fokianos_2015}.
		
		Define $Z$ a random variable, and suppose that the function $g(\cdot):\Gamma\to\R$ is continuous with respect to the uniform metric, monotonic for each $\gamma$, and such that, as $Z\to\infty$, then $g(Z)\to\infty$ in a subset of $\Gamma$ with a non-zero probability $\mathbb{P}$.  For the standard asymptotics, i.e. $T\to\infty$, such functions have been employed in applications. Examples include  $g_{T}=g(LM_{T})$ \cite{hansen_1996} and  \cite{andrews_1994} who considered $g(LM_T)=\int_\Gamma LM_{T}(\gamma)d\mathbb{P}(\gamma)$ and $g(LM_{T})=\log(\int_\Gamma\exp(1/2 LM_{T}(\gamma))d\mathbb{P}(\gamma)) $.
		We extend this theory to the case of both $T,N \rightarrow \infty$. 
		
		%	However, the derivation of the asymptotic theory for such test is non standard; the results of Section~\ref{SEC: linearity test} do not apply any more to stochastic processes depending on non identifiable parameters $\gamma$, therefore additional results are required. Moreover, when the network dimension is increasing ($N\to\infty$) there exist no previously available results in the literature, to the best of our knowledge, concerning the asymptotic distribution of the test statistic depending on $\gamma$. In fact, the standard asymptotic theory does not apply automatically since it might involve diverging sums of random variables without suitable imposed conditions on network and dependence of the data.
		
		\subsection{Specification} \label{Subsec: result non ident}
		In this section, we use a more convenient  notation.  Accordingly, consider 
		the nonlinear PNAR model defined in \eqref{nonlinear_pnar} as
		\begin{equation}
		Y_t=N_t(\lambda_t(\gamma))\,,\quad \quad	\lambda_t(\gamma)=Z_{1t}(W)\beta+h(Y_{t-1}, W, \gamma)\alpha,
		\label{general nuisance pnar}
		\end{equation}
		where $\beta$ is a $k_1\times 1$ vector of identifiable  parameters associated with the linear component of the model,  $\alpha$ is a $k_2\times 1$ vector of  identifiable  non-linear parameters and $\gamma$ denote nuisance parameters. We set 
		$\theta=(\phi^\prime, \gamma^\prime)^\prime$, $\phi=(\beta^\prime, \alpha^\prime)^\prime$. With this notation, the dimension of $\theta$ is $m=k+m^*$, where $m^*$ is the dimension of $\gamma$
		and $k=k_1+k_2$. In addition,   $Z_{1t}(W)=(1, WY_{t-1}, Y_{t-1})$ is a $N\times k_1$ matrix associated to the linear part of the network autoregressive model (for the order 1 model  $k_1=3$), and $h(Y_{t-1}, W, \gamma)$ is a $N\times k_2$ matrix describing the nonlinear part of the model.  Set $Z_{1t}=Z_{1t}(W)$, $h_t(\gamma)\equiv h(Y_{t-1}, W, \gamma)$ and $h_t(\gamma)=(h^{1}_t(\gamma)\dots h^{b}_t(\gamma) \dots h^{k_2}_t(\gamma))$, where each column indicates a nonlinear regressor $ h^{b}_t(\gamma)$, for $b=1,\dots,k_2$, being a $N\times 1$ vector whose elements are $h^{b}_{i,t}(\gamma)$, where $i=1,\dots,N$.  
		%Note that the general function $f(\cdot)$, defined in \eqref{nonlinear_pnar},  is assumed to have a linear part plus $k_2$ additive nonlinear effects in model \eqref{general nuisance pnar}, with parameters $\beta=(\beta_0, \beta_1, \beta_2)^\prime\in\R^3$ and $\alpha=(\alpha_1,\dots,\alpha_{k_2})^\prime\in\R^{k_2}$, respectively.
		%and the whole set of parameter is restored $\theta=(\beta^\prime, \alpha^\prime, \gamma^\prime)^\prime$.
		Then, the  conditional expectation of \eqref{general nuisance pnar} is  $\lambda_t(\gamma)=Z_{t}(\gamma)\phi$ where $Z_t(\gamma)=(Z_{1t}, h_t(\gamma))$ is the $N\times k$ matrix of regressors.
		%\begin{equation}
		%	Y_t=Z_{t}(\gamma)\phi+\xi_{t}\,,
		%	\label{general nuisance}
		%\end{equation}
		%and $\phi=(\beta^\prime, \alpha^\prime)^\prime$ is the $k\times 1$  vector of identifiable parameters. Recall that the error sequence is defined by the difference $\xi_t(\gamma)=Y_t-\lambda_t(\gamma)$ and is a $MDS$ by construction. 
		Analogously, for continuous-valued time series and  $\xi_t\sim IID(0,\sigma^2)$, equation \eqref{nonlinear_nar} becomes
		\begin{equation}
		Y_t=\lambda_t(\gamma)+\xi_t\,,\quad \quad	\lambda_t(\gamma)=Z_{1t}(W)\beta+h(Y_{t-1}, W,  \gamma)\alpha.
		\label{general nuisance nar}
		\end{equation}
		%	 The interpretation of the other effects of \eqref{general nuisance nar} follows as in \eqref{general nuisance pnar}. The elementwise version of \eqref{general nuisance pnar} is obtained as $Y_{i,t}=\lambda_{i,t}(\gamma)+\xi_{i,t}(\gamma)$, with $\lambda_{i,t}(\gamma)=Z^\prime_{i,t}(\gamma)\phi$ and $Z_{i,t}(\gamma)=(1, X_{i,t-1}, Y_{i,t-1}, h_{i,t}(\gamma))^\prime$, for $i=1,\dots,N$, where $h_{i,t}(\gamma)=(h^{1}_{i,t}(\gamma)\dots h^{b}_{i,t}(\gamma) \dots h^{k_2}_{i,t}(\gamma))$ is a $1 \times k_2$ vector and corresponds to the $i^{\text{th}}$ row of the matrix $h_t(\gamma)$. The same holds for \eqref{general nuisance nar}, with $\xi_{i,t}(\gamma)\equiv \xi_{i,t}$. 
		
		Many nonlinear models  are included in this general frameworks provided by \eqref{general nuisance pnar}-\eqref{general nuisance nar}; for example, the STNAR model \eqref{stnar}, where $k_2=1$ and $h_{i,t}(\gamma)=\exp(-\gamma X_{i,t-1}^2)X_{i,t-1}$, for $i=1,\dots,N$, and the TNAR model \eqref{tnar}, where $k_2=3$ and $h^1_{i,t}(\gamma)=I(X_{i,t-1}\leq \gamma)$, $h^2_{i,t}(\gamma)=X_{i,t-1} I(X_{i,t-1}\leq \gamma)$ and $h^3_{i,t}(\gamma)=Y_{i,t-1} I(X_{i,t-1}\leq \gamma)$ ;  see  \cite[p.~414]{hansen_1996}. 
		%Further applications can include testing for change points $h_{i,t}(\gamma)=I(t/T \leq \gamma)X_{i,t-1}$, Box-Cox transformations, etc; for more see  \cite[p.~414]{hansen_1996}. 
		%The model \eqref{general nuisance} is encompassed in the more general models \eqref{nonlinear_pnar}-\eqref{nonlinear_nar} for count and continuous variables, respectively.
		%The inference is obviously rephrased in terms of dependence from the nuisance parameters. 
		
		\subsection{Testing linearity} \label{SUBSEC test non ident}
		For models \eqref{general nuisance pnar}-\eqref{general nuisance nar}, consider 
		testing  linearity  in the presence of  a non identifiable parameters $\gamma$
		\begin{equation}
		H_0: \alpha=0\,, \quad \text{vs.}\quad H_1: \alpha\neq0\,,\quad \text{elementwise}. \label{test nonident}
		\end{equation}
		% $H_0: \alpha=0$, elementwise,  
		%constitutes a linearity test for network autoregressive models with $\lambda_t=Z_{1t}\beta$, under $H_0$, subject to non identifiable parameters $\gamma$ under the null assumption.
		
		Consider first  the case of count time series, i.e. eq. \eqref{general nuisance pnar}. In this case, the score \eqref{score_poisson},  Hessian \eqref{H_T}, and the sample information matrix \eqref{B_T}, for the quasi-log-likelihood \eqref{pois-log-lik} are
		%\begin{equation}
		%&l_T(\gamma)=\sum_{t=1}^{T}\sum_{i=1}^{N}Y_{i,t}\log(Z_{i,t}(\gamma)^\prime\phi)+Z_{i,t}(\gamma)^\prime\phi\,,\quad
		$S_{NT}(\gamma)=\sum_{t=1}^{T} s_{Nt}(\gamma)%\sum_{t=1}^{T}\sum_{i=1}^{N}\left(  \frac{Y_{i,t}}{\lambda_{i,t}(\gamma)}-1\right) Z_{i,t}(\gamma)=%Z_t(\gamma)^\prime D^{-1}_t(\gamma)\left( Y_t-Z_t(\gamma)\phi\right)
		\,,%\nonumber\\ 
		\, H_{NT}(\gamma_1, \gamma_2)=\sum_{t=1}^{T}\sum_{i=1}^{N}Y_{i,t}\bar{Z}_{i,t}(\gamma_1)\bar{Z}^\prime_{i,t}(\gamma_2),%}{\lambda_{i,t}(\gamma_1)\lambda_{i,t}(\gamma_2)}
  B_{NT}(\gamma_1, \gamma_2)=\sum_{t=1}^{T} \E\left[ s_t(\gamma_1)s_t^\prime(\gamma_2) | \Fb_{t-1} \right]$ 
%\nonumber
%\end{equation}
where $s_{Nt}(\gamma)=Z_t^\prime(\gamma) D^{-1}_t(\gamma)\left( Y_t-Z_t(\gamma)\phi\right)$ and $\bar{Z}_{i,t}(\gamma)=Z_{i,t}(\gamma)/\lambda_{i,t}(\gamma)$. The theoretical counterpart of such quantities are then denoted by $H_N(\gamma_1, \gamma_2)=\sum_{i=1}^{N}\E\left( Y_{i,t}\bar{Z}_{i,t}(\gamma_1)\bar{Z}^\prime_{i,t}(\gamma_2)\right)$, $H(\gamma_1, \gamma_2) = \lim_{N\to\infty} N^{-1}H_N(\gamma_1, \gamma_2)$, and $B_N(\gamma_1,\gamma_2)=\E\left( s_t(\gamma_1)s^\prime_t(\gamma_2)\right)$, $B(\gamma_1, \gamma_2) = \lim_{N\to\infty} N^{-1}B_N(\gamma_1, \gamma_2)$. 
Following the discussion of Section~\ref{SEC: linearity test}, 
the quasi-score function is  partitioned again in two components:  the part concerning linear parameters and the component associated with the non-linear part of the model. 
We denote this by  $S_{NT}(\gamma)=(  S^{(1)\prime}_{NT}, S_{NT}^{(2)\prime}(\gamma)) ^\prime$.
%where clearly $S^{(1)}_{NT}$ does not involve nuisance parameters. %An alternative notation $S^{(1)}_{T}=S_{T}(\beta)$ is referred to the parameters of the linear component of \eqref{general nuisance} and $S_{T}^{(2)}(\gamma)=S_{T}(\alpha,\gamma)$ is the partition of the score concerning the parameters $\alpha$ associated to the nonlinear components.
Moreover, consider $S(\gamma)=\left( S^{(1)\prime}, S^{(2)\prime}(\gamma)\right) ^\prime$ a mean zero Gaussian process with covariance kernel $B(\gamma_1, \gamma_2)$. Define the matrix $\Sigma(\gamma_1,\gamma_2)$ as in \eqref{sigma}, with partitioned matrices $B_{gl}$, $H_{gl}$, for $g,l=1,2$, of dimension $k_g\times k_l$, being blocks of the matrices $B(\gamma_1,\gamma_2)$, $H(\gamma_1, \gamma_2)$, with obvious rearrangement of the notation. Then, $S^{(2)}(\gamma)$ is a Gaussian process with covariance kernel $\Sigma(\gamma_1, \gamma_2)$.  Define $\tilde{\phi}=(\tilde{\beta}^{\prime},0^{\prime})^{\prime}$ the constrained estimator under the null hypothesis and use the tilde notation for all quantities which correspond  to constrained QMLE. 
% $\tilde{\lambda}_t=Z_{1t}\tilde{\beta}=Z_{t}(\gamma)\tilde{\phi}$, $\tilde{S}_{NT}(\gamma)=\sum_{t=1}^{T} \tilde{s}_{Nt}(\gamma)$, with $\tilde{s}_{Nt}(\gamma)=Z_t^\prime(\gamma) \tilde{D}^{-1}_t( Y_t-\tilde{\lambda}_t)$; the estimated sample information and Hessian matrices become, respectively, $\tilde{B}_{NT}(\gamma_1, \gamma_2)=\sum_{t=1}^{T}\tilde{s}_t(\gamma_1)\tilde{s}_t^\prime(\gamma_2)$ and  $\tilde{H}_{NT}(\gamma_1, \gamma_2)=\sum_{t=1}^{T}\sum_{i=1}^{N}Y_{i,t}Z_{i,t}(\gamma_1)Z_{i,t}^\prime(\gamma_2)/\tilde{\lambda}^2_{i,t}$. Finally, $\tilde{\lambda}_{i,t}=Z^\prime_{i,t}(\gamma)\tilde{\phi}$, elementwise. 
Then, for  testing \eqref{test nonident} we consider the test statistic 
\begin{equation}
LM_{NT}(\gamma)= \tilde{S}^{(2)\prime}_{NT}(\gamma)\tilde{\Sigma}_{NT}^{-1}(\gamma,\gamma) \tilde{S}^{(2)}_{NT}(\gamma)\,,
\label{score test nuisance}
\end{equation}
where, according to Remark~\ref{Rem. sample information matrix}, $\tilde{\Sigma}_{NT}(\gamma,\gamma)$ is the estimator of $\Sigma(\gamma,\gamma)$, obtained by substituting $H(\gamma,\gamma)$, $B(\gamma,\gamma)$ with $\tilde{H}_{NT}(\gamma,\gamma)$, $\tilde{B}_{NT}(\gamma,\gamma)$, respectively.
Define $Z_{1,i,t}=(1, X_{i,t-1}, Y_{i,t-1})^\prime$ and $\eta_{Nt}=N^{-1/2}\sum_{i=1}^{N}Y_{i,t}(Z_{1,i,t}^\prime Z_{1,i,t}-1)+X_{1,i,t}+Y_{1,i,t}$.
An extra condition is required:
%$\phi\in\Phi\subset\R^k$ where $\Phi$ is the compact parameters space and the true value of $\phi$ belongs to the interior of $\Phi$.
%
%	\begin{enumerate} [label=\textbf{A$^{\prime\prime}$}] 
%	\item The parameter space $\Theta$ is compact. Define the partition of the parameter space $\Theta^{(1)}$ such that $\beta=\theta^{(1)}\in\Theta^{(1)}$ and the true value $\theta^{(1)}_0$ belongs to its interior. Moreover, $\gamma\in\Gamma$, where $\Gamma$ is compact.
%	\label{interior H0 nonident}
%\end{enumerate}

\begin{enumerate} [label=\textbf{B$^{\prime}$}] 
%\setcounter{enumi}{3}
%	\begin{assumptionalph} 
	%	\item The parameter space $\Theta$ is compact; its partition such that $\beta\in\mathcal{B}$ is compact and the true values of $\beta$ belongs to its interior. Moreover, $\gamma\in\Gamma$, where $\Gamma$ is compact.
	\item  Assumption \ref{smoothness} holds with all constants not depending on $\gamma\in\Gamma$, where $\Gamma$ is compact, and %$\inf_{\gamma\in\Gamma} \det(H(\gamma,\gamma))>0$. Moreover,
	$\lnorm{\eta_{Nt}}_q<\infty$, for some $q > \max\left\lbrace 1+ \delta, m^* \right\rbrace $, with $ 0 < \delta < 1$.
	\label{extra equicont}
\end{enumerate}
Assumption \ref{extra equicont} is  similar to assumption  \ref{smoothness} for the particular case we consider.
%Note that it implies that $\partial f(0,0,\theta)/\partial\alpha_b=h^b_i(0,\gamma)=c_b$, where $c_b$ is a constant not depending on $\gamma$. 
%Clearly, the invertibility condition of the Hessian matrix is now specified uniformly over $\gamma\in\Gamma$.
An extra moment assumption is required to guarantee  stochastic  equicontinuity of the score. It can be easily shown that a sufficient condition for obtaining $\lnorm{\eta_{Nt}}_q<\infty$ would be, for example, the  weak dependence condition  $|\E(Y^r_{i,t}Y^r_{j,t}|\Fb_{t-1})|\leq \phi_{j-i}$, such that $\sum_{h=1}^{\infty} \phi^{1/r}_h<\infty$, where $r=q/2$, if $q$ is even, and $r=(q+1)/2$, if $q$ is odd. For instance, in the STNAR model \eqref{stnar}, $m^*=1$, $q=2$ and $r=1$, so the condition simplifies to a special case of Assumption~\ref{Ass weak dependence}.
From \ref{extra equicont},  Assumption~\ref{bounded away from 0}  holds trivially for \eqref{general nuisance pnar}, because a.s. $\lambda_{i,t}(\gamma)\geq\beta_0+h^\prime_i(0,\gamma)\alpha=C>0$, for $i=1,\dots,N$.
Define $\delta_2\in\R^{k_2}_+$ and $J_2=(O_{k_2\times k_1}, I_{k_2})$, where $I_s$ is a $s\times s$ identity matrix and $O_{a\times b}$ is a $a\times b$ matrix of zeros.

\begin{theorem}  \label{Thm: Test non standard}
	Assume $Y_t$ is integer-valued, following \eqref{general nuisance pnar} and suppose the conditions of Theorem~\ref{Thm. Ergodicity of nonlinear model div N}, Assumption~\ref{interior H0}-\ref{extra equicont} and \ref{Ass alpha mixing}-\ref{Ass limits existence} hold. 
	%Assumptions~\ref{interior H0} and \ref{mixing}-\ref{smoothness non standard} hold.
	Consider the test $H_0: \alpha=0$ vs. $H_1: \alpha>0$, componentwise. Then, under $H_0$, as $ \left\lbrace  N, T_N \right\rbrace \to\infty$, $S_{N T_N}(\gamma)\Rightarrow S(\gamma)$, $LM_{N T_N}(\gamma)\Rightarrow LM(\gamma)$ and $g_{N T_N}\Rightarrow g=g(LM(\gamma))$ where
	\begin{align}
		LM(\gamma)=S^{(2)\prime}(\gamma)\Sigma^{-1}(\gamma,\gamma)S^{(2)}(\gamma)\,.
		\nonumber
	\end{align}
	Moreover, the same result holds under local alternatives $H_1: \alpha= (N T_N)^{-1/2}\delta_2$, with $S^{(2)}(\gamma)$ having mean $J_2H^{-1}(\gamma,\gamma)J_2^\prime\delta_2$.%, \Sigma(\gamma,\gamma))$.
	%where $\tilde{\Delta}=\tilde{\Sigma}_H\tilde{\Sigma}^{-1}\tilde{\Sigma}_H$.
\end{theorem}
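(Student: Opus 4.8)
The plan is to adapt the empirical-process machinery of \cite{hansen_1996} and \cite{andrews_1994} for testing with a parameter identified only under the alternative, to the present network quasi-score framework with $T\to\infty$ and $N$ fixed. The backbone of the argument is functional weak convergence, in the space of $\R^{k_2}$-valued functions on $\Gamma$ with the uniform metric, of the rescaled quasi-score process $\{T^{-1/2}S_T(\gamma):\gamma\in\Gamma\}$ to the Gaussian process $S(\gamma)$; the three stated conclusions then follow by the continuous mapping theorem applied to the quadratic form $LM_T(\gamma)$ and to the transformation $g$. Throughout, Assumption~\ref{interior H0} supplies the interior point for the linear parameters together with the compactness of $\Gamma$.

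I would establish weak convergence of $T^{-1/2}S_T(\gamma)$ in the usual two steps. For the finite-dimensional distributions, fix $\gamma_1,\dots,\gamma_J\in\Gamma$ and apply a martingale central limit theorem to the stacked $\Fb_{t-1}$-martingale differences $(s_t(\gamma_1)^\prime,\dots,s_t(\gamma_J)^\prime)^\prime$: each $s_t(\gamma)=Z_t^\prime(\gamma)D_t^{-1}(\gamma)\xi_t(\gamma)$ is a martingale difference because $\xi_t(\gamma)=Y_t-\lambda_t(\gamma)$ is one; the process $\{Y_t\}$ is stationary and ergodic with all moments finite and $\lambda_{i,t}(\gamma)\ge\beta_0>0$ by Assumption~\ref{mixing} and Theorem~\ref{Thm. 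Ergodicity of nonlinear model}; and the conditional second moments converge, by the ergodic theorem, to $B(\gamma_a,\gamma_b)$, so that \citet[Cor.~3.1]{hall1980} yields joint asymptotic normality, the moment and Lindeberg bounds being supplied by the growth condition on $h^{b}_{i,t}(\gamma)$ in Assumption~\ref{smoothness non standard}, the $L^{4r}$-bound on $\mu_{i,t}$ in Assumption~\ref{lipschitz} ($r>1$), and the finite moments of $Y_t$. For tightness I would verify stochastic equicontinuity of $T^{-1/2}S_T^{(2)}(\gamma)$ over $\Gamma$: the H\"older-type inequality $\norm{h_{i,t}(\gamma)-h_{i,t}(\gamma^*)}_1\le\mu_{i,t}(Y_{t-1},\gamma)\norm{\gamma-\gamma^*}_1^\delta$ with $L^{4r}$-bounded coefficient controls the moments of the increments of the score, and a maximal-inequality / chaining argument in the spirit of \cite{andrews_1994} and \cite{newey1991uniform} delivers tightness in the uniform metric. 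This equicontinuity step, together with the uniform convergence of the random normalization below, is the part I expect to be the main obstacle.

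I would then handle the data-dependent normalization. Because $\gamma$ does not enter $\lambda_t$ under $H_0$, the constrained estimator $\tilde\beta_T$ coincides with the QMLE of the linear model \eqref{nar_1}, hence is consistent and $\sqrt T$-asymptotically normal by the standard-case theory (Corollary~\ref{Cor. Consistency and Asymptotic Normality nonlinaer PNAR}). A uniform law of large numbers --- the ergodic theorem together with the equicontinuity supplied by Assumptions~\ref{lipschitz}--\ref{smoothness non standard} --- gives $\sup_{\gamma\in\Gamma}\vertiii{T^{-1}\tilde H_T(\gamma,\gamma)-H(\gamma,\gamma)}\xrightarrow{p}0$ and the analogue for $\tilde B_T$, hence $\sup_{\gamma\in\Gamma}\vertiii{T^{-1}\tilde\Sigma_T(\gamma,\gamma)-\Sigma(\gamma,\gamma)}\xrightarrow{p}0$; the condition $\inf_{\gamma\in\Gamma}\det H(\gamma,\gamma)>0$ in Assumption~\ref{smoothness non standard} guarantees uniform positive-definiteness of $\Sigma(\gamma,\gamma)$ and hence uniform convergence of its inverse. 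A first-order expansion of $\tilde S_T^{(2)}(\gamma)$ about the true value of $\phi$ under $H_0$ then gives, uniformly in $\gamma$, $T^{-1/2}\tilde S_T^{(2)}(\gamma)=T^{-1/2}S_T^{(2)}(\gamma)-H_{21}(\gamma)H_{11}^{-1}T^{-1/2}S_T^{(1)}+o_p(1)$, which identifies the limit process --- the $S^{(2)}(\gamma)$ of the statement --- as the projected Gaussian process with covariance $\Sigma(\gamma_1,\gamma_2)$.

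With these ingredients the three conclusions follow from the continuous mapping theorem: the map $(x(\cdot),A(\cdot))\mapsto x(\cdot)^\prime A(\cdot)^{-1}x(\cdot)$ is continuous on the continuous $\R^{k_2}$-valued functions on $\Gamma$ times the continuous, uniformly positive-definite matrix functions on $\Gamma$, so $LM_T(\gamma)=(T^{-1/2}\tilde S_T^{(2)}(\gamma))^\prime(T^{-1}\tilde\Sigma_T(\gamma,\gamma))^{-1}(T^{-1/2}\tilde S_T^{(2)}(\gamma))\Rightarrow LM(\gamma)=S^{(2)\prime}(\gamma)\Sigma^{-1}(\gamma,\gamma)S^{(2)}(\gamma)$, and, $g$ being continuous with respect to the uniform metric, a further application yields $g_T(\gamma)=g(LM_T(\gamma))\Rightarrow g(LM(\gamma))$. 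For the local alternatives $\alpha=T^{-1/2}\delta_2$, the data-generating process perturbs $\lambda_t$ by $h_t(\gamma_0)T^{-1/2}\delta_2$, a deterministic $O(T^{-1/2})$ shift, so the tightness and finite-dimensional-convergence steps are unaffected and only the centering of the effective score changes; evaluating $\E[T^{-1/2}\tilde S_T^{(2)}(\gamma)]$ by a first-order expansion together with the uniform law of large numbers yields the drift $J_2H^{-1}(\gamma,\gamma)J_2^\prime\delta_2$, and recentering $S^{(2)}(\gamma)$ accordingly transfers the three weak limits to the local-alternative regime.
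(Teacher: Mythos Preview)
Your overall architecture is right—functional weak convergence of the score process, uniform convergence of the sandwich pieces, then continuous mapping—and it matches the paper's. But there is a genuine gap in your tightness step, and it is tied to a misreading of what Assumption~\ref{mixing} is doing.

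You invoke Assumption~\ref{mixing} only to obtain stationarity, ergodicity and moments (via Theorem~\ref{Thm. Ergodicity of nonlinear model}). But Theorem~\ref{Thm. Ergodicity of nonlinear model} needs only $\rho(G)<1$; the stricter requirement $\vertiii{G}_1<1$ in Assumption~\ref{mixing} is there for a different reason: it is the condition under which the contraction \eqref{contraction} yields $\tau$-weak dependence of $\{Y_t\}$ \citep{doukhan_2008} and then, via \citet{doukhan_2012}, \emph{absolute regularity} with geometrically decaying $\beta$-mixing coefficients. The paper uses this $\beta$-mixing explicitly at two places you leave unspecified: (i) the functional convergence $T^{-1/2}S_T(\gamma)\Rightarrow S(\gamma)$ is obtained not by a martingale CLT plus generic chaining, but by the bracketing functional CLT of \citet{doukhan_et_al_1995} for $\beta$-mixing sequences, whose entropy condition is verified from Assumption~\ref{lipschitz} together with the $L^{2\nu}$ bound on $\sup_{\gamma}\norm{s_t(\gamma)}_1$; and (ii) the uniform strong LLN for $H_T(\gamma_1,\gamma_2)$ and $B_T(\gamma_1,\gamma_2)$ is \citet{andrews_1987}, which again needs strong mixing (implied by $\beta$-mixing).

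Your substitute—martingale CLT for the finite-dimensional distributions plus ``chaining in the spirit of \citet{andrews_1994} and \citet{newey1991uniform}''—does not close the loop. The finite-dimensional part is fine, but the maximal inequalities underlying Andrews's stochastic-equicontinuity results themselves require mixing or near-epoch dependence on a mixing base, and Newey's paper is about uniform LLNs, not tightness of empirical processes. So without first establishing a mixing property of $\{Y_t\}$, your equicontinuity claim is not supported, and the role of $\vertiii{G}_1<1$ is left unexplained. Once you add the absolute-regularity step, the remainder of your proposal—the expansion of $\tilde S_T^{(2)}(\gamma)$ via \eqref{PV}, uniform convergence of $\tilde\Sigma_T$ using the invertibility condition in Assumption~\ref{smoothness non standard}, and the local-alternative drift—lines up with the paper's argument.
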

%The proof is postponed to Appendix~\ref{Proof Test non standard}.
%This result establishes the asymptotic distribution of the test statistic \eqref{score test nuisance} for any suitable function $g_{N T}=g(LM_{NT}(\gamma))$ under double asymptotic regimes.
%Under $H_0$, the process $LM(\gamma)$ is called chi-square process, since for each value of $\gamma\in\Gamma$, the test statistic \eqref{score test nuisance} converges to a chi-square random variable with $k_2$ degrees of freedom (by Proposition~\ref{pnar chi general}). Under local alternatives, the test statistic \eqref{score test nuisance} converges to 
%		%a noncentral chi-square random variable. Therefore, $LM(\gamma)$ is called
%		noncentral chi-square process. As far as we know, 
Theorem~\ref{Thm: Test non standard}--the proof is given in Appendix \ref{Proof Test non standard}--extends  \cite[Thm.~1]{hansen_1996} in three  directions: i) develops testing for NAR  models; ii) proves  convergence to asymptotic process, where both time and network dimension diverge together; %the result is then fairly different from the standard case and more powerful since here the convergence to chi-square process is established for a model with infinite-dimension; 
iii) the results holds for both  continuous-valued data (see below) and  integer-valued multivariate random variables. %using  quasi maximum likelihood maximization of the function \eqref{pois-log-lik}. %Moreover, an extension involving continuous-valued multivariate random variables is additionally proved; see Theorem~\ref{Thm: Test non standard cont} below.
In line with Section~\ref{SEC: detailed example conti}, for each single model encompassed in \eqref{general nuisance pnar} one can substitute \ref{Ass limits existence} with network conditions \ref{Ass network} and suitable limits existence as in \ref{Ass final limits H0}.
An analogous result holds  for continuous valued time series, as  in \eqref{general nuisance nar}. Its proof is omitted.  Consider $s_t(\gamma)=Z_t^\prime(\gamma)\xi_t$ and $H_T(\gamma_1, \gamma_2)=\sum_{t=1}^{T}\sum_{i=1}^{N}Z_{i,t}(\gamma_1)Z_{i,t}^\prime(\gamma_2)$. In this case no additional weak dependence assumption is required since the error sequence is independent. %Set $\zeta_{Nt}=N^{-1/2}\sum_{i=1}^{N}\norm{\xi_{i,t}Z_{1,i,t}}_1$.
\begin{enumerate} [label=\textbf{B$^{\prime\prime}$}] 
	\item  Assumption \ref{smoothness} holds with all constants not depending on $\gamma\in\Gamma$, where $\Gamma$ is compact.
	%, and %$\inf_{\gamma\in\Gamma} \det(H(\gamma,\gamma))>0$. Moreover,
	%$\lnorm{\zeta_{Nt}}_q<\infty$, for some $q > \max\left\lbrace 1+ \delta, m^* \right\rbrace $, with $ 0 < \delta < 1$.
	\label{extra equicont cont}
\end{enumerate}

\begin{theorem}  \label{Thm: Test non standard cont}
	Assume $Y_t$ is continuous-valued, following \eqref{general nuisance nar} and suppose the conditions of Theorem~\ref{Thm. Ergodicity of nonlinear continuous model div N}, Assumptions~\ref{interior H0}-\ref{extra equicont cont} and \ref{Ass limits existence cont} hold. Consider the test $H_0: \alpha=0$ vs. $H_1: \alpha\neq 0$, componentwise. Then, the results of Theorem~\ref{Thm: Test non standard} hold true.
\end{theorem}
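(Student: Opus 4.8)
The plan is to run the proof of Theorem~\ref{Thm: Test non standard} essentially verbatim, replacing the ingredients that were specific to the copula--Poisson model \eqref{general nuisance pnar} (mixing via \cite{fok2020}, Poisson moments) by the corresponding facts for the continuous model \eqref{general nuisance nar}. The argument splits into four blocks: (i) stability and absolute regularity of $\{Y_t\}$; (ii) weak convergence $S_T(\gamma)\Rightarrow S(\gamma)$ with respect to the uniform metric on $\Gamma$; (iii) a uniform-in-$\gamma$ expansion of the constrained quasi-score $\tilde S^{(2)}_T(\gamma)$ together with uniform consistency of $\tilde H_T(\gamma_1,\gamma_2)$, $\tilde B_T(\gamma_1,\gamma_2)$, hence of $\tilde\Sigma_T(\gamma,\gamma)$; (iv) a continuous-mapping step for the null and a contiguity step for the local alternative.

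For block (i), Assumption~\ref{mixing_cont} together with Theorem~\ref{Thm. Ergodicity of nonlinear continuous model} already delivers a unique stationary ergodic solution with $\E\lnorm{Y_t}_a^a<\infty$ for the value of $a$ required by Assumption~\ref{lipschitz_cont}. What is new is absolute regularity: since $\{Y_t\}$ is a Markov chain driven by the innovations $\xi_t$, and the law of $\xi_t$ is absolutely continuous with a positive density, one verifies a drift/minorization (Doeblin-type) condition --- the contraction $\vertiii{G}_\infty=\norm{\mu_1}+\norm{\mu_2}<1$ supplies the drift and the positive density supplies the minorization on compact sets --- which yields geometric ergodicity and hence geometric $\beta$-mixing; this is the step that genuinely differs from the count case and, I expect, the main technical obstacle, so the details are deferred to Appendix~\ref{Proof Test non standard cont}. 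This is precisely why the comparatively mild condition $\vertiii{G}_\infty<1$ suffices here, balanced by the extra smoothness requirement on the error distribution.

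For block (ii), since $N$ is fixed I normalise by $\sqrt T$. Finite-dimensional convergence of $T^{-1/2}S_T(\gamma)$, at finitely many values of $\gamma$, follows from a central limit theorem for the stationary martingale difference sequence $s_t(\gamma)=Z_t^\prime(\gamma)\xi_t$ (using $\beta$-mixing and the moment bounds), with limiting covariance kernel $B(\gamma_1,\gamma_2)=\sigma^2\,\E[Z_t^\prime(\gamma_1)Z_t(\gamma_2)]$-type expression. Stochastic equicontinuity over $\Gamma$ comes from the Lipschitz bound of Assumption~\ref{lipschitz} (exponent $\delta$, envelope $\mu_{i,t}$ controlled in $L^{2r}$) combined with $\E\lnorm{\xi_t}^a_a<\infty$ and the H\"older bookkeeping of Assumption~\ref{lipschitz_cont}: these give a moment bound of the form $\E\lnorm{T^{-1/2}\bigl(S_T(\gamma)-S_T(\gamma^*)\bigr)}_2^{2\nu}\le C\norm{\gamma-\gamma^*}_1^{2\nu\delta}$, whence tightness by the standard equicontinuity criteria for empirical processes \citep{newey1991uniform,hansen_1996}, so $S_T\Rightarrow S$ on $\Gamma$.

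For block (iii), the constrained estimator $\tilde\phi_T=(\tilde\beta_T^\prime,0^\prime)^\prime$ is the QMLE of the linear NAR model \eqref{nar_1}, so by the $T\to\infty$ asymptotic-normality result for that estimator (the NAR analogue of Corollary~\ref{Cor. Consistency and Asymptotic Normality nonlinaer PNAR}, obtained from Theorem~\ref{Thm. Consistency and Asymptotic Normality nonlinaer NAR}) one has $\sqrt T(\tilde\beta_T-\beta_0)=H_{11}^{-1}T^{-1/2}S_T^{(1)}+o_p(1)$. A Taylor expansion of $\tilde S^{(2)}_T(\gamma)$ around $\phi_0$, uniform in $\gamma$ by a uniform ergodic theorem for $\tilde H_T$ (again using Assumption~\ref{lipschitz} for $\gamma$-continuity of the summands and the moment bounds), yields $T^{-1/2}\tilde S^{(2)}_T(\gamma)=T^{-1/2}S_T^{(2)}(\gamma)-H_{21}(\gamma)H_{11}^{-1}T^{-1/2}S_T^{(1)}+o_p(1)$ uniformly over $\Gamma$; together with block (ii) this gives $T^{-1/2}\tilde S^{(2)}_T(\gamma)\Rightarrow S^{(2)}(\gamma)$, a Gaussian process with covariance kernel $\Sigma(\gamma_1,\gamma_2)$ as in \eqref{sigma}. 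The same uniform ergodic theorem gives $\tilde H_T(\gamma_1,\gamma_2)\to H(\gamma_1,\gamma_2)$ and $\tilde B_T(\gamma_1,\gamma_2)\to B(\gamma_1,\gamma_2)$ in probability, uniformly, hence $\tilde\Sigma_T(\gamma,\gamma)\to\Sigma(\gamma,\gamma)$ uniformly, with $\inf_{\gamma\in\Gamma}\det\Sigma(\gamma,\gamma)>0$ inherited from $\inf_{\gamma\in\Gamma}\det H(\gamma,\gamma)>0$ in Assumption~\ref{smoothness non standard}. Finally, the map $(s,\Sigma)\mapsto s^\prime\Sigma^{-1}s$ is continuous for the uniform metric on the relevant domain, so the continuous mapping theorem gives $LM_T(\gamma)\Rightarrow LM(\gamma)=S^{(2)\prime}(\gamma)\Sigma^{-1}(\gamma,\gamma)S^{(2)}(\gamma)$ and $g_T(\gamma)\Rightarrow g(LM(\gamma))$; under $H_1:\alpha=T^{-1/2}\delta_2$ a Le Cam third-lemma/contiguity argument (the local log-likelihood ratio for \eqref{general nuisance nar} is asymptotically normal under Assumption~\ref{mixing_cont} and the moment conditions) shifts the mean of $S^{(2)}(\gamma)$ by $J_2H^{-1}(\gamma,\gamma)J_2^\prime\delta_2$, exactly as in Theorem~\ref{Thm: Test non standard}, completing the proof.
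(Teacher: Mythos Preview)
Your proposal is correct and follows essentially the same route as the paper: it establishes geometric ergodicity and $\beta$-mixing via a drift/minorization argument on the Markov chain $\{Y_t\}$ (the paper invokes \cite{chan_tong_1985}, \cite{MeynandTweedie(2009)} and \cite{bradley_2005} for this), then reruns the machinery of Theorem~\ref{Thm: Test non standard} with $s_t(\gamma)=Z_t^\prime(\gamma)\xi_t$ and the least-squares Hessian in place of their Poisson counterparts. Two minor deviations worth noting: for stochastic equicontinuity the paper uses the $L^{2\nu}$ bracketing-entropy bound of \cite{doukhan_et_al_1995} rather than a direct moment/Kolmogorov criterion, and for the local alternative the paper proceeds by a direct Taylor expansion of the score (as in \eqref{normality local}) rather than via Le~Cam contiguity---both yield the same conclusion, but the paper's choices avoid having to verify LAN for the quasi-likelihood separately.
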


\begin{rem}  \label{Rem: p lag} \rm
	The results of this paper extend straightforwardly to the case of model order $p>1$, i.e.  $\lambda_{t}=f(Y_{t-1},\dots,Y_{t-p}, W, \theta)$. Indeed, the proof of stability conditions of  Theorem~\ref{Thm. Ergodicity of nonlinear model}-\ref{Thm. Ergodicity of nonlinear continuous model}  are based  on the fact that the process  $\left\lbrace Y_t: t\in\Z \right\rbrace $ is  a first order Markov chain. All proofs adapt directly to a Markov chain of generic order $p$, by suitable adjustment of the contraction property \eqref{contraction}.
	Similar remark holds  for asymptotic properties of the QMLE and Theorems \ref{Thm. Consistency and Asymptotic Normality nonlinaer PNAR}-\ref{Thm: Test non standard cont}) by a suitable extension. 
	%The proof of Theorem~\ref{limits} is left unchanged as well as Assumptions~\ref{Ass alpha mixing}-\ref{Ass limits existence}, \ref{Ass limits existence cont}.
	
	%The asymptotic properties of the QMLE, as Theorem~\ref{Thm. Consistency and Asymptotic Normality nonlinaer PNAR} and \ref{Thm. Consistency and Asymptotic Normality nonlinaer NAR}, lie on Assumptions \ref{smoothness}-\ref{bounded away from 0}. By a straightforward extension of such assumptions to lag $p$ models, it can be proved with no efforts that all the results concerning the inference and the asymptotic theory for the standard testing method still hold true. Similar conclusions hold for the non identifiable case. % the asymptotic theory still applies for $p>1$ by readjusting Assumptions~\ref{extra equicont}-\ref{extra equicont cont} accordingly.
\end{rem}

\subsection{Computations of $p$-values} \label{SEC: computation}
The null distribution of the process $g(\cdot)$ cannot be tabulated in general, apart from special cases; \cite{andrews_1993}. To overcome this obstacle we consider two different approaches.  Consider the sup-type test, $g=\sup_{\gamma\in\Gamma}(LM(\gamma))$. By Theorems~\ref{Thm: Test non standard}-\ref{Thm: Test non standard cont}, under $H_0$, $LM(\gamma)$ is a chi-square process with $k_2$ degrees of freedom. 
If the nuisance parameter $\gamma$ is scalar, \cite{davies_1987} proves that the  $p$-value of the sup-test is approximately bounded by 
\begin{equation}
	\mathrm{P}\left[ \sup_{\gamma\in\Gamma_F}(LM(\gamma))\geq M\right] \leq \mathrm{P}(\chi^2_{k_2}\geq M)+VM^{\frac{1}{2}(k_2-1)}\frac{\exp(-\frac{M}{2})2^{-\frac{k_2}{2}}}{\Gamma(\frac{k_2}{2})}\,, \label{Davies bound}
\end{equation}
where $M$ is the maximum of the test statistic $LM_{NT}(\gamma)$,  with $\gamma\in\Gamma_F$ and $\Gamma_F=(\gamma_L,\gamma_1,\dots,\gamma_l,\gamma_U)$ is a grid of values for $\Gamma$. The quantity $V$ is the approximated total variation, defined by 
\begin{equation}
	V=%\int_{L}^{U}\frac{\partial \sqrt{LM(\gamma)}}%^{\frac{1}{2}}}{\partial\gamma}d\gamma=
\norm{LM_{NT}^{\frac{1}{2}}(\gamma_1)-LM_{NT}^{\frac{1}{2}}(\gamma_L)}+
%\norm{LM_{NT}^{\frac{1}{2}}(\gamma_2)-LM_{NT}^{\frac{1}{2}}(\gamma_1)}+ 
\dots + \norm{LM_{NT}^{\frac{1}{2}}(\gamma_U)-LM_{NT}^{\frac{1}{2}}(\gamma_l)}. \nonumber
\end{equation}
Such method is attractive  because of its simplicity and  its  computational speed. This last point is of great importance in network models, especially when the dimension $N$ is large. However, the method suffers from three main drawbacks. First,  \eqref{Davies bound} leads to a conservative test, because usually the $p$-values are smaller than their bound. Second, the results of \cite{davies_1987} hold only for scalar nuisance parameters. Though this 
observation applies to several  models discussed so far, like the STNAR model \eqref{stnar},  more complex models may require inclusion of  more than one nuisance parameter. Finally,  \eqref{Davies bound} cannot be applied to the TNAR model \eqref{tnar}, because $LM(\gamma)$, under the null hypothesis, has to be differentiable \cite[p.~36]{davies_1987}, % see also the discussion in 
\cite[Sec.~4]{hansen_1996}. Following \cite{hansen_1996}, we develop a bootstrap method based on stochastic permutations. % bootstrap approximation.

Define $F(\cdot)$ the distribution function of the process $g$ with $p_{NT}=1-F(g_{NT})$. From Theorem~\ref{Thm: Test non standard} and the Continuous Mapping Theorem (CMT), $p_{NT}\Rightarrow p$, where $p=1-F(g)$ and $p\sim U(0,1)$, under the null.  Hence,  the test rejects  $H_0$ if $ p_{NT}\leq a_{H_0}$, where $a_{H_0}$ is the asymptotic size of the test.  %The test has a nontrivial local power. %Consider an available sample $\omega$. 
Define $\left\lbrace \nu_t: t=1,\dots,T \right\rbrace\sim IIDN(0,1)$, such that $\tilde{S}^\nu_{NT}(\gamma)=\sum_{t=1}^{T}\tilde{s}^\nu_{Nt}(\gamma)$, with $\tilde{s}^\nu_{Nt}(\gamma)=\tilde{s}_{Nt}(\gamma)\nu_t$, is the version of the estimated score perturbed by a Gaussian noise. %The covariance function $\tilde{\Sigma}^\nu_T(\gamma,\gamma)$ is obtained according to  $\tilde{\Sigma}_T(\gamma,\gamma)$, with $\tilde{B}_T(\gamma,\gamma)$ replaced by $\tilde{B}^\nu_T(\gamma,\gamma)=\sum_{t=1}^{T} \tilde{s}^\nu_t(\gamma)\tilde{s}^\nu_t(\gamma)^\prime$.
Similarly, the perturbed score test is defined by  $LM^\nu_{NT}=\tilde{S}^{\nu (2)\prime}_{NT}(\gamma)\tilde{\Sigma}_{NT}^{-1}(\gamma,\gamma)\tilde{S}^{\nu (2)}_{NT}(\gamma)$ and $\tilde{g}_{NT}=g(LM^\nu_{NT})$. Finally, $\tilde{p}_{NT}=1-\tilde{F}_{NT}(g_{NT})$ is the approximation of $p$-values obtained by stochastic permutations, where $\tilde{F}_{NT}(\cdot)$ denotes the distribution function of $\tilde{g}_{NT}$, conditional to the available sample. % $\omega$. 
The following result shows that such a bootstrap approximation provides adequate approximation to the null distribution:
\begin{theorem} \label{Thm: Bootstrap} 
Assume the conditions of Theorems~\ref{Thm. Consistency and Asymptotic Normality nonlinaer PNAR as}, \ref{Thm: Test non standard} hold. Then, $\tilde{p}_{NT_N}-p_{NT_N}=o_p(1)$ and $\tilde{p}_{NT_N}\Rightarrow p$. Moreover, under $H_0$, $\tilde{p}_{NT_N}\xrightarrow{d}U(0,1)$.
\end{theorem}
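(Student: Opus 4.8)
The plan is to show that the score bootstrap correctly reproduces the asymptotic law of the empirical process $\tilde S_T^{(2)}(\gamma)$, and then push this through the continuous functionals defining $LM_T(\gamma)$, $g_T$ and $p_T$. First I would establish that, conditionally on the data, the perturbed score process $T^{-1/2}\tilde S_T^\nu(\gamma)=T^{-1/2}\sum_{t=1}^T\tilde s_t(\gamma)\nu_t$ converges weakly (in the space of bounded functions on $\Gamma$ equipped with the uniform metric) to the same Gaussian process $S(\gamma)$ that appears in Theorem~\ref{Thm: Test non standard}. The finite-dimensional convergence follows from a conditional CLT for the array $\{\tilde s_t(\gamma_j)\nu_t\}$: the $\nu_t$ are $IID$ $N(0,1)$ and independent of the data, so conditionally the summands are independent mean-zero with conditional covariance $T^{-1}\sum_t \tilde s_t(\gamma_1)\tilde s_t^\prime(\gamma_2)=\tilde B_T(\gamma_1,\gamma_2)$, which converges in probability to $B(\gamma_1,\gamma_2)$ by the ergodic theorem and the consistency of $\tilde\phi_T$ (a Lindeberg condition holds because Assumptions~\ref{mixing}--\ref{smoothness non standard} give uniformly bounded $4r$-th moments of $\tilde s_t(\gamma)$, $r>1$). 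Tightness/stochastic equicontinuity over $\gamma\in\Gamma$ is inherited from the Lipschitz-type condition \ref{lipschitz} on $h_t(\gamma)$ together with the moment bound on $\mu_{i,t}$, exactly as in the proof of Theorem~\ref{Thm: Test non standard}; the multiplier $\nu_t$ only rescales the bracketing bounds by a finite factor, which is the standard multiplier-process argument (see \citealp{hansen_1996}, Theorem~2).

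Next I would account for the replacement of the infeasible score by the constrained-estimator score. Writing $\tilde s_t(\gamma)=s_t(\gamma)+[\,\tilde s_t(\gamma)-s_t(\gamma)\,]$ and Taylor-expanding in $\phi$ around $\phi_0$ (valid since $f_i$ is additive in the nonlinear part and $\lambda_{i,t}(\gamma)\ge\beta_0>0$), the difference contributes a term of the form $-\big(T^{-1}\sum_t \partial s_t(\gamma)/\partial\phi^\prime\big)\sqrt T(\tilde\phi_T-\phi_0)\cdot(\nu_t$-average$)$, and because $T^{-1/2}\sum_t\nu_t=O_p(1)$ while the Hessian-type average converges uniformly in $\gamma$, the same projection that produces $\Sigma(\gamma,\gamma)$ from $B(\gamma,\gamma)$ in \eqref{sigma} is reproduced in the bootstrap world. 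Hence $T^{-1/2}\tilde S_T^{\nu(2)}(\gamma)\Rightarrow S^{(2)}(\gamma)$ conditionally, in probability. Combined with the uniform consistency $\tilde\Sigma_T(\gamma,\gamma)\xrightarrow{p}\Sigma(\gamma,\gamma)$ with $\inf_{\gamma}\det\Sigma(\gamma,\gamma)>0$ (Assumption~\ref{smoothness non standard}), the CMT gives $LM_T^\nu(\gamma)\Rightarrow LM(\gamma)$ and then $\tilde g_T=g(LM_T^\nu)\Rightarrow g$ conditionally on the data, where $g$ is the same limit as in Theorem~\ref{Thm: Test non standard}. Consequently $\tilde F_T(\cdot)$, the conditional law of $\tilde g_T$, converges uniformly to $F(\cdot)$, the law of $g$ (uniform convergence of distribution functions follows from weak convergence plus continuity of $F$, which holds because $g$ has a continuous distribution as a nondegenerate functional of a Gaussian process).

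Finally I would conclude. Since $g_T\Rightarrow g$ by Theorem~\ref{Thm: Test non standard} and $\tilde F_T\to F$ uniformly in probability, $\tilde p_T=1-\tilde F_T(g_T)=1-F(g_T)+o_p(1)=p_T+o_p(1)$, which is the first claim; then $\tilde p_T\Rightarrow 1-F(g)=p$ by the CMT, and under $H_0$ Theorem~\ref{Thm: Test non standard} together with the probability integral transform gives $p\sim U(0,1)$, hence $\tilde p_T\xrightarrow{d}U(0,1)$. The main obstacle I anticipate is the conditional stochastic equicontinuity of the multiplier process $T^{-1/2}\tilde S_T^\nu(\gamma)$ uniformly over $\gamma\in\Gamma$: one must control the bracketing entropy of the class $\{\,\tilde s_t(\cdot)\,\}$ with the Lipschitz modulus $\mu_{i,t}$ random and possibly $\gamma$-dependent, and verify that multiplication by $\nu_t$ preserves the maximal inequality in probability rather than almost surely; this is where Assumption~\ref{lipschitz} (with exponent $\delta$ and $L^{4r}$-control of $\mu_{i,t}$, $r>1$) and the fixed-$N$ restriction are essential, and it is the step that requires the most care.
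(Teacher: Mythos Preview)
Your overall scheme---establish conditional weak convergence of the multiplier score process, push through the continuous functionals to obtain $\tilde F_T\to F$ uniformly, and conclude $\tilde p_T=p_T+o_p(1)$---is the same as the paper's, and your final paragraph is essentially identical to how the paper closes the argument. The handling of the estimation-effect term, however, contains a real error.

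You write the contribution of $\tilde s_t(\gamma)-s_t(\gamma)$ to $T^{-1/2}\tilde S^\nu_T(\gamma)$ as, roughly,
\[
-\Bigl(T^{-1}\sum_t \partial s_t(\gamma)/\partial\phi'\Bigr)\,\sqrt{T}(\tilde\phi_T-\phi_0)\cdot\bigl(\text{$\nu_t$-average}\bigr),
\]
factoring the multipliers out of the Hessian-type sum. A Taylor expansion does not give this: each summand carries its own $\nu_t$, so the leading remainder is
\[
\Bigl(T^{-1}\sum_t \tfrac{\partial s_t(\gamma)}{\partial\phi'}\,\nu_t\Bigr)\,\sqrt{T}(\tilde\phi_T-\phi_0),
\]
and the bracketed factor is a $T^{-1}$-scaled multiplier sum with conditional mean zero, which converges to zero uniformly in $\gamma$, not to $H(\gamma,\gamma)$. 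This is exactly the paper's route: it decomposes $\tilde S^\nu_T(\gamma)=S^\nu_T(\gamma)+\bar S_T(\gamma)$ with $S^\nu_T(\gamma)=\sum_t s_t(\gamma)\nu_t$, bounds $\sup_\gamma\norm{T^{-1/2}\bar S_T(\gamma)}_1$ by $\sup_\gamma\vertiii{T^{-1}G^\nu_T(\gamma)}_1\,\norm{\sqrt{T}(\tilde\phi-\phi)}_1$ with $G^\nu_T(\gamma)=\beta_0^{-2}\sum_{t,i}Y_{i,t}Z_{i,t}(\gamma)Z^\prime_{i,t}(\gamma)\nu_t$, and then shows $T^{-1}G^\nu_T(\gamma)\Rightarrow_p 0$ by the multiplier arguments of \cite{hansen_1996}. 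Your claim that ``the same projection that produces $\Sigma(\gamma,\gamma)$ from $B(\gamma,\gamma)$ is reproduced in the bootstrap world'' is therefore unsupported: conditionally on the sample there is no re-estimation of $\phi$, so the estimation-effect term is asymptotically \emph{negligible}, not a projection term. With that correction, the multiplier CLT gives $T^{-1/2}S^\nu_T(\gamma)\Rightarrow_p S(\gamma)$ (the paper simply cites \cite{hansen_1996} for this step rather than redoing finite-dimensional convergence plus equicontinuity, though your direct argument for that part is fine), hence $T^{-1/2}\tilde S^\nu_T(\gamma)\Rightarrow_p S(\gamma)$, and the remainder of the proof proceeds as in your last paragraph.
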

%The proof is postponed to Appendix~\ref{Proof Thm. Bootstrap}.
The proof of this theorem is given in Appendix \ref{Proof Thm. Bootstrap}.
An analogous result is obtained for continuous-valued network models and it is omitted.
Although $\tilde{p}_{NT}$ is close to $p_{NT}$ asymptotically, the conditional distribution $\tilde{F}_{NT}(\cdot)$ is not observed. %as well as $\tilde{p}_{NT}$. 
We can approximate this by Monte Carlo simulations following  (i)-(iv) of \cite[p.~419]{hansen_1996}. A Gaussian sequence $\left\lbrace \nu_{t,j} : t=1,\dots,T\right\rbrace\sim IIDN(0,1)$ is generated  and at each iteration, compute  the quantities $\tilde{S}^{\nu_j}_{NT}(\gamma)$, %$\tilde{B}^{\nu_j}_T(\gamma, \gamma)$,
$LM^{\nu_j}_{NT}(\gamma)$ and $\tilde{g}^j_{NT}=g(LM^{\nu_j}_{NT}(\gamma))$, for $j=1,\dots,J$. 
%This procedure provides a sample  of $J$ observations form the conditional distribution $\tilde{F}_{NT}(\cdot)$. 
Hence, an approximation of the $p$-values is obtained 
%computing the frequency of observations greater then the value of the test statistic, $g_{NT}$, computed on the available sample. In symbols, 
by $\tilde{p}^J_{NT}=J^{-1}\sum_{j=1}^{J}I(\tilde{g}^j_{NT}\geq g_{NT})$. The Glivenko-Cantelli Theorem implies that   $\tilde{p}^J_{NT}\xrightarrow{p}\tilde{p}_{NT}$, as $J\to\infty$, and choosing $J$ large enough allows to make $ \tilde{p}^J_{NT}$ arbitrary close to $\tilde{p}_{NT}$. %see \cite[pp.~3-8]{armillotta_et_al_rpackage_2022} and \cite{armillotta_2022_EJS} for further details regarding computational issues.

%
%In practice, when $\Gamma$ is  a continuous parameter space, then the test statistics $LM^{\nu_j}_{NT}(\gamma)$ are evaluated on a grid of values for $\gamma\in\Gamma_F$, for the $j^{\text{th}}$ iteration. Alternatively, when $g(LM^{\nu_j}_{NT}(\gamma))=\sup_{\gamma\in \Gamma}(LM^{\nu_j}_{NT}(\gamma))$ the test statistic can be optimized with respect to $\gamma$. For more computational details see \cite[pp.~3-8]{armillotta_et_al_rpackage_2022}.

The proposed  bootstrap methodology provides a direct approximation of the $p$-values instead of an approximate bound, given by  \eqref{Davies bound}. Furthermore,  it is suitable even when testing linearity in the 
presence of more than one nuisance parameter. 
%$\gamma\in\Gamma\subset \R^{m^*}$ and it is also applicable to the case of threshold models, such as \eqref{tnar}. 
As a final remark, the stochastic permutation bootstrap method has been preferred instead of parametric bootstrap as it requires only the generation of standard univariate normal sequences at each step. This reduces  considerably the computational burden of generating a $N\times 1$ vector of observations at each step of the procedure. This is especially relevant in the case of count data, since the simulation of copula   
can  be time consuming. 
%Nevertheless, any bootstrap procedure leads to good approximations of the asymptotic $p$-values for large number of iterations $J$, this entails that such methods are still computationally cumbersome %, compared to the Davies bound
%and, in particular, when the dimension of the network $N$ is large, they may become unfeasible. 

\begin{rem} 		\rm
Following up Remark \ref{Remark:TNAR}, note that the previous results do not apply to TNAR model \eqref{tnar}, if $N \rightarrow \infty$. 
The stochastic equicontinuity and uniform convergence assumptions  require $h_{t}(\gamma)$ to be continuous with respect to $\gamma$ which is  not satisfied for \eqref{tnar}.  For instance, when trying to establish stochastic equicontinuity of the score, it can be proved that for \eqref{tnar}, the Lipschitz property \eqref{lipschitz score} can be obtained in  expectation but with magnitude $\lambda=1/(2q)$. However, to establish the result of Theorem~\ref{Thm: Test non standard} we need $\lambda>m^*/q$ \cite[p.~357]{hansen1996stochastic}.  This can happen only when $m^*<1/2$ but for the  TNAR model $m^*=1$, so the condition is not satisfied. Supp. Mat.~\ref{threshold models}  provides properties, estimation and testing  for  TNAR models when  $N$ is fixed.
\end{rem}

\section{Simulations}	\label{SEC: simulations}

We provide two different cases for the network generating mechanism to verify empirically the above results. Additional results are reported in Supp. Mat. \ref{SuppSec:Additionalsimulations}.

\begin{exnet} \label{sbm}(Stochastic Block Model (SBM)). First consider  the stochastic block model, see \cite{wang1987} and \cite{nowicki_2001}, among others. A block label $(k = 1,\dots, K)$ is assigned for each node with equal probability and $K$ is the total number of blocks. Then, set $\mathrm{P}(a_{ij}=1) = N^{-0.3}$ if $i$ and $j$ belong to the same block, and $\mathrm{P}(a_{ij}=1)= N^{-1}$ otherwise. Practically, the model assumes that nodes within the same block are more likely to be connected with respect to nodes from different blocks. We assume $K\in \left\lbrace 2,5 \right\rbrace $.
\end{exnet}

\begin{exnet}  \label{erdos-renyi} (Erd\H{o}s-R\'{e}nyi  (ER) Model). Introduced by \cite{erdos_1959} and \cite{gilbert_1959}, this graph model is simple. The network is constructed by connecting $N$ nodes randomly. Each edge is included in the graph with probability $p$, independently from every other edge. In this example we set  $p=\mathrm{P}(a_{ij}=1)=N^{-0.3}$. 
\end{exnet} 

Consider testing  $H_0: \gamma=0$ versus $H_1: \gamma>0$, for models \eqref{nonlinear} and \eqref{nonlinear_cont}. Under $H_0$, the model reduces to \eqref{nar_1}. For the continuous-valued case, we test linearity of the NAR against the nonlinear version in \eqref{nonlinear_cont}. The random errors $\xi_{i,t}$ are simulated from $N(0,1)$. For the data generating process of the vector $Y_t$, the initial value $Y_0$ is randomly simulated according to its stationary distribution \cite[Prop.~1]{zhu2017}, which is Gaussian  with mean $\mu=\beta_0(1-\beta_1-\beta_2)^{-1}1$ and covariance matrix $\mathrm{vec}[\mathrm{Var}(Y_t)]=(I_{N^2}-G \otimes G)^{-1}\mathrm{vec}(I)$, where $\mathrm{vec}(.)$ denotes the vec operator  and $\otimes$ denotes the Kronecker product. 
%Once the starting value $Y_0$ is given, the process $\left\lbrace Y_t: t=1,\dots,T \right\rbrace $ is generated recursively according to \eqref{nar_1} and $Y_t=\lambda_t+\xi_t$, coming from \eqref{nonlinear_nar}. 
We set $\theta^{(1)}=(\beta_0,\beta_1,\beta_2)^\prime=(1.5,0.4,0.5)^\prime$. This procedure is replicated $S=1000$ times. Then,  $\tilde{\theta}^{(1)}$ is computed for each replication. %In this case, the resulting estimator is the ordinary least squares (OLS), which has a closed form \cite[eq.~2.9]{zhu2017}. 
By Proposition~\ref{nar chi}, the quasi-score statistic \eqref{score test} is evaluated and compared with the critical values of a $\chi^2_1$ distribution. 
%The fraction of cases (out of $S$) in which $H_0$ is rejected yields  the empirical size. The same data generating process described above is employed to simulate observations $Y_t$ under the alternative; see \eqref{nonlinear_cont}. In this case, the fraction of cases in which $H_0$ is rejected defines the empirical power of the test. 
Results of this simulation study 
are reported in Table~\ref{sim_gauss}. The empirical size of the test does not exceed the nominal significance level in all cases considered . When $N$ is small and $T$ is large enough, the power of the test statistics tends to 1. In the case of small temporal size $T$ and large  network dimension $N$, the approximation suffers. This is  expected and is explained  by  i) the double asymptotic results of Section~\ref{SEC: linearity test} hold when  $T_N\to\infty$ as $N\to\infty$--see  the proof of Lemma  \ref{Lem. Consistency and Asymptotic Normality nonlinaer PNAR}; ii) the temporal dependence induced by  the error term requires a sufficiently large $T$ for successful model identification;  iii) the quasi-likelihood might not approximate the true likelihood, see also  \cite[Sec.~4.1]{armillotta_fokianos_2021}. When both $N,T$ are large enough, the test approximates adequately its asymptotic distribution. %such convergence can be detected as well, at a slower rate.
As expected, when $\gamma=1$, the test statistic's power improves, because $\gamma$ 
is far from 0.  Improved performance of the test statistic is observed when either  $K=5$ or  when the Erd\H{o}s-R\'{e}nyi model is employed. 
Histograms and  Q-Q plots of the simulated score test against the $\chi^2_1$ distribution are plotted in Supp. Mat. Fig.~\ref{qq}. For all the network models, the histogram is positively skewed and approximates satisfactory  the $\chi^2_1$ distribution. The Q-Q plots lie into the confidence bands quite satisfactory and the  empirical mean and variance of the simulated score tests are close to 1 and 2, respectively. Further simulations results for the integer-valued case can be found in Supp. Mat.~\ref{SuppSec:Additionalsimulations} together with simulation results related to the non-identifiable case.

\begin{table}[h]
\centering
\caption{Empirical size and power   of the test statistics \eqref{score test} for testing $H_0:\gamma=0$ versus $H_1:\gamma>0$, in model \eqref{nonlinear_cont}, with $S=1000$ simulations, for various values of $N$ and $T$. %For the first ten rows the network is derived from the SBM specified in Example~\ref{sbm}. The last five rows show the reusults for 
	Data are continuous-valued and generated from the linear model \eqref{nar_1}.}
%The empirical power is also reported for data generated from model \eqref{nonlinear_cont} with  %$\gamma=\left\lbrace 0.5, 1 \right\rbrace $.}

\hspace*{-1cm}
%\scalebox{0.785}{
%\scalebox{0.75}{
	\begin{tabular}{c|c|c|c|ccc|ccc|ccc}\hline\hline
		Model &\multicolumn{3}{c|}{} &\multicolumn{3}{c|}{Size}  & \multicolumn{3}{c|}{Power $(\gamma=0.5)$} & \multicolumn{3}{c}{Power $(\gamma=1)$} \\\hline
		& $K$ & $N$ & $T$ & 10\% & 5\% & 1\% & 10\% & 5\% & 1\% & 10\% & 5\% & 1\% \\\hline
		\multirow{5}{*}{SBM} & \multirow{5}{*}{2} & 4 & 500 & 0.093 & 0.043 & 0.009 & 1.000 &  1.000 & 0.999 & 1.000 & 1.000 & 1.000 \\
		&& 500 & 10  & 0.019 & 0.004 & 0.000 & 0.158 & 0.063 & 0.002 & 0.164 & 0.067 & 0.001 \\%\addlinespace[-0.4ex]
		&& 200 & 300 & 0.110 & 0.044 & 0.006 & 0.495 &  0.337 & 0.125 & 0.994 & 0.990 & 0.933 \\
		&& 500 & 300 & 0.101 & 0.048 & 0.009 & 0.716 &  0.583 & 0.288 & 1.000 & 1.000 & 0.995 \\%\addlinespace[-0.5ex]
		&& 500 & 400 & 0.105 & 0.050 & 0.006 & 0.751 & 0.619 & 0.311 & 1.000 & 1.000 & 0.999  \\\hline%\addlinespace[-0.4ex] \hline
		\multirow{5}{*}{SBM} &\multirow{5}{*}{5} & 10 & 500 & 0.119 & 0.062  & 0.015 & 1.000 & 1.000 & 1.000 & 1.000 & 1.000 & 1.000 \\%\addlinespace[-0.5ex]
		%		&& 500 & 10  & 0.044 & 0.011 & 0.000 & 0.150 & 0.052 & 0.000 & 0.155 & 0.053 & 0.000 \\\addlinespace[-0.4ex]
		&& 200 & 300 & 0.091 & 0.051 & 0.006 & 0.667 &  0.542 & 0.268 & 1.000 & 1.000 & 1.000 \\%\cline{2-13}
		&& 500 & 300 & 0.098 & 0.047 & 0.006 & 0.847 &  0.748 & 0.448 & 1.000 & 1.000 & 1.000  \\%\addlinespace[-0.5ex]
		&& 500 & 400 & 0.086 & 0.039 & 0.006 & 0.885 & 0.807 & 0.541 & 1.000 & 1.000 & 1.000  \\
		\hline
		\hline
		\multirow{5}{*}{ER}& \multirow{5}{*}{-} & 30 & 500 & 0.066 & 0.029 & 0.004 & 0.272 & 0.156 & 0.048 & 0.888 & 0.802 & 0.565 \\
		&& 500 & 30  & 0.026 & 0.005 & 0.000 & 0.392 & 0.235 & 0.044 & 0.935 & 0.847 & 0.523 \\%\addlinespace[-0.4ex]
		&& 200 & 300 & 0.085 & 0.031 & 0.004 & 0.411 &  0.272 & 0.080 & 0.974 & 0.949 & 0.798 \\
		&& 500 & 300 & 0.082 & 0.042 & 0.004 & 0.649 &  0.476 & 0.192 & 0.999 & 0.998 & 0.974 \\%\addlinespace[-0.5ex]
		&& 500 & 400 & 0.089 & 0.051 & 0.008 & 0.666 & 0.519 & 0.206 & 1.000 & 1.000 & 0.992  \\%\addlinespace[-0.4ex] 
		\hline
		\hline
	\end{tabular}
	%}
%\hspace*{-1cm}
\label{sim_gauss}
\end{table}

\section{Empirical example} \label{SEC: applications}

We discuss  an example  of the testing methods for integer data. 
For an example  concerning continuous data see Supp. Mat.~\ref{wind}.
The dataset consists of monthly number of burglaries on the south side of Chicago from 2010-2015, i.e. $T=72$ and  $N=552$ census block groups of Chicago; see \cite{clark_2021}, \href{https://github.com/nick3703/Chicago-Data}{\url{https://github.com/nick3703/Chicago-Data}}. 
To predict future number of burglaries, the ordinary 
Vector Autoregressive (VAR) model can be applied  but we should take into account that data are counts and dimensionality issues  because  the  number of VAR  parameters is large compared to the sample size. A simple  method, like fitting AR(1) models separately to each individual census blocks, is applicable but still requires  $2N$ parameters  to be fitted. More crucially,  the relationship across different time series is not taken into account. To overcome such issues, we  appeal to  geographic network information between blocks to fit a  PNAR model which takes into account dependence among count valued data. An undirected network structure is defined  by geographical proximity: two blocks are connected if they share at least a border. The density of this network is 1.74\%. The median number of connections is 5. The QMLE is employed for fitting  linear PNAR model \eqref{nar_1}. The results are summarized in Table \ref{chicago_results}. 
%The estimated parameters for the linear model \eqref{nar_1} are highly significant. 
The magnitude of the network effect $\beta_1$ shows that  an increasing number of burglaries in a block can lead to a growth in the same type of crime committed in a neighborhood  area. The effect of the lagged variable has a upwards impact on the number of burglaries, as well.
We evaluate the out-of-sample forecasting performance of the linear PNAR(1) model versus a baseline AR(1) model fitted separately to each individual census block. We evaluate the one step ahead forecast by computing its Root Mean Square Error (RMSE). The RMSE for the PNAR model is 0.038. This  is considerably smaller than the RMSE obtained by  the AR(1) models (which is 0.167). In conclusion,  the PNAR model gives  significant accuracy improvement of the one-step prediction  and at the same time  achieves   parsimony. We apply now the proposed linearity tests.
%The maximum number of monthly burglaries for a census block is 17.
A quasi-score linearity test is computed according to \eqref{score test}, by using the asymptotic chi-square test, for the nonlinear model \eqref{nonlinear}, testing $H_0: \gamma=0$ vs. $ H_1: \gamma>0$. We also test linearity against the presence of smooth transition effects, as in \eqref{stnar}, with $H_0: \alpha=0$ vs. $ H_1: \alpha>0$.  %Following the suggestion of \cite[Sec.~2]{haggan_ozaki_1981}, the values of $\gamma$ should be taken in a range such that $k(X_{i,t-1})=\exp(-\gamma X^2_{i,t-1})$ is different from both 0 and 1 for most values of $X_{i,t-1}$, $i=1,\dots,N$. As an indicative value, we take the overall median of $X_{i,t}$, say $\tilde{X}$, and select the extremes of the grid such that $0.1 \lesssim k(\tilde{X}) \lesssim 0.9$. In this way,
A grid of 100 equidistant values in an interval of values $\Gamma_F=[\gamma_L,\gamma_U]$  %\equiv[0.001,3]$ %$\Gamma_F\equiv[0.01,100]$ 
is selected for the nuisance parameter $\gamma$, where the extremes are defined as in \cite[p.~9]{armillotta_et_al_rpackage_2022}. 
According to the results of Theorem~\ref{Thm: Test non standard}, the $p$-values are computed with the Davies bound approximation \eqref{Davies bound} for the test $\textrm{sup}LM_T=\sup_{\gamma\in \Gamma_F} LM_T(\gamma)$ as well as through  bootstrap approximation procedure.
The number of bootstrap replications is  $J=299$.
Finally, a linearity test against threshold effects, as in \eqref{tnar}, is also performed, which leads to the test $H_0:\alpha_0=\alpha_1=\alpha_2=0$ vs. $ H_1: \alpha_l>0$, for some $l=0,1,2$. A feasible range values for the non identifiable threshold parameter has been considered as in \cite[p.~11]{armillotta_et_al_rpackage_2022}. %, in order to determine , we take a grid of 10 equidistant values from the the interval between the minimum of $10\%$ quantiles and the maximum of $90\%$ quantiles of the empirical distributions of $X_{i,t}$, for $i=1,\dots,N$. %follow \cite{andrews_1993} and 
%compute the quantiles at $10\%$ and $90\%$ of the empirical distribution for the process $\left\lbrace X_{i,t} : t=1,\dots,T \right\rbrace $, at each $i=1,\dots,N$. Then, we take the minimum and the maximum of the obtained quantiles as the extremes of $\Gamma_F$, %Then, we take the median of the results, leading to $\gamma\in [,]$, % $\gamma\in [0.5263,2.0194]$,
% from which a grid of 10 equidistant values is picked.
From Table \ref{chicago_results}, the linearity test against \eqref{nonlinear} is rejected at standard levels.
This gives an intuition for possible nonlinear drifts in the intercept. Davies bound gives evidence in favour of STNAR effects  at 5\% level. Conversely, bootstrap sup tests reject nonlinearity coming from both smooth \eqref{stnar} and abrupt transitions \eqref{tnar} models. We conclude that 
there is no clear evidence of regime switching effect. 

%In addition, the linearity is also rejected when it is tested against the STNAR model \eqref{stnar}, at the usual significance levels, for the Davies bound and the bootstrap sup-type test, and at $\alpha_{H0}=0.1$, for the mean-type bootstrap. However, since usually the Davies bound tends to be conservative, this provides a strong indication towards the rejection. By contrast, both the bootstrap tests do not recommand to reject the null hypothesis, when the linearity is opposed to the TNAR model \eqref{tnar}. Such results suggest that some type of nonlinearity should be present, involving smooth changes in the parameters, but excluding abrupt shifts.

\begin{table}[h]
\centering
\caption{QMLE estimates of the linear model \eqref{nar_1} for Chicago burglaries counts. Standard errors in brackets.
	Linearity is tested against the nonlinear model \eqref{nonlinear}, with $\chi^2_1$ asymptotic test \eqref{score test}; against the STNAR model \eqref{stnar}, with $p$-values computed by (DV) Davies bound \eqref{Davies bound}, bootstrap $p$-values of sup-type test; %($\tilde{p}^J_{NT}$);
	and versus TNAR model \eqref{tnar}. }
%\hspace*{-1cm}
%	\scalebox{0.75}{
	\begin{tabular}{cccc}\hline \hline
		
		Models & $\beta_0$ & $\beta_1$ & $\beta_2$   \\\hline
		\eqref{nar_1} & 0.455  & 0.322 & 0.284  \\\hline
		SE & (0.022) & (0.013) & (0.008) \\
		\hline
		\hline
		Models & Chi-sq. & DV & Bootstrap \\
		\eqref{nonlinear} & 8.999 & - & -  \\
		\eqref{stnar} & - & 0.038 & 0.515 \\
		\eqref{tnar} & - & - & 0.498 \\
		\hline
		\hline
	\end{tabular}
	%	}
%\hspace*{-1cm}
\label{chicago_results}
\end{table}

\appendix
\section{Appendix} \label{SEC appendix}
\renewcommand{\thefigure}{A-\arabic{figure}}
\renewcommand{\thetable}{A-\arabic{table}}
\setcounter{figure}{0}
\setcounter{table}{0}
\renewcommand{\theproposition}{\Alph{section}.\arabic{proposition}}
\renewcommand{\thelemma}{\Alph{section}.\arabic{lemma}}

\subsection{Proof of Theorem \ref{Thm. Ergodicity of nonlinear model}}
\label{Proof Ergodicity of nonlinear model}
%Theorem~\ref{Thm. Ergodicity of nonlinear model} follows by \cite[Thm.~1]{tru2021}, provided that Assumptions A1-A3 therein are satisfied. 
Consider the $N\times 1$ Markov chain
%$2N\times 1$ vector $U_t=(Y_t^\prime, \lambda_t^\prime)^\prime$ which satisfies $U_t=F(U_{t-1},N_t)$,
$Y_t=F(Y_{t-1},N_t)$ where $\left\lbrace N_t, t\in\Z\right\rbrace $ defined in \eqref{nonlinear_pnar} is a sequence of $IID$ $N$-dimensional count processes such that $N_{i,t}$, for $i=1\dots,N$, are Poisson processes with intensity 1. $F(\cdot)$ is a measurable function such that $F(y,N_t)=(N_t[f(y)])$ %$F(u,N_t)=(N_t[f(u)]^\prime,f(u)^\prime)^\prime$
and $f(\cdot)$ is defined in \eqref{nonlinear_pnar} for $y\in \N^N$. %$u=(y^\prime,\lambda^\prime)^\prime\in E\equiv\N^N\times\R_+^N$.
By \eqref{contraction}, $f(y)\preceq C+Gy$, where $C=f(0)$, we have %\cite[pp~74]{chri2014}. This leads to $f(u)\preceq C+(G,0_{N\times N})u$ and
%where $C_0=(C^\prime,0_N^\prime)^\prime$ and 
%\begin{equation}
%	G_0=\begin{pmatrix}
	%		G & 0_{N\times N} \\
	%		G & 0_{N\times N}
	%	\end{pmatrix}
%	\nonumber
%\end{equation}
%	\begin{equation}
	%		\E\norm{F(u,N_1)}_1=21^\prime f(u)\leq 21^\prime\left[ C+\left( G,0_{N\times N}\right) u\right] <\infty
	%		\nonumber
	%	\end{equation}
\begin{equation}
	\E\norm{F(y,N_1)}_1=1^\prime f(y) \leq 1^\prime\left[ C+Gy\right] <\infty
	\nonumber
\end{equation}
since the expectation of the Poisson process is $\E N_1(\lambda)=\lambda$. %This satisfies A1.
Moreover, for  $y,y^*\in \N^N$ %$u,u^*\in \N^N$
%	\begin{equation}
	%		\E\norm{F(u,N_1)-F(u^*,N_1)}_{vec}\preceq\begin{pmatrix}
		%			G & 0_{N\times N} \\
		%			G & 0_{N\times N}
		%		\end{pmatrix}\norm{u-u^*}_{vec}\coloneqq G_0\norm{u-u^*}_{vec}
	%		\nonumber
	%	\end{equation}
\begin{equation}
	\E\norm{F(y,N_1)-F(y^*,N_1)}_{vec}\preceq G\norm{y-y^*}_{vec}
	\nonumber
\end{equation}
as $\E\norm{N_1(\lambda_1)-N_1(\lambda_2)}_{vec}=\norm{\lambda_1-\lambda_2}_{vec}$. Note that 
$\rho(G)<1$, %as $\beta_1+\beta_2<1$ and $\rho(W)\leq \vertiii{W}_\infty=1$;
%so A2 is proved.
Therefore, by \cite[Thm.~1]{tru2021}, $\{Y_t,~ t \in \mathbb{Z} \}$ is a stationary  and ergodic process with $\mbox{E}\norm{Y_t}_1  < \infty$.
Now set $\delta>0$ such that $\rho(G_{\delta})<1$, if $G_{\delta}=(1+\delta)G$. %$G_{0,\delta}=(1+\delta)G_0$
From \cite[Lemma~2]{tru2021} we have that
%	\begin{equation}
	%		\lnorm{N_t[f(u)]}_{r,vec}\preceq(1+\delta)\norm{f(u)}_{vec}+b1 \nonumber
	%	\end{equation}
\begin{equation}
	\lnorm{N_t[f(y)]}_{a,vec}\preceq(1+\delta)\norm{f(y)}_{vec}+b1\preceq C_{\delta b}+G_\delta\norm{y}_{vec}\,  \nonumber
\end{equation}
by recalling that $\norm{f(y)}_{vec}\preceq C+G\norm{y}_{vec}$ and $\rho(G_{\delta})<1$, where $b>0$ and $C_{\delta b}=(1+\delta)C+b$. %$\norm{f(u)}_{vec}\preceq C+(G,0_{N\times N})\norm{u}_{vec}$
%	 we get 
%	\begin{align}
	%		\lnorm{F(u,N_1)}_{r,vec}&\preceq\begin{pmatrix}
		%			(1+\delta)C+b1 \\
		%			C
		%		\end{pmatrix}+\begin{pmatrix}
		%			(1+\delta)G & (1+\delta)0_{N\times N} \\
		%			G & 0_{N\times N}
		%		\end{pmatrix}\norm{u}_{vec} \nonumber\\ 
	%		&\preceq\begin{pmatrix}
		%			(1+\delta)C+b1 \\
		%			C
		%		\end{pmatrix}+(1+\delta)G_0\norm{u}_{vec} \nonumber
	%	\end{align}
%and $\rho(G_0)=\rho(G)<1$.
Then, by \cite[Thm.~1]{tru2021} %by applying the function $\phi(\cdot)$  in A3 as the identity 
we get $\mbox{E}\norm{Y_t}_a^a  < \infty$, $\forall a\geq1$. Theorem~\ref{Thm. Ergodicity of nonlinear continuous model} follows analogously.
\qed 

\subsection{Proof of Theorem~\ref{Thm. Ergodicity of nonlinear model div N}}
\label{Proof of Thm. Ergodicity of nonlinear model div N}
For any arbitrary $N$, $\E(Y_t)=\E(\lambda_t)=\E[f(Y_{t-1}) ]\preceq c1 + G\E(Y_{t-1})$, by \eqref{contraction}, where $\max_{i\geq 1}f_i(0,0)=c>0$. %and $C=c 1$.
Define $\mu=c(1-\mu_1-\mu_2)^{-1}$. Note that $ \rho(G) \leq \vertiii{G}_{\infty} \leq \mu_{1} \vertiii{W}_\infty+\mu_{2}
\leq \mu_{1}+\mu_{2}$. This is so because $\vertiii{W}_\infty=1$, by construction.
Since $\mu_1+\mu_2 <1$ we have $\vertiii{G}_\infty <1$ 
and by \cite[19.16(a)]{seber2008}  $(I-G)^{-1}$ exists. Moreover, $(I-G)^{-1}1=(1-\mu_1-\mu_2)^{-1}1$, implying that $\E(Y_t)\preceq \mu1$ and $\max_{i\geq 1}E(Y_{i,t})\leq \mu$. It holds that $\xi_t=Y_t-\lambda_t$, $\E\norm{\xi_{i,t}}\leq 2 \E(Y_{i,t}) \leq 2\mu <\infty$. %=C_{\xi,1}.
Furthermore, by using backward substitution and  \eqref{contraction}, we have $Y_t\preceq\mu1+\sum_{j=0}^{\infty}G^j\xi_{t-j}=\sum_{j=0}^{\infty}G^j(c1+\xi_{t-j})$.

From the definition in Supp. Mat.~\ref{Supp:Stationarity} we have
that $\mathcal{W} = \left\lbrace \omega\in\R^\infty:\omega_\infty=\sum\norm{\omega_i}<\infty\right\rbrace $, where $\omega = (\omega_i\in\R: 1\leq i \leq \infty)^\prime\in\R^\infty$. For each $\omega\in\mathcal{W}$, let $\omega_N = (\omega_1,\dots,\omega_N)^\prime \in\R^N$ be its truncated $N$-dimensional version. For any $\omega\in\mathcal{W}$, $\E\norm{c1+ \xi_t}_{vec}\preceq (c+ 2\mu)1 = C1<\infty$, %C_{\xi,1}
$G^j1=(\mu_1+\mu_2)^j1$ and $\E\norm{\omega_N^\prime Y_t}\leq\E(\norm{\omega_N}_{vec}^\prime \sum_{j=0}^{\infty}G^j(c1+\xi_{t-j}))\leq C \omega_\infty\sum_{j=0}^{\infty}(\mu_1+\mu_2)^j=C_*$, since $\mu_1+\mu_2<1$, which implies that $Y_t^\omega=\lim_{N\to\infty}\omega_N^\prime Y_t<\infty$ with probability 1. Moreover, $Y_t^\omega$ is strictly stationary and therefore $\left\lbrace Y_t \right\rbrace $ is strictly stationary, following Supp. Mat.~\ref{Supp:Stationarity}. To verify the uniqueness of the solution %$Y_t=F(Y_{t-1}, N_t)$, 
take another stationary solution $Y^*_t$ %=F(Y^*_{t-1}, N_t)$
to the PNAR model. % with finite moments of any order. Then, $\E(\tilde{Y}_t)\preceq C_11$, where $C_1$ is a constant and
Then $\E\norm{\omega_N^\prime Y_t-\omega_N^\prime Y^*_t}\leq \norm{\omega_N}_{vec}^\prime \E\norm{N_t(\lambda_t)-N_t(\lambda^*_t)}_{vec}\leq\norm{\omega_N}_{vec}^\prime G\E\norm{Y_{t-1}-Y^*_{t-1}}_{vec}= 0$, by infinite backward substitution, for any $N$ and weight $\omega$. So $Y_t^\omega=Y_t^{*,\omega}$ with probability one. In addition, $\mu_1w_i^\prime+\mu_2e_i^\prime=e_i^\prime G$ and condition \eqref{contraction} is equivalent to require, for $i=1,\dots,N$,  a.s.
\begin{align}
	\norm{\lambda_{i,t}-\lambda^*_{i,t}}&=e_i^\prime\norm{f(Y_{t-1})-f(Y^*_{t-1})}_{vec}\leq \left( \mu_1w_i^\prime+\mu_2e_i^\prime\right) \norm{Y_{t-1}-Y_{t-1}^*}_{vec} \nonumber \\
	&=\mu_1\sum_{j=1}^{N}w_{ij}\norm{Y_{j,t-1}-Y_{j,t-1}^*}+\mu_2\norm{Y_{i,t-1}-Y_{i,t-1}^*} \nonumber
\end{align}
which leads a.s. to $\lambda_{i,t}=f_i(X_{i,t-1},Y_{i,t-1})\leq c+ \mu_1X_{i,t-1} + \mu_2 Y_{i,t-1}$. %, for every $i=1,\dots,N$.
Then, when $N$ is increasing, \cite[Prop.~2]{armillotta_fokianos_2021} applies directly by a recursion argument \cite[Sec.~S-1.1]{armillotta_fokianos_2021} and all moments of the process $\left\lbrace Y_t \right\rbrace $ are uniformly bounded.
\qed

\subsection{Proof of Theorem~\ref{Thm. Ergodicity of nonlinear continuous model div N}}
\label{Proof of Thm. Ergodicity of nonlinear model div N cont}
Similar to \ref{Proof Ergodicity of nonlinear model}, by assuming that $\max_{i\geq 1}\E\norm{\xi_{i,t}}^a\leq C_{\xi,a}<\infty$, the first $a^{\text{th}}$ moments of $Y_t$ are uniformly bounded.  By \eqref{contraction}, $\norm{Y_t}_{vec}\preceq\sum_{j=0}^{\infty}G^j(c1+\norm{\xi_{t-j}}_{vec})$, where $c=\max_{i\geq 1}\norm{f_i(0,0)}$. Analogously to \ref{Proof of Thm. Ergodicity of nonlinear model div N}, since $G^j1=(\norm{\mu_1}+\norm{\mu_2})^j1$ and $\norm{\mu_1}+\norm{\mu_2}<1$, $\left\lbrace Y_t \right\rbrace $ defined as in \eqref{nonlinear_nar}, is strictly stationary, following \cite[Def.~1]{zhu2017}. The uniqueness of the solution follows by $\norm{Y_t-Y^*_t}_{vec}=\norm{\lambda_t-\lambda^*_t}_{vec}$ and the infinite backward substitution argument.
\qed

\subsection{Proof of Theorem~\ref{Thm: Test non standard}}
\label{Proof Test non standard}
%The proof of the theorem is divided in two part.
First we show the weak convergence of $(NT)^{-1/2}S_{NT}(\gamma)$ to a Gaussian process with kernel $B(\gamma_1, \gamma_2)$.  For all non-null $\eta\in\R^k$, consider the triangular array %$\left\lbrace s^*_{Nt}(\gamma): 1\leq t\leq T_N, N\geq1\right\rbrace $, where
$s^*_{Nt}(\gamma)=\eta^\prime(N^{-1/2}\sum_{i=1}^{N} s_{i,t}(\gamma))$. By Assumption~\ref{extra equicont}, $\Gamma$ is compact, and by the continuity of the score $s^*_{Nt}(\gamma)$ is compact.
Note that $s^*_{Nt}(\gamma) $ is a martingale difference array. So by the results of Lemma~\ref{Lem. Consistency and Asymptotic Normality nonlinaer PNAR}, the multivariate pointwise central limit theorem and $(NT_N)^{-1}B_{NT_N}(\gamma_1,\gamma_2)\xrightarrow{p}B(\gamma_1,\gamma_2)$ establish the finite dimensional convergence. It remains to show the stochastic equicontinuity, i.e. (\cite[Thm.~2]{hansen1996stochastic}) 
\begin{equation}
	\norm{s^*_{Nt}(\gamma)-s^*_{Nt}(\gamma^*)}\leq \delta_{Nt}\norm{\gamma-\gamma^*}^\lambda_1\,,
	\label{lipschitz score}
\end{equation}
a.s. with $\lnorm{\delta_{Nt}}_q<\infty$ and $\lnorm{s^*_{Nt}(\gamma)}_q<\infty$, where $q \geq 2$ and $\lambda$ such that $ q > m^*/\lambda$. By \cite[Sec.~S-6]{armillotta_fokianos_2021} and Assumption \ref{Ass alpha mixing}-\ref{Ass limits existence}, $\lnorm{s^*_{Nt}(\gamma)}_4<\infty$. For $q > 4$ a similar result can be obtained following the arguments of  \cite[Rem.~2.3]{yaskov_2015} by requiring higher order covariances in Assumption~\ref{Ass weak dependence}.
%The result follows the line of \cite[Thm.~1]{hansen_1996} but applied to integer-valued multivariate random variables and QMLE for \eqref{pois-log-lik}. %Then, assumptions comparable to Assumption~1-3 therein need to be proved. 
%By Appendix~\ref{Proof Ergodicity of nonlinear model} the process $Y_t=F(Y_{t-1},N_t)$ is a first order Markov chain with $\E\norm{F(y,N_1)}_1<\infty$, for $y\in\N^N$.  Moreover, recall that $F(y,N_t)=N_t[f(y)]$ and $\left\lbrace N_t, t\in\Z\right\rbrace $ is the multivariate copula Poisson process defined in \eqref{nonlinear_pnar} such that $N_{i,t}$, for $i=1\dots,N$, are $IID$ Poisson processes with intensity 1. Then, by \eqref{contraction}, for any $y,y^*\in\N^N$,
%\begin{equation}
%	\E\norm{F(y,N_1)-F(y^*,N_1)}_{1}=\norm{f(y)-f(y^*)}_1\leq\vertiii{G}_1\norm{y-y^*}_{1}\,.
%	\nonumber
%\end{equation}
%
%By Assumption~\ref{mixing}, $d_1=\vertiii{G}_1<1$ and \cite[Cor.~3.1]{doukhan_2008} establish $\tau$-weak dependence of the process $\left\lbrace Y_t: t\in\Z\right\rbrace $ with coefficients $\tau(J)\leq 2C_\tau d_1^J/(1-d_1)$, for $J>1$, where $\norm{f(0)}_1\leq N\max_{1\leq i\leq N}\sup_{\gamma\in \Gamma}\norm{f_i(0,0,\phi,\gamma)}=C_\tau<\infty$, by Assumption~\ref{interior H0}. See also \cite[Rem.~2.1]{Doukhanetal(2012)}. Moreover, by \cite[Thm.~2]{doukhan_2012} the process $\left\lbrace Y_t: t\in\Z\right\rbrace $ is absolutely regular with geometrically decaying coefficients $\beta(J)\leq 4C_\tau d_1^J/(1-d_1)=\mathcal{O}(J^{-A})$, where $A>\nu/(\nu-1)$ and $\nu>1$.
To prove \eqref{lipschitz score} we recall the following uniform bounds.  %In fact, %$\E\norm{\lambda_t(\gamma)}_a^a\leq\E\norm{Y_t}_a^a$ and $\E\norm{X_{i,t-1}}^a\leq\E\norm{Y_{i,t-1}}^a$ by Jensen's inequality. Moreover,
By Assumption~\ref{extra equicont}, 	for $i=1,\dots,N$, %$\norm{h^{b}_{i,t}(\gamma)}\leq c_b + c_{1b}\norm{X_{i,t-1}}+c_{2b}\norm{Y_{i,t-1}}$, for $b=1,\dots,k_2$,
$\norm{\partial f_i(x_i,y_i,\theta)/\partial\alpha_b}=h^{b}_{i,t}(\gamma)\leq c_b + c_{1b} x_i + c_{2b} y_i $ for $b=1,\dots,k_2$ and $l=1,\dots,m^*$, where $c_b=h^{b}_{i}(0, \gamma)$ $\forall\gamma\in\Gamma$. %and $\norm{\partial^2 f_i(x_i,y_i,\theta)/\partial\alpha_b\gamma_l}=|\partial h^{b}_{i,t}(\gamma)/\partial\gamma_l|\leq c_{bl} + c_{1bl} x_i + c_{2bl} y_i $
Let $C, C_0, C_1, C_2 >0$ be generic constants varying from place to place, which do not depend on $\gamma$.
Then, a.s. $\norm{h_{i,t}(\gamma)}_1\leq C_0+ C_1X_{i,t-1}+C_2Y_{i,t-1}$. %, where $C_j=\sum_{b=1}^{k_2}c_{jb}$, $c_{0b}=c_b$ and $j=0,1,2$. 
Similar bounds hold for %$\norm{\partial h_{i,t}(\gamma)/\partial\gamma}_1$,
$\lambda_{i,t}(\gamma)$ and $\norm{Z_{i,t}(\gamma)}_1$. By Theorem~\ref{Thm. Ergodicity of nonlinear model div N} all the moments of the Poisson process $Y_t$ exist as well as those associated to the error $\xi_t(\gamma)=Y_t-\lambda_t(\gamma)$. This fact and the multinomial theorem imply that every moment of all the previously defined random variables is uniformly bounded. %Recall that $\norm{\partial^2 f_i(x_i,y_i,\theta)/\partial\alpha_b\gamma_l}=|\partial h^{b}_{i,t}(\gamma)/\partial\gamma_l|$. 
Define $h_{i,t}(0, \gamma)=c$, $\forall\gamma\in\Gamma$, and $h^*_{i,t}(\gamma)=h_{i,t}(\gamma)-h_{i,t}(0, \gamma)$. For $i=1,\dots,N$, and MVT $\norm{h_{i,t}(\gamma)-h_{i,t}(\gamma^*)}_1=|h^*_{i,t}(\gamma)- h^*_{i,t}(\gamma^*)|_1=|\partial h^*_{i,t}(\tilde{\gamma})/\partial\gamma|_1\norm{\gamma-\gamma^*}_1\leq A_{i,t-1}\norm{\gamma-\gamma^*}_1 $ a.s. where $A_{i,t-1}=C_1 X_{i,t-1}+C_2 Y_{i,t-1} $, % g(X_{i,t-1}, Y_{i,t-1})$, $g(x,y)$ 
%is some linear combination of $X$ and $Y$, %C_1 X_{i,t-1}+C_2 Y_{i,t-1} $,
$\tilde{\gamma}_l$ are intermediate points between $\gamma_l$ and $\gamma^*_l$, for $l=1,\dots,m^*$, and the last inequality holds by Assumption \ref{extra equicont} since $\partial h^*_{i,t}(\tilde{\gamma})/\partial\gamma=\partial^2 f_i(x_i,y_i,\theta)/\partial\alpha\partial\gamma-\partial^2 f_i(0,0,\theta)/\partial\alpha\partial\gamma$.
%where the last inequality holds by Assumption~\ref{smoothness}.
%\begin{align}
%	\lnorm{h_{i,t}(\gamma)-h_{i,t}(\gamma^*)}_{4r}&\leq \left( \E\norm{(h_{i,t}(\gamma)-h_{i,t}(\gamma^*))}^{4r}_1\right) ^{\frac{1}{4r}}\leq \lnorm{\mu_{i,t}(Y_{t-1}, \gamma)}_{4r}\norm{\gamma-\gamma^*}^\delta_1 \nonumber \\
%	&\leq B_1 \norm{\gamma-\gamma^*}^\delta_1\nonumber\,.
%\end{align}
For all $\gamma,\gamma^*\in\Gamma$, standard algebra and previous bounds show that a.s. $\norm{s^*_{Nt}(\gamma)-s^*_{Nt}(\gamma^*)}\leq \delta_{Nt} \norm{\gamma-\gamma^*}_1$ %$\norm{s^*_{Nt}(\gamma)-s_{Nt}(\gamma^*)}$ is bounded by
%\begin{align}
%	\norm{s^*_{Nt}(\gamma)-s^*_{Nt}(\gamma^*)} & \leq \frac{C}{\sqrt{N}}\sum_{i=1}^{N}\left[ \norm{\frac{Y_{i,t}Z_{i,t}(\gamma)}{\lambda_{i,t}(\gamma)}-\frac{Y_{i,t}Z_{i,t}(\gamma^*)}{\lambda_{i,t}(\gamma^*)}}_1+\norm{Z_{i,t}(\gamma)-Z_{i,t}(\gamma^*)}_1\right]  \nonumber\\
%	&\leq \norm{\gamma-\gamma^*}_1 \nonumber
%\end{align}
where $\delta_{Nt}= C/\sqrt{N}\sum_{i=1}^{N}A_{i,t-1}\left(1+C_0Y_{i,t}+C_1Y_{i,t} X_{i,t-1}+C_2Y_{i,t} Y_{i,t-1} \right) $ proving \eqref{lipschitz score} with $\lambda=1$. By Assumption~\ref{extra equicont} $\lnorm{\delta_{Nt}}_q<\infty$ since $\delta_{Nt}\leq C \eta_{Nt}$, then $(NT_N)^{-1/2}S_{NT_N}(\gamma)=T_N^{-1/2}\sum_{t=1}^{T_N}N^{-1/2}s_{Nt}(\gamma)$ is stochastically equicontinuous and, as $\left\lbrace N, T_N \right\rbrace \to \infty$, $(NT_N)^{-1/2}S_{NT_N}(\gamma) \Rightarrow S(\gamma)$.

We now prove uniform convergence of $\tilde{\Sigma}_{NT}(\gamma_1,\gamma_2)$ by showing stochastic equicontinuity for Hessian and information matrices.  For all  $\eta\in\R^k$, $\eta \neq 0$, consider the triangular array
$b_{Nt}(\gamma_1,\gamma_2)=\eta^\prime(N^{-1} B_{Nt}(\gamma_1,\gamma_2))\eta$ where $B_{Nt}(\gamma_1,\gamma_2)$ is the single summand of $B_{NT}(\gamma_1,\gamma_2)$. Define  $\rho_{ijt}(\gamma_1, \gamma_2)=\E[\xi_{i,t}(\gamma_1) \xi_{j,t}(\gamma_2) |  \Fb_{t-1}]/\sqrt{\lambda_{i,t}(\gamma_1)\lambda_{j,t}(\gamma_2)}$ the conditional correlation. Then, a.s. %$\norm{b_{Nt}(\gamma_1,\gamma_2)-b_{Nt}(\gamma^*_1, \gamma^*_2)}$ is bounded by
\begin{align}
	&\norm{b_{Nt}(\gamma_1,\gamma_2)-b_{Nt}(\gamma^*_1, \gamma^*_2)} \nonumber \\
	&\leq\eta^\prime\left( \frac{1}{N}\sum_{i,j=1}^{N}\frac{Z_{i,t}(\gamma_1)\rho_{ijt}(\gamma_1, \gamma_2)Z^\prime_{j,t}(\gamma_2)}{\sqrt{\lambda_{i,t}(\gamma_1)}\sqrt{\lambda_{j,t}(\gamma_2)}}-\frac{Z_{i,t}(\gamma^*_1)\rho_{ijt}(\gamma^*_1, \gamma^*_2)Z^\prime_{j,t}(\gamma^*_2)}{\sqrt{\lambda_{i,t}(\gamma^*_1)}\sqrt{\lambda_{j,t}(\gamma^*_2)}} \right) \eta \nonumber \\
	&\leq C \sum_{r=1}^{5}D_{r}\,, \nonumber
\end{align}
and the inequality follows since for a matrix $M$, $\eta^\prime M \eta \leq \norm{\eta\eta^\prime}_1\norm{M}_1$, and by $\lambda_{i,t}(\gamma) \geq C$ $\forall \gamma \in \Gamma$. The elements $D_r$ are obtained by 
consecutive addition and subtraction. 
%and  adding and subtracting the first numerator with only one different element of the product each time. For example, the first can be $\pm Z_{i,t}(\gamma_1)\rho_{ijt}(\gamma^*_1, \gamma^*_2)Z^\prime_{j,t}(\gamma_2)$ and then the differences are worked out as follows. 
We focus on one  element (say $D_1$), the other terms are treated analogously.  Some tedious algebra shows that, a.s. %$\norm{\rho_{ijt}(\gamma_1, \gamma_2) - \rho_{ijt}(\gamma^*_1, \gamma^*_2)}$ is bounded by
\begin{align}
	&\norm{\rho_{ijt}(\gamma_1, \gamma_2) - \rho_{ijt}(\gamma^*_1, \gamma^*_2)} \nonumber \\
	&\leq \norm{\lambda^{\frac{1}{2}}_{i,t}(\gamma_1)-\lambda^{\frac{1}{2}}_{i,t}(\gamma^*_1)} \norm{\rho_{ijt}(\gamma^*_1, \gamma^*_2)}\lambda^{\frac{1}{2}}_{j,t}(\gamma_2)+\norm{\lambda^{\frac{1}{2}}_{j,t}(\gamma_2)-\lambda^{\frac{1}{2}}_{j,t}(\gamma^*_2)}\norm{\rho_{ijt}(\gamma^*_1, \gamma^*_2)}\lambda^{\frac{1}{2}}_{i,t}(\gamma^*_1) \nonumber \\
	& \leq C_1 \norm{\lambda_{i,t}(\gamma_1)-\lambda_{i,t}(\gamma^*_1)} \varphi_{j-i} A^*_{j,t-1} + C_2 \norm{\lambda_{j,t}(\gamma_2)-\lambda_{j,t}(\gamma^*_2)}  \varphi_{j-i} A^*_{i,t-1}  \nonumber \\
	%&\leq C_2 \norm{h_{i,t}(\gamma_1)-h_{i,t}(\gamma^*_1)}_1 A^*_{i,t-1} \varphi_{j-i}+\leq C_2 \norm{h_{i,t}(\gamma_1)-h_{i,t}(\gamma^*_1)}_1 A^*_{i,t-1} \varphi_{j-i} \nonumber \\
	&\leq C^*_1 \varphi_{j-i} \tilde{A}_{ij,t-1} \norm{\gamma_1-\gamma^*_1}_1 +  C^*_2 \varphi_{j-i} \tilde{A}_{ji,t-1}\norm{\gamma_2-\gamma^*_2}_1 \nonumber %\label{lipschitz rho} 
	%&\leq C_0 A^*_{i,t-1} \norm{\lambda_{i,t}(\gamma)-\lambda_{i,t}(\gamma^*)}\varphi_{j-i}  \nonumber
\end{align}
%being no bigger than $\norm{\lambda_{i,t}(\gamma_1)-\lambda_{i,t}(\gamma^*_1)} C_1 A^*_{i,t-1} \varphi_{j-i}\leq C_2 \norm{h_{i,t}(\gamma_1)-h_{i,t}(\gamma^*_1)}_1 A^*_{i,t-1} \varphi_{j-i} $ 
where $A^*_{i,t-1}=A_{i,t-1} +C_0$ and $\tilde{A}_{ij,t-1}=A_{i,t-1}A^*_{j,t-1}$. The first inequality follows by addition and subtraction. The second inequality is a consequence of Assumption~\ref{Ass weak dependence} and $\norm{\sqrt{x}-\sqrt{y}}=\norm{x-y}/(\sqrt{x}+\sqrt{y})$; the third is due to Lipschitz continuity of $h_{i,t}(\gamma)$. Let us define $\pi_{ijt}(\gamma,\gamma^*) =   \norm{\sqrt{\lambda_{i,t}(\gamma^*_1)\lambda_{j,t}(\gamma^*_2)} Z_{i,t}(\gamma_1) Z^\prime_{j,t}(\gamma_2)}_1 \leq \pi_{ijt}$ a.s. with the inequality coming
from previous uniform bounds where $\pi_{ijt}$ is a linear combination of $X_{i,t-1}$ and $Y_{i,t-1}$ not depending on $\gamma$. Then,
\begin{align}
	D_1 &= \frac{1}{N}\sum_{i,j=1}^{N} \norm{\rho_{ijt}(\gamma_1, \gamma_2) - \rho_{ijt}(\gamma^*_1, \gamma^*_2)}\pi_{ijt}(\gamma,\gamma^*) \nonumber \\
	&\leq \frac{ C^*_1}{N}\sum_{i,j=1}^{N}  \varphi_{j-i} \tilde{A}_{ij,t-1}\pi_{ijt}  \norm{\gamma_1-\gamma^*_1}_1 +  \frac{ C^*_2}{N}\sum_{i,j=1}^{N}\varphi_{j-i} \tilde{A}_{ji,t-1}\pi_{ijt} \norm{\gamma_2-\gamma^*_2}_1 \,. \nonumber 
\end{align}		
%where $\tilde{A}_{i,t-1}=A_{i,t-1}A^*_{i,t-1}$ and $A^*_{i,t-1}=A_{i,t-1} +\beta_0$.
This shows that $\norm{b_{Nt}(\gamma_1,\gamma_2)-b_{Nt}(\gamma^*_1, \gamma^*_2)}\leq b^*_{1,Nt} \norm{\gamma_1-\gamma^*_1}_1 + b^*_{2,Nt} \norm{\gamma_2-\gamma^*_2}_1$ a.s. with $b^*_{s,Nt}$  defined by obvious notation, not depending on $\gamma$ and such that $\E(b^*_{s,Nt}) < \infty$, for $s=\left\lbrace 1, 2\right\rbrace $. Rewriting in matrix form we have $b^*_{s,Nt}=\eta^\prime(N^{-1} B^*_{s,Nt}(\gamma_1,\gamma_2))\eta$.
According to \cite[Lem.~1]{andrews_1992generic} this is a sufficient condition for the information matrix to be stochastic equicontinuous and by \cite[Thm.~1]{andrews_1992generic} $(NT_N)^{-1}B_{NT_N}(\gamma_1,\gamma_2)\xrightarrow{p}B(\gamma_1,\gamma_2)$ uniformly over $\gamma_1, \gamma_2 \in \Gamma$, as $\left\lbrace N, T_N \right\rbrace \to \infty$.
An analogous result for the Hessian follows by MVT with respect to $\gamma_1, \gamma_2$ and the uniform boundedness of the third derivative (Lemma~\ref{Lem. Consistency and Asymptotic Normality nonlinaer PNAR}). 
By  standard
Taylor expansion arguments, and CMT,   $(NT_N)^{-1}\tilde{\Sigma}_{NT_N}(\gamma_1,\gamma_2)\xrightarrow{p}\Sigma(\gamma_1,\gamma_2)$ uniformly over $\gamma_1, \gamma_2 \in \Gamma$.  %$\sup_{\gamma_1,\gamma_2\in \Gamma}\vertiii{T^{-1}\tilde{H}_T(\gamma_1,\gamma_2)-H(\gamma_1,\gamma_2)}_1\xrightarrow{a.s.} 0$, $\sup_{\gamma_1,\gamma_2\in \Gamma}\vertiii{T^{-1}\tilde{B}_T(\gamma_1,\gamma_2)-B(\gamma_1,\gamma_2)}_1\xrightarrow{a.s.} 0$ and,
%		\begin{equation}
	%			\sup_{\gamma\in \Gamma}\vertiii{\frac{1}{NT}\tilde{\Sigma}_{NT}(\gamma,\gamma)-\Sigma(\gamma,\gamma)}_1\xrightarrow[N,T_N\to\infty]{p} 0\,. \label{Sigma tilde}
	%		\end{equation}
%By Corollary~\ref{Cor. Consistency and Asymptotic Normality nonlinaer PNAR}, and the results of its proof, 
Following analogous steps of Supp. Mat.~\ref{Proof general nonlinear score} for the identifiable parameters $\phi$, equation \eqref{normality} leads to
\begin{equation}
	\frac{\tilde{S}^{(2)}_{NT_N}(\gamma)}{\sqrt{NT_N}}\doteq P(\gamma,\gamma)\frac{S_{NT_N}(\gamma)}{\sqrt{NT_N}}\Rightarrow P(\gamma,\gamma)S(\gamma)\coloneqq S^{(2)}(\gamma)\equiv N(0, \Sigma(\gamma,\gamma))\,, \label{S tilde}
\end{equation}
with $P(\gamma,\gamma)=[-J_2H(\gamma,\gamma)J_1^\prime(J_1H(\gamma,\gamma)J_1^\prime)^{-1}, I_{k_2}]$, $\Sigma(\gamma,\gamma)=P(\gamma,\gamma)B(\gamma,\gamma)P(\gamma,\gamma)^\prime$. Finally, the  CMT shows that   $LM_{NT_N}(\gamma)\Rightarrow LM(\gamma)$ and $g_{NT_N} \Rightarrow g$. A similar conclusion is obtained for the local alternatives $\alpha= (NT)^{-1/2}\delta_2$, where $\delta_2\in\R^{k_2}$, by \eqref{normality local}, with $S^{(2)}(\gamma)\equiv N(J_2H^{-1}(\gamma,\gamma)J_2^\prime\delta_2, \Sigma(\gamma,\gamma))$ in \eqref{S tilde}. This ends the proof.
\qed

\subsection{Proof of Theorem~\ref{Thm: Bootstrap}}  \label{Proof Thm. Bootstrap} 
Following the results of Section~\ref{Proof Test non standard} the information matrix $B_{Nt}(\gamma_1,\gamma_2)$ is Lipschitz for $\gamma_1$, $\gamma_2$ with constants $B^*_{1,Nt}$, $B^*_{2,Nt}$ having finite absolute moments. 
Moreover, by Supp. Mat.  \ref{Proof of Thm. Consistency and Asymptotic Normality nonlinaer PNAR as}, $B_{NT_N}(\gamma_1,\gamma_2)\xrightarrow{a.s.}B(\gamma_1,\gamma_2)$ $\forall\gamma_1,\gamma_2\in\Gamma$. Define $B^*_{s,NT}=T^{-1}\sum_{t=1}^{T}B^*_{s,Nt}$, for $s=\left\lbrace 1,2\right\rbrace$.  Following the same arguments of Supp. Mat.~\ref{Proof Lem. Consistency and Asymptotic Normality nonlinaer PNAR}-\ref{Proof of Thm. Consistency and Asymptotic Normality nonlinaer PNAR as} it can be proved that $B^*_{s,NT_N}-B^*_{s,N}\xrightarrow{a.s.}0$ where $B^*_{s,N}=\E(B^*_{s,Nt})$. Assumptions~\ref{Ass alpha mixing}-\ref{Ass limits existence} imply that $B^*_s=\lim_{N\to\infty}B^*_N$ is finite. Then, $B^*_{s,NT_N}\xrightarrow{a.s.}B^*_s$. This is a sufficient condition for $B_{NT}$ to be strongly stochastically equicontinuous \cite[Lem.~1]{andrews_1992generic} and, together with pointwise almost sure convergence, \cite[Thm.~2]{andrews_1992generic} shows that $B_{NT_N}(\gamma_1,\gamma_2)\xrightarrow{a.s.}B(\gamma_1,\gamma_2)$ uniformly over $\gamma_1, \gamma_2$.

Consider $\omega\in \Omega$, where $\Omega$ denotes a set of samples. We operate conditionally on the sample $\omega$, so randomness is through  the IID standard normal process $\nu_t$. Set $S^\nu_{NT}(\gamma)=\sum_{t=1}^{T}s^\nu_{Nt}(\gamma)$, with $s^\nu_{Nt}(\gamma)=s_{Nt}(\gamma)\nu_t$. Then, $\tilde{S}^\nu_{NT}(\gamma)=S^\nu_{NT}(\gamma)+\bar{S}_{NT}(\gamma)$, $\bar{S}_{NT}(\gamma)=\sum_{t=1}^{T}(\tilde{s}^\nu_{Nt}(\gamma)-s^\nu_{Nt}(\gamma))=\sum_{t=1}^{T}\sum_{i=1}^{N}(\tilde{s}^\nu_{i,t}(\gamma)-s^\nu_{i,t}(\gamma))$, and a.s.
\begin{align}
	\bar{S}_{NT}(\gamma)&=\sum_{t=1}^{T}\sum_{i=1}^{N}\left(\frac{Z_{i,t}(\gamma)\tilde{\xi}_{i,t}\nu_t}{\tilde{\lambda}_{i,t}}-\frac{Z_{i,t}(\gamma)\xi_{i,t}(\gamma)\nu_t}{\lambda_{i,t}(\gamma)} \right)\nonumber \\
	&=\sum_{t=1}^{T}\sum_{i=1}^{N}Z_{i,t}(\gamma)\left(\frac{\lambda_{i,t}(\gamma)\tilde{\xi}_{i,t}-\tilde{\lambda}_{i,t}\xi_{i,t}(\gamma)}{\tilde{\lambda}_{i,t}\lambda_{i,t}(\gamma)} \right)\nu_t \nonumber \\
	%	&\leq \beta_0^{-2}\sum_{t=1}^{T}\sum_{i=1}^{N}Z_{i,t}(\gamma)\left[ \lambda_{i,t}(\gamma)\left( \tilde{\xi}_{i,t}-\xi_{i,t}(\gamma)\right) + \left( \lambda_{i,t}(\gamma) - \tilde{\lambda}_{i,t}\right)\xi_{i,t}(\gamma)\right]\nu_t \nonumber \\
	& \leq \beta_0^{-2}\sum_{t=1}^{T}\sum_{i=1}^{N}Z_{i,t}(\gamma)Z_{i,t}^\prime(\gamma)\left( \phi-\tilde{\phi} \right) Y_{i,t}\nu_t \,. \nonumber
\end{align}
Set $G^\nu_{NT}(\gamma)\coloneqq\beta_0^{-2}\sum_{t=1}^{T}\sum_{i=1}^{N}Y_{i,t}Z_{i,t}(\gamma)Z_{i,t}^\prime(\gamma)\nu_t$, so
\begin{equation}
	\sup_{\gamma\in \Gamma}\norm{\frac{\bar{S}_{NT}(\gamma)}{\sqrt{NT}}}_1\leq \sup_{\gamma\in \Gamma}\vertiii{\frac{G^\nu_{NT}(\gamma)}{NT}}_1\norm{\sqrt{NT}(\tilde{\phi}-\phi)}_1\,. \nonumber
\end{equation}
By Section~\ref{Proof Test non standard}, $s_t(\gamma)$ is $L^2$ integrable. % and $T^{-1}B_T(\gamma_1,\gamma_2)\rightarrow B(\gamma_1,\gamma_2)$, uniformly over $\gamma_1,\gamma_2\in\Gamma$.
Then, from the assumptions of Theorems~\ref{Thm. Consistency and Asymptotic Normality nonlinaer PNAR as}, \ref{Thm: Test non standard} and Pollard's central limit theorem for triangular empirical processes \cite[Thm.~10.6]{pollard_1990empirical}, the arguments in \cite[pp.~426-427]{hansen_1996} prove
that  %$\mathrm{P}(\Omega)=1$,
$(NT_N)^{-1/2}S^\nu_{NT_N}(\gamma)\Rightarrow_p S(\gamma)$, where $\Rightarrow_p$ denotes the weak convergence in probability, as defined in \cite{gine_zinn_1990}. Furthermore, we have $(NT_N)^{-1}G^\nu_{NT_N}(\gamma)\xrightarrow{a.s.} O_{k\times k}$. %, for every sample $\omega$ with non-null probability.
Then, $(NT_N)^{-1/2}\bar{S}_{NT_N}(\gamma)\Rightarrow_p 0$, $(NT_N)^{-1/2}\tilde{S}^\nu_{NT_N}(\gamma)\Rightarrow_p S(\gamma)$,
% Similarly to Appendix~\ref{Proof Test non standard}, $\sup_{\gamma_1,\gamma_2\in \Gamma}\vertiii{T^{-1}\tilde{B}^\nu_T(\gamma_1,\gamma_2)-\tilde{B}^\nu(\gamma_1,\gamma_2)}_1\xrightarrow{a.s.} 0$ , where $\tilde{B}^\nu(\gamma_1,\gamma_2)=\E\left[ \tilde{s}^\nu_t(\gamma_1)\tilde{s}^\nu_t(\gamma_2)^\prime\right]=\E\left[ \tilde{s}_t(\gamma_1)\tilde{s}_t(\gamma_2)^\prime\right]\E(\nu^2_t) =\E\left[ \tilde{s}_t(\gamma_1)\tilde{s}_t(\gamma_2)^\prime\right]=\tilde{B}(\gamma_1,\gamma_2) $; by $\tilde{\phi}\xrightarrow{a.s.}\phi$ and the CMT, $\sup_{\gamma_1,\gamma_2\in \Gamma}\vertiii{\tilde{B}(\gamma_1,\gamma_2)-B(\gamma_1,\gamma_2)}_1\xrightarrow{a.s.} 0$. 
%Using similar arguments, $T^{-1}\tilde{B}^\nu_T(\gamma_1,\gamma_2)\Rightarrow_p B(\gamma_1,\gamma_2)$. %to Appendix~\ref{Proof Test non standard}, $\sup_{\gamma_1,\gamma_2\in \Gamma}\vertiii{T^{-1}\tilde{B}^\nu_T(\gamma_1,\gamma_2)-B^\nu(\gamma_1,\gamma_2)}_1\xrightarrow{a.s.} 0$ , 
% where $B^\nu(\gamma_1,\gamma_2)=\E\left[ s^\nu_t(\gamma_1)s^{\nu\prime}_t(\gamma_2)\right]=\E\left[ s_t(\gamma_1)s^\prime_t(\gamma_2)\right]\E(\nu^2_t) =\E\left[ s_t(\gamma_1)s^\prime_t(\gamma_2)\right]=B(\gamma_1,\gamma_2) $; %by $\tilde{\phi}\xrightarrow{a.s.}\phi$ and the CMT, $\sup_{\gamma_1,\gamma_2\in \Gamma}\vertiii{\tilde{B}(\gamma_1,\gamma_2)-B(\gamma_1,\gamma_2)}_1\xrightarrow{a.s.} 0$. 
%$\sup_{\gamma_1,\gamma_2\in \Gamma}\vertiii{T^{-1}\tilde{\Sigma}^\nu_T(\gamma_1,\gamma_2)-\Sigma(\gamma_1,\gamma_2)}_1\xrightarrow{a.s.} 0$,
$LM^\nu_{NT_N}(\gamma)\Rightarrow_p LM(\gamma)$, $\tilde{g}_{NT_N}\Rightarrow_p g$, $\tilde{F}_{NT_N}(x)\xrightarrow{p}F(x)$, uniformly over $x$, and $\tilde{p}_{NT_N}=1-\tilde{F}_{NT_N}(g_{NT_N})=1-F(g_{NT_N})+o_p(1)=p_{NT_N}+o_p(1)$. 
\qed

%\vspace{-0.2cm}

\section*{Acknowledgments}
This work was completed when M. Armillotta was with the Department of Mathematics \& Statistics at the University of Cyprus. We greatly appreciate comments made by the Editor, AE and two reviewers  on an earlier version of the manuscript.

\vspace{-0.3cm}
%%%%%%%%%%%%%%%%%%%%%%%%%%%%%%%%%%%%%%%%%%%%%%
%% Funding information, if any,             %%
%% should be provided in the                %%
%% funding section.                         %%
%%%%%%%%%%%%%%%%%%%%%%%%%%%%%%%%%%%%%%%%%%%%%%
\section*{Funding}
The research was supported by the European Regional Development Fund and the Republic of Cyprus through the Research and Innovation Foundation, under the project INFRASTRUCTURES/\break1216/0017 (IRIDA).

\setcounter{section}{18}
\begin{center}
	\section{Supplementary Material} \label{SEC supplementary material}
\end{center}

%The supplementary material contains the proofs for Sections \ref{SEC: inference} and  \ref{SEC: linearity test}, additional simulations and empirical results, some important  concepts and  theoretical analysis of the TNAR model \eqref{tnar} with fixed network dimension.
The supplement  contains  proofs for Sections \ref{SEC: inference} and  \ref{SEC: linearity test} and  detailed study of the  TNAR model \eqref{tnar} with fixed network dimension $N$. It also includes  key concepts used  in the main paper,  simulations and further data analysis,  as described in Sections~\ref{SEC: simulations}-\ref{SEC: applications}.

\renewcommand{\thesection}{S-\arabic{section}}
\renewcommand{\theequation}{S-\arabic{equation}}
\renewcommand{\thelemma}{S-\arabic{lemma}}
\renewcommand{\thefigure}{S-\arabic{figure}}
\renewcommand{\thetable}{S-\arabic{table}}
\setcounter{equation}{0}
\setcounter{figure}{0}
\setcounter{table}{0}
\setcounter{section}{0}
\renewcommand{\theproposition}{S-\arabic{proposition}}
\setcounter{example}{0}
\renewcommand{\theexample}{S\arabic{example}}

\vspace{0.5cm}

\section{Key concepts}
\label{Supp:Concepts}
\subsection{Joint copula-Poisson process}
\label{Supp:copula-Poisson}
%Finally, $\left\lbrace N_t \right\rbrace $ is a sequence of independent $N$-variate copula-Poisson processes,
%that is   $\mathbf{N}_t(\lambda_t)$ is a sequence of $N$-dimensional IID  marginally Poisson count processes, with intensity 1, counting the number of events in the interval of time $[0,\lambda_{1,t}]\times\dots\times[0,\lambda_{N,t}]$, and whose structure of dependence is modelled through a copula construction $C(\dots)$ on their associated exponential waiting times random variables. More precisely, 
We describe the joint copula-Poisson DGP for the the nonlinear PNAR \eqref{nonlinear_pnar}.
Consider a network $W$, a set of values for the parameters $\theta$ of \eqref{nonlinear_pnar} and a starting vector at time $t=0$, say $\lambda_0=(\lambda_{1,0},\dots,\lambda_{N,0})^\prime$,
\begin{enumerate}	
	\item Let $U_{l}=(U_{1,l},\dots,U_{N,l})$, for $l=1,\dots,K$ a sample from a $N$-dimensional copula $C(u_1,\dots, u_N)$, where $U_{i,l}$ follows a $Uniform(0,1)$ distribution, for $i=1,\dots, N$.
	\item The transformation $E_{i,l}=-\log{U_{i,l}}/\lambda_{i,0}$  follows the exponential distribution  with parameter $\lambda_{i,0}$, for $i=1,\dots, N$.
	\item If $E_{i,1}>1$, then $Y_{i,0}=0$, otherwise %$Y_{i,0}=\max_{1\leq k\leq K}\left\lbrace  \sum_{l=1}^{k}X_{i,l}\leq 1\right\rbrace$.
	$Y_{i,0}=\max\left\lbrace k\in[1,K]:  \sum_{l=1}^{k}E_{i,l}\leq 1\right\rbrace$, by taking $K$ large enough.
	Then, $Y_{i,0}|\lambda_0 \sim Poisson(\lambda_{i,0})$, for $i=1,\dots, N$. So, $Y_{0}=(Y_{1,0},\dots, Y_{N,0})^\prime$ is a set of (conditionally) marginal Poisson processes with mean $\lambda_0$. 
	\item By updating model \eqref{nonlinear_pnar}, $\lambda_1$ is obtained.
	\item Return back to step 1 to obtain $Y_1$, and so on.
\end{enumerate}
In  applications, the sample size $K$ should be taken large, e.g. $K=100$. Its value clearly depends, in general, on the magnitude of observed data. Moreover, the copula function  $C(\dots)$ depends on one or more unknown parameters, say $\rho$, which express  the contemporaneous correlation among the variables.

The proposed DGP ensures that all marginal distributions of $Y_{i,t}$ are univariate Poisson, conditionally to the past, as described in \eqref{nonlinear_pnar}, while it introduces  arbitrary dependence among them, in a flexible and general way, using  the copula construction. 
We  point out that the  assumed marginal Poisson distribution can be replaced by other assumptions, like Negative Binomial and more generally mixed Poisson distribution, by modifying suitable the above algorithm.
See \cite{armillotta_fokianos_2021} and \cite{fok2020} for further details on copula-Poisson joint distributions.

\subsection{Stationarity for increasing dimensional processes}
\label{Supp:Stationarity}
We follow \cite{zhu2017}.
Define $\{ X_{t}\in\R^N \}$ be an $N$-dimensional time series with $N\to\infty$ and $\mathcal{W} = \left\lbrace \omega\in\R^\infty:\omega_\infty=\sum\norm{\omega_i}<\infty\right\rbrace $, where $\omega = (\omega_i\in\R: 1\leq i \leq \infty)^\prime\in\R^\infty$. For each $\omega\in\mathcal{W}$, let $\omega_N = (\omega_1,\dots,\omega_N)^\prime\in\R^N$ be  its truncated $N$-dimensional version. Then  $\left\lbrace  X_{t} \right\rbrace $ is said to be strictly stationary if the following conditions hold: 
\begin{itemize}
	\item $X^\omega_t=\lim_{N\to\infty}\omega_N ^T X_t<\infty$,  a.s. (almost surely)
	\item $\left\lbrace X^\omega_t, ~ t\in\Z \right\rbrace $ is strictly stationary.
\end{itemize}

\section{Proofs  for Section  \ref{SEC: inference}}
\label{Supp:Sec:Proofs for Section 3}

\subsection{Proof of Lemma \ref{Lem. Consistency and Asymptotic Normality nonlinaer PNAR}}
\label{Proof Lem. Consistency and Asymptotic Normality nonlinaer PNAR}
We provide  a sketch of the proof of Lemma~\ref{Lem. Consistency and Asymptotic Normality nonlinaer PNAR}  because the proof is analogous  the corresponding proof of the linear PNAR model, as described in \cite[Lemma~1-2]{armillotta_fokianos_2021}. 
Set $\mu_1+\mu_2=d$. Recall model \eqref{nonlinear_pnar}. %as $Y_t=f(Y_{t-1},\theta)+\xi_t$, for $t\geq0$ where $\xi_t=Y_t-\lambda_t$. 
For $J>0$, define $\bar{Y}_t=f(\bar{Y}_{t-1},\theta)$, if $ t>0 $, and $\bar{Y}_t=Y_0$, if $ t\leq 0$. Moreover, $\hat{Y}^s_{t-J}=
f(\hat{Y}^{s-1}_{t-J},\theta)+\xi_s$, if $\max\left\lbrace t-J,0\right\rbrace < s\leq t $, and $\hat{Y}^s_{t-J}=\bar{Y}_s$, if $ s\leq\max\left\lbrace t-J,0\right\rbrace $ 
%	\begin{equation}
	%		\bar{Y}_t=
	%		\begin{cases}
		%			f(\bar{Y}_{t-1},\theta), & t>0 \\
		%			Y_0, & t\leq0
		%		\end{cases}\,\,, \quad\quad \hat{Y}^s_{t-J}=
	%		\begin{cases}
		%			f(\hat{Y}^{s-1}_{t-J},\theta)+\xi_s, & \max\left\lbrace t-J,0\right\rbrace < s\leq t \\
		%			\bar{Y}_s, & s\leq\max\left\lbrace t-J,0\right\rbrace 
		%		\end{cases}\,,
	%		\nonumber
	%	\end{equation}
where $f(\hat{Y}^{t-1}_{t-J},\theta)=\hat{\lambda}^{t}_{t-J}$. Let $\tilde{Y}^{*}_{t}=\delta Y_t+(1-\delta)\bar{Y}_{t}$ and $\tilde{Y}_{t}=\delta Y_t+(1-\delta)\hat{Y}^{t}_{t-J}$ with $0\leq \delta\leq 1$. Then, a.s
\begin{equation}
	\norm{Y_t-\hat{Y}^t_{t-J}}_\infty\leq d^J\sum_{j=0}^{t-J-1}d^j\norm{\xi_{t-J-j}}_\infty\,,\label{recursion}
\end{equation}
where $\norm{\xi_t}_\infty=\max_{1\leq i\leq N}|\xi_{i,t}|$. For \eqref{contraction}, \cite[Lemma~ S-2]{armillotta_fokianos_2021} applies to \eqref{nonlinear_pnar}  and \eqref{recursion} holds true. 

Define $\hat{Y}_{i,t}$, $\hat{\lambda}_{i,t}$ the $i^{\text{th}}$ elements of $\hat{Y}^t_{t-J}$ and $\hat{\lambda}^t_{t-J}$. Set $l_{i,t}=Y_{i,t}\lambda_{i,t}d_{i,g}d_{i,gl}$, where $d_{i,g}=\partial f_i(X_{i,t-1}, Y_{i,t-1}, \theta)/\partial\theta_g$ and $d_{i,gl}=\partial^2 f_i(X_{i,t-1}, Y_{i,t-1}, \theta)/\partial\theta_g\partial\theta_l$. Set $a\geq 4$, a generic integer which can take different values. Recall that the notations $C_a, l_a$ denote constants which depend on $a$ and $c$ a generic constant.  By Theorem~\ref{Thm. Ergodicity of nonlinear model div N} and the a.s. inequality of Appendix~\ref{Proof of Thm. Ergodicity of nonlinear model div N} $\lambda_{i,t}=f_i(X_{i,t-1},Y_{i,t-1})\leq c+ \mu_1 X_{i,t-1} + \mu_2 Y_{i,t-1}$, for every $i=1,\dots,N$, we have that $\max_{i\geq 1}\lnorm{\lambda_{i,t}}_a\leq\max_{i\geq 1}\lnorm{Y_{i,t}}_a\leq C_a$, by the conditional Jensen's inequality. Similarly, $\max_{i\geq 1}\lnorm{\hat{\lambda}_{i,t}}_a\leq\max_{i\geq 1}\lnorm{\hat{Y}_{i,t}}_a$. By \eqref{recursion}, $\max_{i\geq 1}\lnorm{Y_{i,t}-\hat{Y}_{i,t}}_a %\leq d^J\sum_{j=0}^{t-J-1}d^j\max_{i\geq 1}\lnorm{\xi_{i,t}}_a
\leq d^J2C_a^{1/a}/(1-d)$. An analogous recursion shows that $\max_{i\geq 1}\lnorm{\hat{Y}_{i,t}}_a %\leq2c\sum_{j=0}^{\infty}d^j+\sum_{j=0}^{\infty}d^j\max_{i\geq 1}\lnorm{\xi_{i,t}}_a
\leq(2c+2C_a^{1/a})/(1-d)\coloneqq\Delta<\infty$. By Assumption \ref{smoothness}, $\norm{\partial f_i(x_i, y_i, \theta)/\partial\theta_g}\leq \max_{i\geq 1}\norm{\partial f_i(0, 0,\theta)/\partial\theta_g}+c_{1,g}x_i+c_{2,g}y_i$; so $\max_{i\geq 1}\lnorm{\partial f_i(\hat{x}_i, \hat{y}_i, \theta)/\partial\theta_g}_a\leq \hat{\Delta}_g<\infty$, $\max_{i\geq 1}\lnorm{\partial f_i(x_i, y_i, \theta)/\partial\theta_g}_a\leq \Delta_g<\infty$ for $i=1,\dots,N$ and $r=1,\dots,m$, where $\hat{\Delta}_g, \Delta_g$ are constants depending on index $g$. Similar arguments apply to second and third derivatives. Then, by H\"{o}lder's inequality, $\lnorm{l_{i,t}}_a\leq l_a<\infty$. Set $W_t=(Y_t, Y_{t-1})^\prime $, $\hat{W}^t_{t-J}=(\hat{Y}^{t}_{t-J}, \hat{Y}^{t-1}_{t-J})^\prime $. Consider the following triangular array $\left\lbrace g_{Nt}(W_t): 1\leq t\leq T_N, N\geq1\right\rbrace $, where $T_N\to\infty$ as $N\to\infty$. For any $\eta\in\R^m$, $g_{Nt}(W_t)=\sum_{g=1}^{m}\sum_{l=1}^{m}\eta_g\eta_lh_{glt}$, $N^{-1}h_{Nt}=(h_{glt})_{1\leq g,l\leq m}$ and $h_{Nt}=\sum_{i=1}^{N}-\partial^2 l_{i,t}(\theta)/\partial\theta\partial\theta^\prime$, where the second derivative is defined as in \eqref{H_T}. We take a single element, $h_{glt}$, the result is similarly proved for the other elements. By recalling \eqref{contraction} and Assumptions ~\ref{smoothness}-\ref{bounded away from 0}, following \cite[Sec.~A.2]{armillotta_fokianos_2021}, it can be proved that
$\lnorm{h_{glt}-h_{gl,t-J}^{t}}_2\leq 2l_aC_a^{1/a}/(1-d)d^{J-1}\coloneqq c_{gl}\nu_J$,
with $\nu_J=d^{J-1}$. This holds true for any $g,l=1,\dots,m$. Therefore, the triangular array process $\left\lbrace \bar{W}_{Nt}=g_{Nt}(W_t)-\E\left[ g_{Nt}(W_t)\right] \right\rbrace$ is $L^p$-near epoch dependence ($L^p$-NED), with $p\in[1,2]$. Moreover, using \ref{Ass alpha mixing} and the argument in \cite[p.~464]{and1988}, we have that $\left\lbrace \bar{W}_{Nt}\right\rbrace $ is a uniformly integrable $L^1$-mixingale and, by Assumption~\ref{Ass limits existence}, the law of large numbers of \cite[Thm.~2]{and1988} shows that $(NT_N)^{-1}\eta^\prime H_{NT_N}\eta\xrightarrow{p}\eta^\prime  H\eta$ as $ \left\lbrace N, T_N \right\rbrace \to\infty$. By the last inequality of \ref{Ass weak dependence}, an analogous argument, along the lines of \cite[Sec.~A.2-A.3]{armillotta_fokianos_2021},  applies to  the convergence of the information matrix $(NT_N)^{-1}\eta^\prime B_{NT_N}\eta\xrightarrow{p}\eta^\prime  B\eta$ and to the third log-likelihood  derivative defined by  Lemma~\ref{Lem. Consistency and Asymptotic Normality nonlinaer PNAR}. Assumption  \ref{Ass weak dependence}, the arguments in \cite[Sec.~A.3, Supp. Mat.~S-6]{armillotta_fokianos_2021} and  the central limit theorem for martingale arrays in \cite[Cor.~3.1]{hall1980} yields  $(NT_N)^{-1/2}S_{NT_N}\xrightarrow{d}N(0,B)$, as $N\to\infty$, leading to the desired result. \qed

\subsection{Proof of Theorem~\ref{Thm. Consistency and Asymptotic Normality nonlinaer PNAR as}} \label{Proof of Thm. Consistency and Asymptotic Normality nonlinaer PNAR as}
Recall that $c$ is a constant and $\mu_1+\mu_2=d$. By the assumptions of Lemma~\ref{Lem. Consistency and Asymptotic Normality nonlinaer PNAR}, $S_{Nt}/N$, $H_{Nt}/N$, $B_{Nt}/N$ and $M_{Nt}/N$ have  finite absolute fourth moments. Moreover, following the results of Section~\ref{Proof Lem. Consistency and Asymptotic Normality nonlinaer PNAR}, $H_{Nt}/N$, $B_{Nt}/N$ and $M_{Nt}$ are $L^p$-NED. Arguing as in  \cite[Def.~1]{dejong_1996strong} and \cite[p.~464]{and1988}, they are $L^2$-mixingales with coefficients $c_{Nt}= c$ and $\psi(J)=d^{[J/2]-1}+6\alpha([J/2])^{1-1/r}$, where $[\cdot]$ denotes the integer part operator. 
Note that $S_{Nt}/N$ is a martingale difference array so it is trivially $L^2$-mixingale with $\psi(J)=0$ \cite[Com.~(2), Example~1]{and1988}. Define $\zeta_N=[N^{1/3+\eta}]$ and $m_N=[N^{1/6-2\eta}]$, for some $\eta>0$ such that $m_N \geq 1$. Then, $\sum_{N=1}^{\infty}T_N^{-1}\sum_{t=1}^{T_N}\E|H_{Nt}/N|^4\zeta_N^{-3}\leq C_4\sum_{N=1}^{\infty}\zeta^{-3}_N \leq C_4\sum_{N=1}^{\infty}(N^{1/3+\eta})^{-3} < \infty$. %, $\forall\eta\in\R_+$. 
Moreover, for all $\delta>0$ % and $\eta>0$,
\begin{align}
	&\sum_{N=1}^{\infty}m_N \exp \left(-\delta^2 T^2_N m_N^{-2}\left( \sum_{t=1}^{T_N}\zeta^2_N\right) ^{-1}\right)  = \sum_{N=1}^{\infty} m_N\exp \left(-\delta^2 T_N (m_N\zeta_N)^{-2}\right) \nonumber \\
	& \leq \sum_{N=1}^{\infty} N^{1/6-2\eta} \exp \left(-\delta^2T_N (N^{1/6-2\eta} N^{1/3+\eta})^{-2}\right) \leq \sum_{N=1}^{\infty} N^{1/6-2\eta} \exp \left(-\delta^2\lambda N^{2 \eta}\right) < \infty\,.  \nonumber
\end{align} 
where the last inequality follows by $T_N=\lambda N$.
Finally, note that  $\sum_{N=1}^{\infty}\left( T_N^{-1}\sum_{t=1}^{T_N}c_{Nt}\psi(m_N)\right)^2\leq c^2 \sum_{N=1}^{\infty}\psi(m_N)^2$ is convergent if $\psi(J)=\mathcal{O}(J^{-3-\epsilon})$, by appropriate choice of $\eta>0$; since $d^{[J/2]-1}=\mathcal{O}(J^{-3-\epsilon})$, the assumption of Theorem~\ref{Thm. Consistency and Asymptotic Normality nonlinaer PNAR as} provides the result. Then, conditions 1.(c), 2-3 of \cite[Thm.~4]{dejong_1996strong} are satisfied and, by Assumption~\ref{Ass limits existence}, as $\left\lbrace N, T_N \right\rbrace \to\infty$, $T^{-1}_N\sum_{t=1}^{T_N}N^{-1}H_{Nt}=(NT_N)^{-1}H_{NT_N}\xrightarrow{a.s.} H$. Analogous results hold for the other mixingales. An application of \cite[Thm.~3.2.23]{tani2000} ends the proof. \qed

\subsection{Proof of Theorem~\ref{Prop condition limits nonlinear pnar}}
\label{Proof of Prop condition limits nonlinear pnar}
The process $Y_t=\lambda_t+\xi_t$, where $\lambda_t$ is defined as in \eqref{nonlinear}, can be rewritten as $Y_t=\beta_0 C_{t-1}+ GY_{t-1}+\xi_t$. By backward substitution, $Y_t=\beta_0\sum_{j=0}^{t-1}G^jC_{t-1-j}
+G^tY_0+\sum_{j=0}^{t-1}G^j\xi_{t-j}$. Under the conditions of Theorem~\ref{Thm. Ergodicity of nonlinear model div N}, as $t\to\infty$, the second term is negligible and the last term is well-defined as $\tilde{Y}_t\coloneqq\sum_{j=0}^{\infty}G^j\xi_{t-j}$. It remains to show the finiteness of the first summand $\mu_{t-1}\coloneqq\beta_0\sum_{j=0}^{t-1}G^jC_{t-1-j}$.  Note that $\mu_{t-1}$ is increasing sequence with respect to $t$, moreover $C_t\preceq 1$, so $\mu_{t-1}\preceq \beta_0\sum_{j=0}^{t-1}G^j1=\beta_0\sum_{j=0}^{t-1}(\beta_1+\beta_2)^j1\preceq\mu 1$, where $\mu\coloneqq\beta_0(1-\beta_1-\beta_2)^{-1}$, then $\mu^{\infty}_{t-1}\coloneqq\beta_0\sum_{j=0}^{\infty}G^jC_{t-1-j}<\infty$ a.s. and $Y_t=\mu^{\infty}_{t-1}+\tilde{Y}_t$. The proof of Theorem~\ref{Prop condition limits nonlinear pnar} follows by Lemma~\ref{Lem. Consistency and Asymptotic Normality nonlinaer PNAR}, we only need to show that assumptions \ref{Ass network}-\ref{Ass final limits} imply \ref{Ass limits existence} for model \eqref{nonlinear_pnar}. We consider one element of the Hessian matrix, since we can work analogously  for all other elements; see also \cite[Sec.~A.2]{armillotta_fokianos_2021}. Consider the element in position $(1,2)$ of matrix \eqref{H} and the decomposition $Y_t=\mu^{\infty}_{t-1}+\tilde{Y}_t$, so $H_{12}=H_{12a}+H_{12b}$, where $N^{-1}H_{12a}=N^{-1}\E(\Gamma_{1,t-1}^\prime D_t^{-1}\Gamma_{2,t-1})$ converges to $l^H_{12}$, by \ref{Ass final limits}; indeed
\begin{equation}
	\norm{\frac{H_{12b}}{N}}\leq \frac{1}{N}E\left(\sum_{i=1}^{N}\frac{w_i^\prime \norm{\tilde{Y}_{t-1}}_{vec}}{\lambda_{i,t}(1+X_{i,t-1})^\gamma}\right)\leq \frac{1}{C N}E\left( 1^\prime W\norm{\tilde{Y}_{t-1}}_{vec} \right) \xrightarrow{N\to\infty}0\,,
	\nonumber
\end{equation}
where the last inequality follows by Assumption~\ref{bounded away from 0}; the convergence holds by \ref{Ass network}, \cite[Eq.~A-2]{armillotta_fokianos_2021} and an application of \cite[Lemma~S-1]{armillotta_fokianos_2021}. For the information matrix \eqref{B} similar results hold true, by considering Assumption~\ref{Ass weak dependence}. The third derivative of the likelihood \eqref{pois-log-lik} has the form derived in \cite[Supp. Mat. S-4]{fok2020}. Moreover, $\partial \lambda_{i,t}/\partial\theta_k=\Gamma_{i,k,t-1}$, for $k=1,2,3$ and $\lambda_{i,t}/\partial\theta_4=-\beta_0\Gamma_{i,4,t-1}$; the second derivatives are all zeros apart from $\partial^2 \lambda_{i,t}/\partial\beta_0\partial\gamma=-\Gamma_{i,4,t-1}$ and $\partial^2 \lambda_{i,t}/\partial\gamma^2=\beta_0F_{i,t-1}$; the third derivatives are all zeros apart from $\partial^3 \lambda_{i,t}/\partial\beta_0\partial\gamma^2=F_{i,t-1}$ and $\partial^3 \lambda_{i,t}/\partial\gamma^3=-\beta_0J_{i,t-1}$.
Condition of \ref{Ass limits existence}  is satisfied by recalling the $d^*$ conditions in Assumption~\ref{Ass final limits}. We omit the details.
\qed

\section{Proofs for Section   \ref{SEC: linearity test}}
\label{Supp:Sec:Proofs for Section 4}

\subsection{Proof of Theorem~\ref{limits}} \label{Proof general nonlinear score}
Under the condition of Lemma \ref{Lem. Consistency and Asymptotic Normality nonlinaer PNAR}, Theorem \ref{Thm. Consistency and Asymptotic Normality nonlinaer PNAR} and  \ref{interior H0}, the constrained estimator $\tilde{\theta}$ is consistent for $\theta_0$, when $H_0$ holds. See, for example, \cite[Thm.~3.2.23]{tani2000}. So $\tilde{\theta}^{(1)}$ is consistent for $\theta^{(1)}_0$ and, for $\left\lbrace N, T_N \right\rbrace $ large enough, we have $S_{NT}^{(1)}(\tilde{\theta})=0$ with $\tilde{\theta}^{(1)}\neq0$. Let $J_1=(I_{m_1}, O_{m_1\times m_2})$, $J_2=(O_{m_2\times m_1}, I_{m_2})$, where $I_s$ is a $s\times s$ identity matrix and $O_{a\times b}$ is a $a\times b$ matrix of zeros. Therefore, $0=S_{NT}^{(1)}(\tilde{\theta})=J_1S_{NT}(\tilde{\theta})$ and $S_{NT}(\tilde{\theta})=J_2^\prime S_{NT}^{(2)}(\tilde{\theta})$.
Lemma \ref{Lem. Consistency and Asymptotic Normality nonlinaer PNAR} and Taylor's theorem provides %$(NT)^{-1/2}S_{NT}(\tilde{\theta})\doteq(NT)^{-1/2}S_{NT}(\theta_0)-H(NT)^{1/2}(\tilde{\theta}-\theta_0)$,  Then,
$0=(NT)^{-1/2}S^{(1)}_{NT}(\tilde{\theta})\doteq (NT)^{-1/2} S^{(1)}_{NT}(\theta_0)-J_1H(NT)^{1/2}(\tilde{\theta}-\theta_0)$, and $
(NT)^{-1/2} S^{(2)}_{NT}(\tilde{\theta}) \doteq (NT)^{-1/2} S^{(2)}_{NT}(\theta_0)-J_2H(NT)^{1/2}(\tilde{\theta}-\theta_0)$, where $\doteq$ means equality up to a $o_p(1)$ term.
Since $\tilde{\theta}-\theta_0=J^\prime_1(\tilde{\theta}^{(1)}-\theta^{(1)}_0)$ we have $0\doteq(NT)^{-1/2}S^{(1)}_{NT}(\theta_0)-J_1HJ^\prime_1(NT)^{1/2}(\tilde{\theta}^{(1)}-\theta^{(1)}_0)$ and $(NT)^{1/2}(\tilde{\theta}^{(1)}-\theta^{(1)}_0)\doteq H_{11}^{-1} (NT)^{-1/2} S^{(1)}_{NT}(\theta_0)$,
where $H_{11}=J_1HJ^\prime_1$. Combining the above results we obtain %\eqref{taylor_1} and \eqref{taylor_2} shows that
$			(NT)^{-1/2} S^{(2)}_{NT}(\tilde{\theta}) \doteq V_2-H_{21}H_{11}^{-1}V_1=PV$, where $J_2HJ_1^\prime=H_{21}$, $V=(NT)^{-1/2}S_{NT}(\theta_0)$ and $P=\left( -H_{21}H_{11}^{-1}  I_{m_2}\right) $. Then, $V\xrightarrow{d}N(0, B)$, as $N,T_N\to\infty$
leading to 
\begin{equation}
	\frac{S^{(2)}_{NT}(\tilde{\theta})}{\sqrt{NT}}\doteq PV\xrightarrow[N,T_N\to\infty]{d}N(0, \Sigma)\,,
	\label{normality}
\end{equation}
where $\Sigma=PBP^\prime$ has the expression \eqref{sigma}. Following \cite{engle_1984}, \cite{gal1987},  and \cite{fran2019}, among others, the general formulation of the score test statistic for the nonlinear parameters of the model is
\begin{align}
	LM_{NT}&=
	\frac{S^{(2)\prime}_{NT}(\tilde{\theta})}{\sqrt{NT}}\Sigma^{-1}\frac{S^{(2)}_{NT}(\tilde{\theta})}{\sqrt{NT}}\,\label{test_1}
\end{align}
%with $H^{22}=J_2H^{-1}J_2^\prime$. 
Then, by replacing $\Sigma$ in \eqref{test_1} with
the result follows by Lemma~\ref{Lem. Consistency and Asymptotic Normality nonlinaer PNAR},  \ref{interior H0} %, Lemma~\ref{Lem. sample information matrix}
and \eqref{normality}, under $H_0$,
\begin{equation}
	LM_{NT}= S^{(2)\prime}_{NT}(\tilde{\theta})\Sigma_{NT}^{-1}(\tilde{\theta}) S^{(2)}_{NT}(\tilde{\theta})\xrightarrow[N,T_N\to\infty]{d}\chi^2_{m_2}\,.
	\nonumber%\label{score test}
\end{equation} 
To determine the asymptotic distribution of the test statistic $LM_{NT}$ under $H_1$ in \eqref{test_local} note that $\theta= \theta_0+(NT)^{-1/2}\delta$, where $\delta=(\delta_1, \delta_2)^\prime$. A Taylor expansion around $\theta$ shows that  $(NT)^{-1/2}S_{NT}(\theta_0)\doteq(NT)^{-1/2}S_{NT}(\theta)-H_{NT}(\theta)(NT)^{-1/2}(\theta_0-\theta)= (NT)^{-1/2} S_{NT}(\theta) \break + (NT)^{-1} H_{NT}(\theta)\delta$.
Since, as $ \left\lbrace N, T_N \right\rbrace \to\infty$, $\theta\xrightarrow{p}\theta_0$, and, by mean value theorem and Lemma \ref{Lem. Consistency and Asymptotic Normality nonlinaer PNAR} $(NT)^{-1/2}S_{NT}(\theta)\xrightarrow{d}N(0, B)$ and $(NT)^{-1}H_{NT}(\theta)\xrightarrow{p}H$, under $H_1$, $V\xrightarrow{d}N(H\delta, B)$.
Hence,
\begin{equation}
	\frac{S^{(2)}_{NT}(\tilde{\theta})}{\sqrt{NT}}\doteq PV\xrightarrow[N,T_N\to\infty]{d}N(\Sigma_H\delta_2, \Sigma)\,,
	\label{normality local} 
\end{equation}
under $H_1$, where $\Sigma_H\equiv (H^{22})^{-1}$, with $H^{22}=J_2H^{-1}J_2^\prime$, and $\Sigma_H\delta_2=PH\delta$. By \eqref{test_1},
\begin{equation}
	LM_{NT}= S^{(2)\prime}_{NT}(\tilde{\theta})\Sigma_{NT}^{-1}(\tilde{\theta}) S^{(2)}_{NT}(\tilde{\theta})\xrightarrow[N,T_N\to\infty]{d}\chi^2_{m_2}(\delta_2^{\prime}\tilde{\Delta}\delta_2)\,.
	\nonumber%\label{score test}
\end{equation}
under $H_1$, where $\chi^2_{m_2}(\cdot)$ is the noncentral chi-square distribution and $\tilde{\Delta}$ is the sample counterpart of $\Delta=\Sigma_H\Sigma^{-1}\Sigma_H$ computed according to Remark~\ref{Rem. sample information matrix}.\qed	

\section{Threshold models}
\label{threshold models}
In this section we give a detailed study   for the TNAR model \eqref{tnar}. In this case the function $f(\cdot)$ is not continuous with respect to the threshold variable $X_t$. Therefore, the contraction condition \eqref{contraction} does not hold. In addition, the Lipschitz conditions, stated in Assumption~\ref{smoothness}, are not satisfied due to the discontinuity of the process. However, we  establish stability conditions for fixed network dimension (stationarity and geometric ergodicity)  for model  \eqref{tnar}, by establishing absolute regularity of
the process.

\begin{proposition}  \label{Prop: ergodicity tar}
	Consider model \eqref{general nuisance pnar}, with process $\{\lambda_t\}$ defined by \eqref{tnar} and  $\vertiii{G}_1<1$, where $G=(\beta_1+\alpha_1)W+(\beta_2+\alpha_2)I$. %$\alpha_1+\beta_1+\alpha_2+\beta_2<1$.
	Suppose the probability mass function of $Y_t\left| \right. \mathcal{F}_{t-1}$ is positive. %The m.d. of $\xi_t$ is absolutely continuous, the p.d.f. is positive and $\E\norm{\xi_t}^{4r}_{4r}<\infty$, with $r>1$. %, with $n=\max\left\lbrace 4r, 2r\nu/(r-\nu)\right\rbrace$ .
	%Assume the conditions of Lemma~\ref{Lemma: Test non standard tar} hold true,
	%and $X_t$ has marginal probability density function $f_{X_i}$, such that $\sup_{x_i\in\R}f_{X_i}(x_i)=\bar{f}_{X_i}<\infty$. Then, under \ref{interior H0}, Theorem~\ref{Thm: Test non standard} holds true.
	Then, the TNAR model is stationary, ergodic and absolutely regular with geometrically decaying coefficients.
\end{proposition}

\textit{Proof of Proposition~\ref{Prop: ergodicity tar}} \label{Proof ergodicity tnar} 
Consider model \eqref{tnar}. for $i=1,\dots,N$, a.s.
\begin{align}
	\norm{f_i(X_{i,t-1},Y_{i,t-1})}&\leq (\beta_0 + \alpha_0) + (\beta_1+\alpha_1) X_{i,t-1} + (\beta_2+\alpha_2) Y_{i,t-1} \nonumber \\
	& = (\beta_0 + \alpha_0) + \left[(\beta_1+\alpha_1)e_i^\prime W+(\beta_2+\alpha_2)e_i^\prime \right]\norm{Y_{t-1}}_{vec} \,,\nonumber 
\end{align}
implying that $\norm{f(Y_{t-1})}_{vec}\preceq c1+ G\norm{Y_{t-1}}_{vec}$, with $G=(\beta_1+\alpha_1) W+(\beta_2+\alpha_2) I $ and $c=\beta_0+\alpha_0$. It is immediate to see that the process $\left\lbrace  Y_t: t\in\Z \right\rbrace $, defined by \eqref{general nuisance pnar} with $\lambda_t$ specified as in \eqref{tnar} is a Markov chain on countable state space $\N^N$. %\cite[Prop.~11]{Serfozo_2009}.
Since $Y_t$ given $Y_{t-1}$  has a positive probability mass function, the first order transition probabilities of the Markov chain are $p_{m,l}=\mathrm{P}(Y_t=m | Y_{t-1}=l)>0$, for all $m,l$. This implies irreducibility and aperiodicity of the chain. See, for example, \cite[Ch.~4-5]{MeynandTweedie(2009)}. % \cite[Sec.~1.6]{Serfozo_2009}.
A geometric drift condition is established, by the property of the Poisson process, $\forall y\in\N^N$, 
\begin{equation}
	\E\left( \norm{Y_t}_1 + 1 | Y_{t-1}=y \right) =  1 +\E\norm{N_t[f(y)]}_1 = 1 +\norm{f(y)}_1\leq c^* + d_1(\norm{y}_1+1) \nonumber\,, %\label{drift condition}\,,
\end{equation}
where $c_0=Nc$, $c^*=c_0+1 < \infty$ and $d_1=\vertiii{G}_1 < 1$. Consider the non-empty compact sets %$S=\left\lbrace y : \norm{y}_1 + 1 \leq c^*/d_1\right\rbrace$.
$S^1_M=\left\lbrace y : \norm{y}_1 + 1 \leq M\right\rbrace$, with real constants $M$. Since the state space is countable and we have that $\epsilon =\sum_{m\in\N^N} \inf_{l\in S^1_M} p_{m,l} >0$, by \cite[eq.~8-9]{roberts_2004}, every non-null compact set $S^1_M$ is small. By considering the small set $S^1=\left\lbrace y : \norm{y}_1 + 1 \leq c^*/\delta_1\right\rbrace$, with $\delta_1=1/2(1-d_1)$, \cite[Thm~15.0.1]{MeynandTweedie(2009)} implies that the stochastic process $\left\lbrace Y_t: t\in\Z \right\rbrace $ is stationary and absolutely regular with geometrically decaying coefficients. 
\qed

An analogous result holds for the continuous case. Define $\dot{\beta}_1=\max\left\lbrace \norm{\beta_1}, \norm{\beta_1+\alpha_1} \right\rbrace $ and $\dot{\beta}_2=\max\left\lbrace \norm{\beta_2}, \norm{\beta_2+\alpha_2} \right\rbrace $.

\begin{proposition}  \label{Prop: ergodicity tar cont}
	Consider model \eqref{general nuisance nar}, with process $\{\lambda_t\}$ defined as in \eqref{tnar} and $\dot{\beta}_1+\dot{\beta}_2<1$. The marginal cumulative distribution of $\xi_t$ is absolutely continuous, with positive probability density function, and $\E\norm{\xi_t}_1<\infty$. %, with $n=\max\left\lbrace 4r, 2r\nu/(r-\nu)\right\rbrace$ .
	%Moreover, $X_t$ has marginal probability density function $f_{X_i}$, such that $\sup_{X_i\in\R}f(x_i)=\bar{f}_{X_i}<\infty$. Then, under \ref{interior H0}, Theorem~\ref{Thm: Test non standard cont} holds true.
	Then, the TNAR model is stationary, ergodic and absolutely regular with geometrically decaying coefficients.
\end{proposition}

\textit{Proof of Proposition~\ref{Prop: ergodicity tar cont}}
\label{Proof ergodicity tar cont}
Consider model \eqref{tnar}. Define $\dot{\beta}_0=\max\left\lbrace \norm{\beta_0}, \norm{\beta_0+\alpha_0}\right\rbrace $ and $\dot{d}=\dot{\beta}_1+\dot{\beta}_2$. For $i=1,\dots,N$, a.s.
\begin{equation}
	\norm{f_i(X_{i,t-1},Y_{i,t-1})}\leq \dot{\beta}_0 + \dot{\beta}_1 X_{i,t-1} + \dot{\beta}_2 Y_{i,t-1} = \dot{\beta}_0 + \left(\dot{\beta}_1e_i^\prime W+\dot{\beta}_2e_i^\prime \right)\norm{Y_{t-1}}_{vec} \,,\nonumber 
\end{equation}
implying that $\norm{f(Y_{t-1})}_{vec}\preceq \dot{\beta}_01+ G\norm{Y_{t-1}}_{vec}$, with $G=\dot{\beta}_1 W+\dot{\beta}_2 I $. %and then $\norm{\partial f(x,y) / \partial \alpha_1}\leq \dot{\beta}_0 + \alpha_1\norm{x}$; similar conclusions hold for $\alpha_0$ and $\alpha_2$. 
By \cite{chan_tong_1985} the Markov chain $\left\lbrace  Y_t: t\in\Z \right\rbrace $ defined by \eqref{general nuisance nar} with $\lambda_t$ defined in \eqref{tnar}, is irreducible and aperiodic; moreover, since the p.d.f of $\xi_t$  is lower semi-continuous and the function $f(\cdot)$ is compact, all the non-empty compact sets $S_M=\left\lbrace y : \norm{y}_\infty + 1\leq M\right\rbrace$ with real constant $M$, are small.  $\forall y\in\R^N$, 
\begin{equation}
	\E\left( \norm{Y_t}_\infty + 1 | Y_{t-1}=y \right)\leq c+\dot{d}\norm{y}_\infty+1\leq c^* + \dot{d}(\norm{y}_\infty+1) \label{drift condition}\,,
\end{equation}
where $\E\norm{\xi_t}_\infty\leq \E\norm{\xi_t}_1<c_\xi$, $c=\dot{\beta}_0 + c_\xi$, $c^*=c+1 < \infty$ and $0 < \dot{d} < 1$.
Then, the process $\left\lbrace Y_t: t\in\Z \right\rbrace $ is stationary and absolutely regular with geometrically decaying coefficients. 
\qed

\subsection{Inference and testing for TNAR models}
We state an analogous result to Theorem~\ref{Thm. Consistency and Asymptotic Normality nonlinaer PNAR}, for  $N$ fixed and $T\to\infty$. In this case, we drop the dependence of $N$ to point out that the inference is for $T\to\infty$. The following result implies, in particular,  good large sample properties of the QMLE for model \eqref{tnar} when the threshold parameter $\gamma$ is known.

\begin{corollary} \label{Cor. Consistency and Asymptotic Normality nonlinaer PNAR}
	Consider model \eqref{nonlinear_pnar}. Let $\theta\in\Theta\subset\R^m_{+}$. Suppose the conditions of Theorem~\ref{Thm. Ergodicity of nonlinear model}, Assumption~\ref{interior}, \ref{smoothness} (with $x_i^*=y_i^*=0$) and \ref{bounded away from 0} hold. Then, there exists a fixed open neighbourhood $\mathcal{O}(\theta_0)=\left\lbrace \theta:|\theta-\theta_0|_2<\delta\right\rbrace$ of $\theta_0$ such that with probability tending to 1  as $T\to\infty$, for the score function \eqref{score_poisson}, the equation $S_{T}(\theta)=0$ has a unique solution, called $\hat{\theta}$, which is (strongly) consistent and asymptotically normal:
	\begin{equation}
		\sqrt{T}(\hat{\theta}-\theta_0)\xrightarrow{d}N(0,H^{-1}BH^{-1})\,,
		\nonumber
	\end{equation}
	where the matrices $H\equiv H_N$ and $B \equiv B_N$ are defined in \eqref{H}-\eqref{B}, respectively.
\end{corollary}
This follows by an application of the ergodic theorem and analogous arguments to \cite[Sec.~S-3.3]{armillotta_fokianos_2021}.  Assumption \ref{smoothness} is imposed  with $x_i^*=y_i^*=0$ to provide bounds of  all necessary moments.
Corollary~\ref{Cor. Consistency and Asymptotic Normality nonlinaer PNAR} shows weak  consistency of QMLE but  strong consistency is   proved under the same assumptions.
Recall from Section~\ref{SEC: identifiability} that $\theta=(\phi^\prime, \gamma)^\prime$, where $\gamma$ is the threshold parameter and $\phi$ is the vector of, linear and nonlinear, identifiable parameters.

Note that %the result does require Assumption~
\ref{smoothness}  with $x_i^*=y_i^*=0$
%to hold only in the special case when $x_i^*=y_i^*=0$, being
is satisfied for $\theta=\phi$ under the threshold model \eqref{tnar}. Also condition~\ref{bounded away from 0} holds for TNAR models. Then, we have the following result.

\begin{corollary}  \label{Cor: inference tar}
	Consider model \eqref{general nuisance pnar}, with process $\{\lambda_t\}$ defined as in \eqref{tnar} where $\gamma$ is known.
	Suppose the conditions of Proposition~\ref{Prop: ergodicity tar} and Assumption~\ref{interior} hold. Then, Corollary~\ref{Cor. Consistency and Asymptotic Normality nonlinaer PNAR} holds true. % for $\theta=\phi$.
\end{corollary}
Estimation of the  $\gamma$ can be tackled by profiling likelihood. This is a common practice; see for example 
\cite{Wangetal(2014)} among others.
A similar result is established for continuous-valued data.	

We now provide the related result for testing the linearity of model \eqref{nar_1} versus \eqref{tnar}, i.e. $H_0:\alpha_0=\alpha_1=\alpha_2=0$ vs. $ H_1: \alpha_l>0$, for some $l=0,1,2$.

\begin{proposition}  \label{Prop: Test non standard tar}
	Consider model \eqref{general nuisance pnar}, with process $\{\lambda_t\}$ defined as in \eqref{tnar}. Suppose conditions of Proposition~\ref{Prop: ergodicity tar} hold true. Moreover, assume that $X_t$ has marginal probability density function $f_{X_i}$, such that $\sup_{x_i\in\R}f_{X_i}(x_i)=\bar{f}_{X}<\infty$. Then, under \ref{interior H0}, Theorem~\ref{Thm: Test non standard} holds true for $T\to\infty$ and fixed $N$.
\end{proposition}

\textit{Proof of Proposition~\ref{Prop: Test non standard tar}}
Note that $I(X_{i,t-1}\leq \gamma)$, as a function of $\gamma$, is upper semi-continuous, then $\sup_{\gamma\in \Gamma}I(X_{i,t-1}\leq \gamma)$ exists and equals 1. Finally, the rows of $-Z_t(\gamma)$ are linearly independent for some $t$, uniformly over $\gamma\in\Gamma\equiv[\gamma_L,\gamma_U]$, by considering that $\inf_{\gamma\in\Gamma}-h_{i,t}(\gamma)=-\sup_{\gamma\in \Gamma}I(X_{i,t-1}\leq \gamma)Z_{1t}\neq0$ componentwise, for some $t$, by the ergodicity of $Y_t$. So we have that $\inf_{\gamma\in\Gamma}\text{det}(H(\gamma,\gamma))>0$. This fact, the finiteness of all the moments of $Y_t$ and $\xi_t(\gamma)$, and the results of Proposition~\ref{Prop: ergodicity tar} satisfy \cite[Asm.~1]{hansen_1996}.
Set $\gamma > \gamma^*$, for $a > 1$, %since $\lnorm{X}_a\leq (\E\norm{X}^a_1)^{1/a}$ we have%; in fact
%\begin{equation}
%	\E\norm{f(y)}_\infty+1\leq (\dot{\beta}_0+1)+ \vertiii{G}_\infty\norm{y}_\infty\leq (\dot{\beta}_0+1) + \dot{d}(M-1)
%. %\end{equation}
%compact sets  into compact sets. In fact, for all $y\in C_M$,
%$=M^*$ and $C^*_M=\left\lbrace f(y) : \norm{f(y)}_\infty + 1\leq M^*\right\rbrace$ is compact. Then, $C_M$ is small, by \cite{chan_tong_1985}.
\begin{align}
	\lnorm{h_{i,t}(\gamma)-h_{i,t}(\gamma^*)}_{a}&\leq\lnorm{\left( 1+X_{i,t-1}+ Y_{i,t-1}\right) I\left( \gamma^*< X_{i,t-1} \leq \gamma \right) }_{a} \label{expectation h}\\
	&\leq \left( 1 + \lnorm{X_{i,t-1}}_{2a}+ \lnorm{Y_{i,t-1}}_{2a} \right)\left( \int_{\gamma^*}^{\gamma} f_{X_i}(x_i)dx_i \right) ^\delta \nonumber \\
	&\leq B_1\norm{\gamma-\gamma^*}^\delta \nonumber
\end{align}
where $\delta=1/(2a)$ and $B_1=\left( 1+2C\right) \bar{f}_{X}^\delta$. The same holds for $\gamma < \gamma^*$. Then, \cite[Asm.~2]{hansen_1996} holds true. 

Now we show uniform convergence of all quantities involved in  inference. \cite[Asm.~3]{hansen_1996}. 
%By similar arguments of Appendix~\ref{Proof Test non standard}, we have that $T^{-1/2}S_T(\gamma)\Rightarrow S(\gamma)$ and 
Standard arguments show that the pointwise convergence of $H_T(\gamma_1,\gamma_2)$ and $B_T(\gamma_1,\gamma_2)$ hold, by the measurability of all the functions involved and the existence of moments of any order for the count process $Y_t$. 
%Then, Assumption~A1-A2 of \cite{andrews_1987} are met. 
%We only need to prove A3. 
For all $\gamma_1,\gamma_2\in\Gamma$, define $\bar{\gamma}=(\gamma_1^\prime,\gamma_2^\prime)^\prime$, $B(\bar{\gamma},\rho)=\left\lbrace \bar{\gamma}^*\in\Gamma: \norm{\bar{\gamma}^*-\bar{\gamma}}_1< \rho \right\rbrace$  and consider the following random variables.
\begin{equation}
	H_t^*(\bar{\gamma},\rho)=\sup\left\lbrace H_t(\bar{\gamma}^*): \bar{\gamma}^*\in B(\bar{\gamma},\rho)\right\rbrace\,, \quad H_t^\star(\bar{\gamma},\rho)=\inf\left\lbrace H_t(\bar{\gamma}^*): \bar{\gamma}^*\in B(\bar{\gamma},\rho)\right\rbrace\,. \nonumber
\end{equation}
We need to prove, for all $\bar{\gamma}\in\Gamma$,
\begin{equation}
	\lim_{\rho\to 0} \sup_{t\geq 1}\norm{\frac{1}{T}\sum_{t=1}^{T}\left[ \E H_t^*(\bar{\gamma}, \rho)- \E H_t(\bar{\gamma})\right] }_1=0 \,, \label{equicont}
\end{equation}
and the same limit should hold for $H_t^\star(\bar{\gamma},\rho)$, as well. Set $\bar{\gamma}^*\prec \bar{\gamma}$. For clarity in the notation, we prove \eqref{equicont} for the single element $H^{(g,l)}_t(\bar{\gamma})$, which is bounded by
\begin{align}
	&\lim_{\rho\to 0}\sum_{i=1}^{N}\E \norm{ Y_{i,t}\left(\sup_{\bar{\gamma}^*\in B(\bar{\gamma}, \rho)} \frac{h^{g}_{i,t}(\gamma_1^*)h^{l}_{i,t}(\gamma_2^*)}{\lambda_{i,t}(\gamma_1^*)\lambda_{i,t}(\gamma_2^*)}-\frac{h^{g}_{i,t}(\gamma_1)h^{l}_{i,t}(\gamma_2)}{\lambda_{i,t}(\gamma_1)\lambda_{i,t}(\gamma_2)}\right)}\nonumber\\
	&\leq
	\lim_{\rho\to 0}\sum_{i=1}^{N}\E \left(\frac{Y_{i,t}}{\beta_0^4}\sum_{s=1}^{4}\sup_{\bar{\gamma}^*\in B(\bar{\gamma}, \rho)}\norm{p_s}\right) \,. \nonumber
\end{align}
%and $\sup_{\bar{\gamma}^*\in B(\bar{\gamma}, \rho)}h^g_{i,t}(\gamma_1^*)\leq \sup_{\gamma_1^*\in \norm{\gamma_1^*-\gamma_1}<\rho}h^g_{i,t}(\gamma_1^*)=h^g_{i,t}(\gamma_1+\rho)$, $\sup_{\bar{\gamma}^*\in B(\bar{\gamma}, \rho)}\norm{h^l_{i,t}(\gamma_2^*)-h^l_{i,t}(\gamma_2)}\leq \sup_{\gamma_2^*\in \norm{\gamma_2^*-\gamma_2}<\rho}(h^l_{i,t}(\gamma_2)-h^l_{i,t}(\gamma_2^*))=h^l_{i,t}(\gamma_2)-h^l_{i,t}(\gamma_2-\rho)$. Then, with more extensive notation,
Note that $\sup_{\bar{\gamma}^*\in B(\bar{\gamma}, \rho)}h^g_{i,t}(\gamma_1^*)\leq h^g_{i,t}(\gamma_1+\rho)\leq 1$ and $\inf_{\bar{\gamma}^*\in B(\bar{\gamma}, \rho)}h^g_{i,t}(\gamma_1^*)\geq  h^g_{i,t}(\gamma_1-\rho)\geq 0$ a.s.; the same holds for $\gamma^*_2$. So $\sup_{\bar{\gamma}^*\in B(\bar{\gamma}, \rho)}\norm{h^l_{i,t}(\gamma_2^*)-h^l_{i,t}(\gamma_2)}= \sup_{\bar{\gamma}^*\in B(\bar{\gamma}, \rho)}(h^l_{i,t}(\gamma_2)-h^l_{i,t}(\gamma_2^*))=h^l_{i,t}(\gamma_2)-\inf_{\bar{\gamma}^*\in B(\bar{\gamma}, \rho)}h^l_{i,t}(\gamma_2^*) \leq  h^l_{i,t}(\gamma_2)-h^l_{i,t}(\gamma_2-\rho)$. Then, 
\begin{align}
	&\lim_{\rho\to 0}\sum_{i=1}^{N}\E\left( \frac{Y_{i,t}}{\beta_0^4}\sup_{\bar{\gamma}^*\in B(\bar{\gamma}, \rho)}\norm{p_1}\right)  \nonumber \\
	&\leq \lim_{\rho\to 0}\sum_{i=1}^{N}\E\left( \frac{Y_{i,t}}{\beta_0^4}\lambda_{i,t}(\gamma_1)\lambda_{i,t}(\gamma_2)\sup_{\bar{\gamma}^*\in B(\bar{\gamma}, \rho)}h^{g}_{i,t}(\gamma_1^*)\norm{h^{l}_{i,t}(\gamma_2^*) - h^{l}_{i,t}(\gamma_2)}\right)  \nonumber\\
	& \leq C \lim_{\rho\to 0} \lnorm{h^l_{i,t}(\gamma_2)-h^l_{i,t}(\gamma_2-\rho)}_a \nonumber \\
	&\leq C B_1\lim_{\rho\to 0} \norm{\gamma_2-\gamma_2+\rho}^{1/(2a)}  \nonumber \\
	& =0 \,, \nonumber
\end{align}
where the second inequality holds by H\"{o}lder's inequality and the third inequality holds by \eqref{expectation h}. Analogous results hold for $p_s$, with $s=2,3,4$; the same limit is obtained for $\bar{\gamma}\prec \bar{\gamma}^*$ and the cross inequalities $\gamma_i^*< \gamma_i, \gamma_j<\gamma_j^*$, with $i,j=1,2$ and $i\neq j$. Similar arguments apply (inverted) to $H_t^\star(\bar{\gamma},\rho)$ and to the information matrix. Then, the uniform law of large number of \cite{andrews_1987} holds, leading to $T^{-1}H_T(\gamma_1,\gamma_2)\xrightarrow{a.s.}H(\gamma_1,\gamma_2)$, $T^{-1}B_T(\gamma_1,\gamma_2)\xrightarrow{a.s.}B(\gamma_1,\gamma_2)$, uniformly in $\gamma_1,\gamma_2\in\Gamma$. Then, \cite[Thm.~1]{hansen_1996} applies to model \eqref{tnar}.

\begin{proposition}  \label{Prop: Test non standard tar cont}
	Consider model \eqref{general nuisance nar}, with process $\{\lambda_t\}$ defined as in \eqref{tnar}. Suppose the conditions of Proposition~\ref{Prop: ergodicity tar cont} hold and $\E\norm{\xi_t}^{8}_{8}<\infty$. %, with $n=\max\left\lbrace 4r, 2r\nu/(r-\nu)\right\rbrace$ .
	Moreover, assume that $X_t$ has marginal probability density function $f_{X_i}$, such that $\sup_{X_i\in\R}f(x_i)=\bar{f}_{X}<\infty$. Then, under \ref{interior H0}, Theorem~\ref{Thm: Test non standard cont} holds true for $T\to\infty$ and fixed $N$.
\end{proposition}
The proof is omitted since is analogous to the integer-valued case. The convergence of bootstrap $p$-values as in Theorem~\ref{Thm: Bootstrap} (with fixed $N$) is obtained under the conditions of Propositions~\ref{Prop: Test non standard tar}-\ref{Prop: Test non standard tar cont} by \cite[Thm.~2]{hansen_1996}.

\section{Additional simulations and  empirical results}
\label{SuppSec:Additionalsimulations}

\subsection{Simulations} 

%\begin{exnet}  \label{power-law} (Power-Law Distribution Model). According to \cite{barabasi_1999}, a power-law distribution reflects an usual network phenomenon, that is, the majority of nodes have very few edges but there exists a small amount of nodes (the so-called ``superstars") having a huge number of connections. To network carrying this property is generated according to \cite{clauset_2009}, as follows. Generate in-degrees for each node $d_i=\sum_{j=1}^{N}a_{ji}$ according to the discrete power-law distribution $\mathrm{P}(d_i=k)=ck^{-\alpha}$, for $k=0,1,\dots$, a normalizing constant $c$ and an exponent parameter $\alpha$. Smaller values of $\alpha$ imply a heavier distribution tail. Next, for the $i^{\text{th}}$ node, $d_i$ nodes are randomly selected as its followers. We set $c=1$ and $\alpha=2$.
%\end{exnet}

%\subsection{Standard testing}
\label{simu standard testing}

For the integer-valued case, we test linearity of the PNAR model against the nonlinear version in \eqref{nonlinear}. The observed count time series $\left\lbrace Y_t \right\rbrace $ is generated as in \eqref{nonlinear_pnar}, using the copula-based data generating process of Section~\ref{Supp:copula-Poisson}. 
We consider a Gaussian copula  $C^{Ga}_R(\dots)$, with correlation matrix $R=(R_{ij})$, where $R_{ij}=\rho^{|i-j|}$, the so called Toeplitz correlation matrix; $\rho=0.5$ is the copula parameter. 
Set $\lambda_0=1$ and use a burnout sample of the 300 first  observations to initiate the process.  The true value of the linear parameters is set to $\theta^{(1)}=(\beta_0,\beta_1,\beta_2)^\prime=(1,0.3,0.2)^\prime$. Then, the QMLE estimation $\tilde{\theta}^{(1)}$ optimising \eqref{pois-log-lik} is computed for each replication. By considering the results of Proposition~\ref{pnar chi}, the quasi-score statistic \eqref{score test} is evaluated and compared with the critical values of a $\chi^2_1$ distribution. The same data generating process  is employed for simulating  observations $Y_t$ from  \eqref{nonlinear} to compute   the empirical power of the test. All  results of the simulation study are reported in Table~\ref{sim_pois}. Histograms and the Q-Q plots of the simulated score test for the integer-valued case are plotted in Fig.~\ref{qq_pois}. 
We discover similar findings as the ones discussed in the main text. The asymptotic approximation though is slower and therefore  the asymptotic $\chi^2$ distribution of the test requires larger temporal and network sample dimensions.  Hence, the empirical power of the test tends to 1 at  a slower rate. Overall, the empirical evidence shows that, when $N$ and $T$ are large enough, the test performs quite satisfactory in the case of integer-valued data. The adequacy of the $\chi^2_1$ test is also confirmed by histograms and Q-Q plots of Table~\ref{qq_pois}.

\begin{table}[h]
	\centering
	\caption{Empirical size of the test statistics \eqref{score test} for testing $H_0:\gamma=0$ versus $H_1:\gamma>0$, in model \eqref{nonlinear}, with $S=1000$ simulations, for various values of $N$ and $T$. %For the first ten rows the network is derived from the SBM specified in Example~\ref{sbm}. The last five rows show the reusults for 
		Data are count time series  generated from the linear model \eqref{nar_1}.
		The empirical power is also reported for data generated from model \eqref{nonlinear} with  $\gamma=\left\lbrace 0.5, 1 \right\rbrace $.}
	%\hspace*{-1cm}
	%\scalebox{0.785}{
		%	\scalebox{0.75}{
			\begin{tabular}{c|c|c|c|ccc|ccc|ccc}\hline\hline
				Model &\multicolumn{3}{c|}{} &\multicolumn{3}{c|}{Size}  & \multicolumn{3}{c|}{Power $(\gamma=0.5)$} & \multicolumn{3}{c}{Power $(\gamma=1)$} \\\hline
				& $K$ & $N$ & $T$ & 10\% & 5\% & 1\% & 10\% & 5\% & 1\% & 10\% & 5\% & 1\% \\\hline
				\multirow{5}{*}{SBM} & \multirow{5}{*}{2} & 4 & 500 & 0.085 & 0.038 & 0.004 & 0.771 &  0.651 & 0.412 & 0.710 & 0.710 & 0.694 \\
				&& 500 & 10  & 0.100 & 0.031 & 0.001 & 0.096 & 0.031 & 0.003 & 0.096 & 0.045 & 0.010 \\%\addlinespace[-0.4ex]
				&& 200 & 300 & 0.088 & 0.045 & 0.007 & 0.290 &  0.188 & 0.071 & 0.815 & 0.721 & 0.487 \\
				&& 500 & 300 & 0.108 & 0.054 & 0.011 & 0.252 &  0.156 & 0.053 & 0.690 & 0.570 & 0.331 \\%\addlinespace[-0.5ex]
				&& 500 & 400 & 0.123 & 0.075 & 0.015 & 0.297 & 0.190 & 0.060 & 0.809 & 0.705 & 0.454  \\\hline%\addlinespace[-0.4ex] \hline
				\multirow{5}{*}{SBM} &\multirow{5}{*}{5} & 10 & 500 & 0.077 & 0.038  & 0.008 & 0.976 & 0.948 & 0.857 & 0.935 & 0.935 & 0.935 \\%\addlinespace[-0.5ex]
				&& 500 & 10  & 0.089 & 0.027 & 0.002 & 0.114 & 0.049 & 0.004 & 0.196 & 0.099 & 0.009 \\%\addlinespace[-0.4ex]
				&& 200 & 300 & 0.108 & 0.042 & 0.008 & 0.901 &  0.838 & 0.638 & 1.000 & 1.000 & 1.000 \\%\cline{2-13}
				&& 500 & 300 & 0.094 & 0.039 & 0.008 & 0.795 &  0.693 & 0.427 & 1.000 & 0.999 & 0.999  \\%\addlinespace[-0.5ex]
				&& 500 & 400 & 0.108 & 0.059 & 0.011 & 0.863 & 0.771 & 0.549 & 1.000 & 1.000 & 1.000  \\
				\hline
				\hline
				\multirow{5}{*}{ER}& \multirow{5}{*}{-} & 30 & 500 & 0.103 & 0.053 & 0.012 & 0.361 & 0.258 & 0.095 & 0.854 & 0.793 & 0.578 \\
				&& 500 & 30  & 0.105 & 0.045 & 0.003 & 0.084 & 0.040 & 0.004 & 0.086 & 0.050 & 0.016 \\%\addlinespace[-0.4ex]
				&& 200 & 300 & 0.091 & 0.052 & 0.005 & 0.141 &  0.080 & 0.021 & 0.357 & 0.241 & 0.093 \\
				&& 500 & 300 & 0.094 & 0.044 & 0.010 & 0.142 &  0.065 & 0.012 & 0.273 & 0.166 & 0.048 \\%\addlinespace[-0.5ex]
				&& 500 & 400 & 0.112 & 0.062 & 0.014 & 0.169 & 0.098 & 0.024 & 0.342 & 0.230 & 0.086  \\%\addlinespace[-0.4ex] 
				\hline
				\hline
			\end{tabular}
			%	}
		%\hspace*{-1cm}
		\label{sim_pois}
	\end{table}
	
	\begin{figure}[H]
		\begin{center}
			\includegraphics[width=0.85\linewidth]{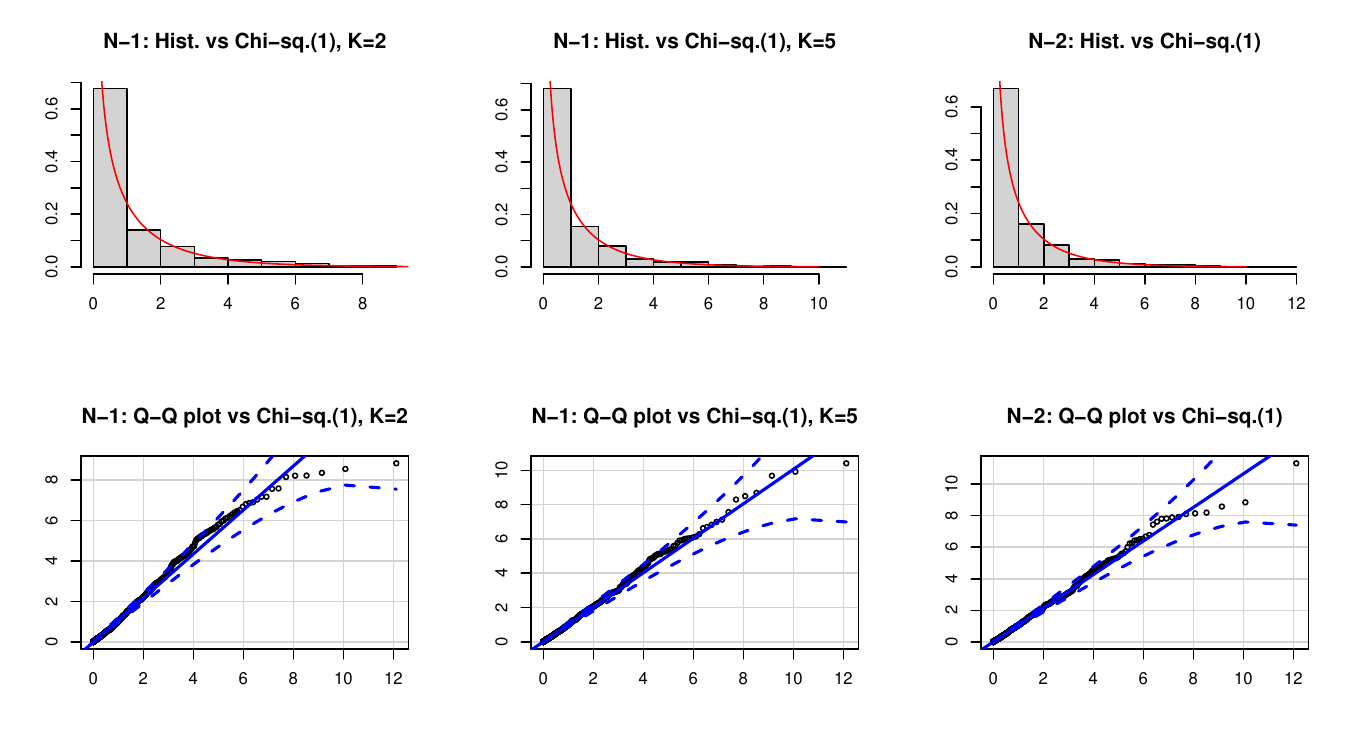}
			\caption{Histogram and Q-Q plots for the score test \eqref{score test}, defined in Tab.~\ref{sim_pois}, against the $\chi^2_1$ distribution; $N=500$, $T=400$. Left: network \ref{sbm}, with $K=2$. Center: network \ref{sbm}, with $K=5$. Right: \ref{erdos-renyi}. Red line: $\chi^2_1$ density. Dashed blue line: $5\%$ confidence bands.}%
			\label{qq_pois}
		\end{center}
	\end{figure}

	\begin{figure}[H]
		\begin{center}
			\includegraphics[width=0.85\linewidth]{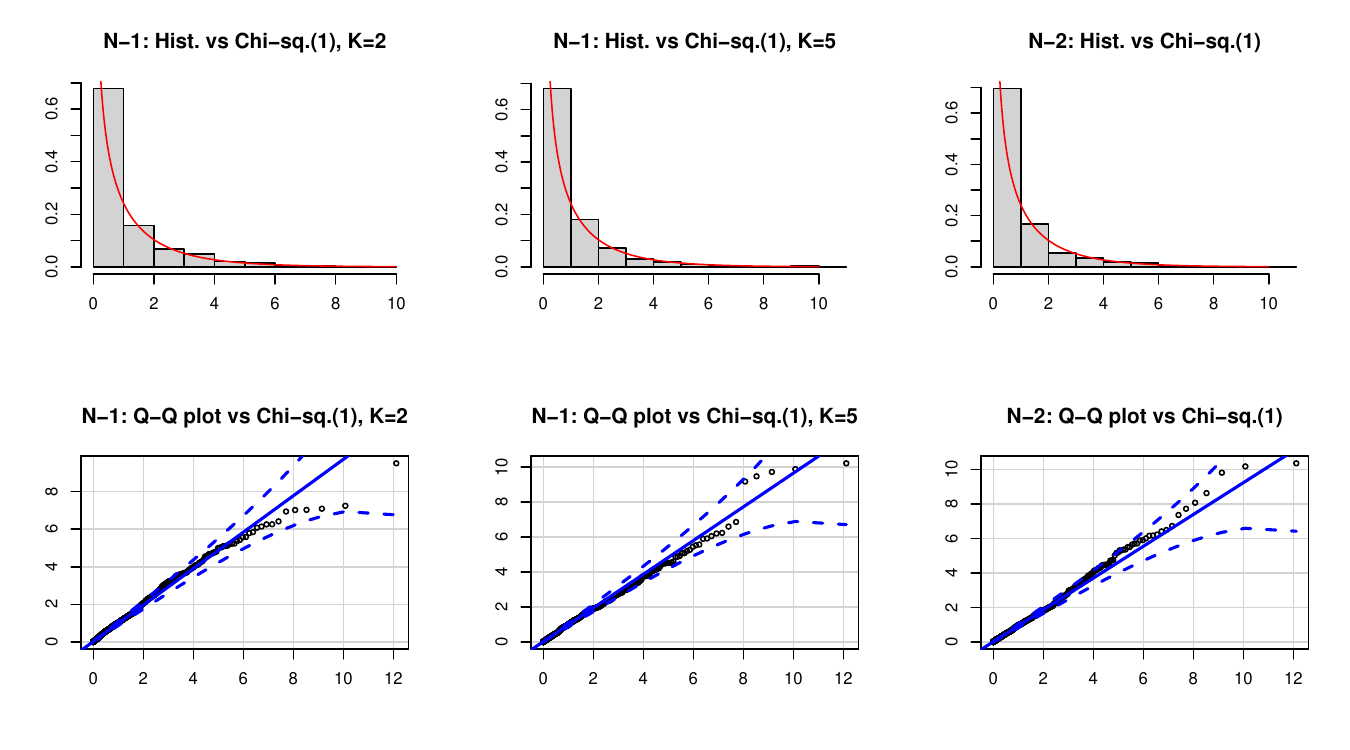}
			\caption{Histogram and Q-Q plots for the score test \eqref{score test}, defined in Tab.~\ref{sim_gauss}, against the $\chi^2_1$ distribution; $N=500$, $T=400$. Left: network \ref{sbm}, with $K=2$. Center: network \ref{sbm}, with $K=5$. Right: \ref{erdos-renyi}. Red line: $\chi^2_1$ density. Dashed blue line: $5\%$ confidence bands.}%
			\label{qq}
		\end{center}
	\end{figure}

	\subsection{Non-standard testing}
	
	%\subsection{Non standard testing} 
	\label{simu non standard test}
	Consider the hypothesis test $H_0: \alpha=0$ vs $H_1: \alpha>0\quad(\alpha\neq 0)$, for integer and continuous-valued random variables, respectively, on model \eqref{stnar}. This is an example of a a non-identifiable parameter; in this case it is   $\gamma$. Under the null,  the model reduces to the linear NAR model \eqref{nar_1}. For the continuous case, we simulate data using the data  generating process employed in Section~\ref{SEC: simulations}. Set $(\beta_0,\beta_1,\beta_2)^\prime=(1,0.3,0.2)^\prime$. According to the results of Theorem~\ref{Thm: Test non standard}-\ref{Thm: Test non standard cont}, for each of the $S$ replications,  we can approximates the $p$-values of the sup-type test, $\sup_{\gamma\in \Gamma_F} LM_T(\gamma)$, by the Davies bound \eqref{Davies bound}, with $k_2=1$, where $\Gamma_F$ is a grid of equidistant values computed as in Section~\ref{SEC: applications}. The fraction of cases in which the $p$-value approximation is smaller than the standard  nominal  levels defines the empirical size of the test. The empirical power of the test is obtained by computing the same fraction on data generated by the model \eqref{stnar} instead, with $\alpha=\left\lbrace 0.1,0.5 \right\rbrace $ and $\gamma=\left\lbrace 0.05, 1 \right\rbrace $.
	
	Results for the continuous case are summarized in Table~\ref{sim_gauss_non}. The empirical size is close to the nominal level, although slightly smaller; this is expected as the approximation using the Davies bound \eqref{Davies bound} is conservative.   For both SBM and Erd\H{o}s-R\'{e}nyi model, the empirical power is low, when the tested parameter is close to the tested value, $\alpha=0.1$, and tends to increase for values away from the null ($\alpha=0.5$). The power of the test obviously improves for small values of the nuisance parameter ($\gamma=0.05$). 
	%In fact, model \eqref{stnar} tends to the linear case \eqref{nar_1} when $\gamma$ is large and diverges from it when $\gamma\to 0$. 
	When $N$ and $T$ are large, the test performs adequately.	
	
	The same analysis is replicated for the PNAR model.  The data generating mechanism for the multivariate count process $Y_t$ has been already discussed  in Section~\ref{simu standard testing}  but now the  correlation matrix  is given by  $R_{ij}=\rho$, for all $i,j$, where $\rho=0.5$ is the copula parameter. The results are presented in Table~\ref{sim_pois_non}.  We obtain analogous results as those in the continuous case.
	%Analogous downward results with respect the continuous-valued case occur here, in %parallel to the results obtained in the standard identifiable testing procedure; %this is explained again by the non-normality of the error components, so that both %empirical size and power of the test have a slower tendency towards their %asymptotic values since, according to \eqref{Davies bound}, they are based on the %asymptotic $\chi^2$ process distribution (Theorem~\ref{Thm: Test non standard}). %Nevertheless, when $N$ and $T$ are large, the empirical performances of the test %are adequate. 

	Finally, a linearity test concerning the threshold model \eqref{tnar}, is empirically  studied. In this case,  $H_0:\alpha_0=\alpha_1=\alpha_2=0$ vs. $ H_1: \alpha_l>0$, for some $l=0,1,2$. 
	The $p$-values are computed from the bootstrap approximation procedure for the test $\textrm{sup}LM_T=\sup_{\gamma\in \Gamma_F} LM_T(\gamma)$.
	The number of bootstrap replications is set to $J=499$. From Table~\ref{sim_gauss_boot} we note that the empirical size is smaller than the expected nominal levels but the empirical power of the test shows that the test works satisfactorily. These outcomes are in line with previous empirical results in the literature \cite[Tab.~5-6]{christou_fokianos_2015}.   Results are also analogous to the Gaussian case presented in Table~\ref{sim_gauss_boot} 
	and therefore are omitted.

	\begin{sidewaystable}[ph!]
		%	\bigskip\bigskip 
		%\bigskip\bigskip
		%\bigskip\bigskip %\bigskip\bigskip\bigskip\bigskip \bigskip\bigskip\bigskip\bigskip 
		%\bigskip\bigskip\bigskip\bigskip \bigskip\bigskip\bigskip\bigskip \bigskip\bigskip
		%\bigskip\bigskip \bigskip\bigskip
		\centering
		\caption{Empirical size  of the test statistics \eqref{score test nuisance} for testing $H_0:\alpha=0$ in $S=1000$ simulations of model \eqref{stnar}, for $N,T=\left\lbrace 200, 500 \right\rbrace $. Data are continuous-valued and generated from the linear model \eqref{nar_1}. The empirical power is also reported for data generated from model \eqref{stnar} with $\alpha=\left\lbrace 0.1,0.5 \right\rbrace $ and $\gamma=\left\lbrace 0.05, 1 \right\rbrace $. The network is derived from Ex.~\ref{sbm}, for first two rows, with $K=2$; second two: Ex.~\ref{erdos-renyi}.} 	\label{sim_gauss_non}
		%\hspace*{-1cm}
		%	\scalebox{0.75}{
			\begin{tabular}{c|c|c|ccc|cccccccccccc}\hline\hline
				Model & \multicolumn{2}{c|}{} &\multicolumn{3}{c|}{Size}  & \multicolumn{12}{c}{Power } \\\hline
				& \multicolumn{2}{c|}{} &\multicolumn{3}{c|}{}  & \multicolumn{3}{c}{$\alpha=0.1$, $\gamma=1$} & \multicolumn{3}{c}{$\alpha=0.1$, $\gamma=0.05$} &\multicolumn{3}{c}{$\alpha=0.5$, $\gamma=1$ } &\multicolumn{3}{c}{$\alpha=0.5$, $\gamma=0.05$ } \\\hline
				&	$N$ & $T$ & 10\% & 5\% & 1\% & 10\% & 5\% & 1\% & 10\% & 5\% & 1\% & 10\% & 5\% & 1\% & 10\% & 5\% & 1\%  \\\hline
				\multirow{2}{*}{SBM} & 200 & 200 & 0.078 & 0.045 & 0.012 & 0.092 &  0.044 & 0.007 & 0.099 & 0.055 & 0.014 & 0.322 & 0.194 & 0.061 & 0.921 & 0.746 & 0.214 \\
				& 500 & 500 & 0.084 & 0.036 & 0.006 & 0.088 &  0.0048 & 0.014 & 0.128 & 0.074 & 0.012 & 0.504 & 0.368 & 0.167 & 1.000 & 0.997 & 0.840 \\%\addlinespace[-0.5ex]
				\hline
				\hline
				%\\\hline
				%\multicolumn{2}{c|}{Erd\H{o}s-R\'{e}nyi} &\multicolumn{15}{c}{} \\ \hline
				\multirow{2}{*}{ER} &200 & 200 & 0.078 & 0.034 & 0.003 & 0.088 &  0.039 & 0.009 & 0.097 & 0.046 & 0.006 & 0.173 & 0.099 & 0.019 & 0.930 & 0.772 & 0.321 \\
				& 500 & 500 & 0.082 & 0.037 & 0.011 & 0.102 &  0.053 & 0.010 & 0.105 & 0.049 & 0.012 & 0.220 & 0.129 & 0.038 & 0.999 & 0.998 & 0.911 \\%\addlinespace[-0.5ex]
				\hline
				\hline
				%			200 & 200 & 0.071 & 0.030 & 0.003 & 0.568 &  0.427 & 0.189 & 0.991 & 0.973 & 0.845 & 1.000 & 1.000 & 1.000 & 1.000 & 1.000 & 1.000 \\
				%			500 & 500 & 0.075 & 0.031 & 0.003 & 0.999 &  0.995 & 0.985 & 1.000 & 1.000 & 1.000 & 1.000 & 1.000 & 1.000 & 1.000 & 1.000 & 1.000 \\%\addlinespace[-0.5ex]
				%			\hline
				%			\hline
			\end{tabular}
			%	}
		%\hspace*{-1cm}
		
		\bigskip\bigskip \bigskip\bigskip 
		%\bigskip\bigskip \bigskip\bigskip
		\centering
		\caption{Empirical size   of the test statistics \eqref{score test nuisance} for testing $H_0:\alpha=0$ in $S=1000$ simulations of model \eqref{stnar}, for $N,T=\left\lbrace 200, 500 \right\rbrace $. Data are integer-valued and generated from the linear model \eqref{nar_1}. The empirical power is also reported for data generated from model \eqref{stnar} with $\alpha=\left\lbrace 0.1,0.5 \right\rbrace $ and $\gamma=\left\lbrace 0.05, 1 \right\rbrace $. The network is derived from Ex.~\ref{sbm}, for first two rows, with $K=2$; second two: Ex.~\ref{erdos-renyi}.} \label{sim_pois_non}
		%\hspace*{-1cm}
		%\scalebox{0.75}{
			\begin{tabular}{c|c|c|ccc|cccccccccccc}\hline\hline
				Model & \multicolumn{2}{c|}{} &\multicolumn{3}{c|}{Size}  & \multicolumn{12}{c}{Power } \\\hline
				& \multicolumn{2}{c|}{} &\multicolumn{3}{c|}{}  & \multicolumn{3}{c}{$\alpha=0.1$, $\gamma=1$} & \multicolumn{3}{c}{$\alpha=0.1$, $\gamma=0.05$} &\multicolumn{3}{c}{$\alpha=0.5$, $\gamma=1$ } &\multicolumn{3}{c}{$\alpha=0.5$, $\gamma=0.05$ } \\\hline
				&	$N$ & $T$ & 10\% & 5\% & 1\% & 10\% & 5\% & 1\% & 10\% & 5\% & 1\% & 10\% & 5\% & 1\% & 10\% & 5\% & 1\%  \\\hline
				\multirow{2}{*}{SBM} & 200 & 200 & 0.052 & 0.027 & 0.006 & 0.059 &  0.024 & 0.002 & 0.088 & 0.045 & 0.007 & 0.129 & 0.072 & 0.013 & 0.601 & 0.460 & 0.179 \\
				& 500 & 500 & 0.058 & 0.025 & 0.005 & 0.054 &  0.028 & 0.005 & 0.146 & 0.072 & 0.015 & 0.218 & 0.130 & 0.034 & 0.902 & 0.804 & 0.580 \\%\addlinespace[-0.5ex]
				\hline
				\hline
				%\\\hline
				%\multicolumn{2}{c|}{Erd\H{o}s-R\'{e}nyi} &\multicolumn{15}{c}{} \\ \hline
				\multirow{2}{*}{ER} &200 & 200 & 0.068 & 0.025 & 0.001 & 0.055 &  0.019 & 0.002 & 0.095 & 0.043 & 0.005 & 0.136 & 0.051 & 0.008 & 0.538 & 0.393 & 0.143 \\
				& 500 & 500 & 0.068 & 0.026 & 0.002 & 0.071 &  0.028 & 0.004 & 0.123 & 0.063 & 0.015 & 0.200 & 0.108 & 0.023 & 0.880 & 0.794 & 0.557 \\%\addlinespace[-0.5ex]
				\hline
				\hline
				%			200 & 200 & 0.071 & 0.030 & 0.003 & 0.568 &  0.427 & 0.189 & 0.991 & 0.973 & 0.845 & 1.000 & 1.000 & 1.000 & 1.000 & 1.000 & 1.000 \\
				%			500 & 500 & 0.075 & 0.031 & 0.003 & 0.999 &  0.995 & 0.985 & 1.000 & 1.000 & 1.000 & 1.000 & 1.000 & 1.000 & 1.000 & 1.000 & 1.000 \\%\addlinespace[-0.5ex]
				%			\hline
				%			\hline
				%	
			\end{tabular}
			%	}
		%\hspace*{-1cm}
	\end{sidewaystable}
	%\end{table}

	\begin{table}[h]
		\centering
		\caption{Empirical size at  of the test statistics \eqref{score test nuisance} for testing $H_0:\alpha_0=\alpha_1=\alpha_2=0$ in $S=200$ simulations of model \eqref{tnar}, for $N=8$ and $T=1000$. Data are continuous-valued and generated from the linear model \eqref{nar_1}. The empirical power is also reported for data generated from model \eqref{tnar} with $\alpha=(\alpha_0, \alpha_1, \alpha_2)^\prime=(0.5, 0.2, 0.1)^\prime$ and $\gamma=1$. The network is derived from Ex.~\ref{sbm}, with $K=2$.}
		%\hspace*{-1cm}
		%\scalebox{0.75}{
			\begin{tabular}{c|c|ccc|ccc}\hline\hline
				\multicolumn{2}{c|}{} &\multicolumn{3}{c|}{Size}  & \multicolumn{3}{c}{Power} \\\hline
				\multicolumn{2}{c|}{} &\multicolumn{3}{c|}{}  & \multicolumn{3}{c}{$\alpha=(0.5, 0.2, 0.1)^\prime$, $\gamma=1$} \\\hline
				$N$ & $T$ & 10\% & 5\% & 1\% & 10\% & 5\% & 1\% \\\hline
				8 & 1000 & 0.045 & 0.002 & 0.000 & 1.000 &  1.000 & 1.000 \\%\addlinespace[-0.5ex]
				\hline
				\hline
				%\\\hline
				%\multicolumn{2}{c|}{Erd\H{o}s-R\'{e}nyi} &\multicolumn{15}{c}{} \\ \hline
				%\multirow{2}{*}{ER} &200 & 200 & 0.068 & 0.025 & 0.001 & 0.055 &  0.019 & 0.002 & 0.095 & 0.043 & 0.005 & 0.136 & 0.051 & 0.008 & 0.538 & 0.393 & 0.143 \\
				%& 500 & 500 & 0.068 & 0.026 & 0.002 & 0.071 &  0.028 & 0.004 & 0.123 & 0.063 & 0.015 & 0.200 & 0.108 & 0.023 & 0.880 & 0.794 & 0.557 \\%\addlinespace[-0.5ex]
				%\hline
				%\hline
				%			200 & 200 & 0.071 & 0.030 & 0.003 & 0.568 &  0.427 & 0.189 & 0.991 & 0.973 & 0.845 & 1.000 & 1.000 & 1.000 & 1.000 & 1.000 & 1.000 \\
				%			500 & 500 & 0.075 & 0.031 & 0.003 & 0.999 &  0.995 & 0.985 & 1.000 & 1.000 & 1.000 & 1.000 & 1.000 & 1.000 & 1.000 & 1.000 & 1.000 \\%\addlinespace[-0.5ex]
				%			\hline
				%			\hline
			\end{tabular}
			%	}
		%\hspace*{-1cm}
		\label{sim_gauss_boot}
	\end{table}

	\subsection{Additional empirical results}

	%\subsection{Continuous-valued data}
	\label{wind}

	In this section we consider a continuous-valued dataset containing $T=721$ wind speeds measured at each of $N=102$ weather stations throughout England and Wales. The weather stations are the nodes of the potential network and two weather stations are connected if they share a border. In this way undirected network is drawn on geographic proximity. The dataset is taken by the \texttt{GNAR} \texttt{R} package \cite{Knightetal(2020)} incorporating the time series data \texttt{vswindts} and the associated network \texttt{vswindnet}. %Moreover, a character vector of the weather station location names, \texttt{vswindnames}, and coordinates of the stations in two column matrix, \texttt{vswindcoords}, are reported. Full details can be found in the help file of the %\texttt{GNAR} package.
	As the wind speed is continuous-valued, the NAR model \eqref{nar_1} is estimated by OLS. The linearity test against model \eqref{nonlinear_cont} is computed and compared with the $\chi^2_1$ distribution. Then, the same tests of Section~\ref{SEC: applications} are evaluated, with $H_0: \alpha=0$ vs. $ H_1: \alpha\neq 0$, for the STNAR model \eqref{stnar} and
	$H_0:\alpha_0=\alpha_1=\alpha_2=0$ vs. $ H_1: \alpha_l\neq0$, for some $l=0,1,2$, in the TNAR model \eqref{tnar}. %In this last case, $\Gamma_F=[1.28,2.70]$. 
	The results are summarized in Table~\ref{wind_results}.
	All the  estimated coefficients are significant at  standard nominal  levels. In this case, the magnitude of the lagged variable dominates the network effect but the latter  still has considerable  impact. An increasing positive effect is detected for the coefficients. This means that the expected wind speed for a weather station is positively correlated  by its past speed and the past wind speeds detected on neighboring  stations. The estimated variance of the errors is approximately 0.156; it is computed with the moment estimator $\tilde{\sigma}^2=(NT)^{-1}\sum_{t=1}^{T}\sum_{i=1}^{N}(Y_{i,t}-\lambda_{i,t}(\tilde{\theta}))^2$. The existence of nonlinear components is typically encountered in environmental time series studies. The null assumption of linearity is rejected against the  alternative \eqref{nonlinear_cont}. This gives an indication of possible nonlinear shifts in the intercept.
	In addition, the linearity is also rejected when it is tested against the STNAR model \eqref{stnar} and TNAR model \eqref{tnar}, at the usual significance levels, using  the Davies bound and the bootstrap sup-type test. Such results suggest that a nonlinear regime switching effects could be present. 
	
	\begin{table}[H]
		\centering
		\caption{QMLE estimates of the linear model \eqref{nar_1} for wind speed data. Standard errors in brackets.
			Linearity is tested against the nonlinear model \eqref{nonlinear_cont}, with $\chi^2_1$ asymptotic test \eqref{score test}; against the STNAR model \eqref{stnar}, with approximated $p$-values computed by (DV) Davies bound \eqref{Davies bound}, bootstrap $p$-values of sup-type test; %($\tilde{p}^J_{NT}$)
			and versus TNAR model \eqref{tnar} (only bootstrap). }
		%\hspace*{-1cm}
		%	\scalebox{0.75}{
			\begin{tabular}{cccc}\hline\hline
				Models & $\beta_0$ & $\beta_1$ & $\beta_2$ \\\hline %& $\tilde{\sigma}^2$  \\\hline
				\eqref{nar_1} & 0.154  & 0.157 & 0.768  \\\hline
				Std. & (0.017) & (0.011) & (0.009)  \\
				\hline
				\hline
				Models & Chi-sq. & DV & Bootstrap  \\
				\eqref{nonlinear} & 131.052 & - & -  \\
				\eqref{stnar} & - & $<0.001$ & 0.02 \\
				\eqref{tnar} & - & - & $<0.001$ \\
				\hline
				\hline
			\end{tabular}
			%	}
		%\hspace*{-1cm}
		\label{wind_results}
	\end{table}

\small

\bibliographystyle{plainurl}
\bibliography{lin_test}

\begin{thebibliography}{10}

\bibitem{fra2016}
Ali Ahmad and Christian Francq.
\newblock {P}oisson {QMLE} of count time series models.
\newblock {\em Journal of Time Series Analysis}, 37:291--314, 2016.

\bibitem{andrews_1987}
Donald W.~K. Andrews.
\newblock Consistency in nonlinear econometric models: {A} generic uniform law
  of large numbers.
\newblock {\em Econometrica}, 55:1465--1471, 1987.

\bibitem{and1988}
Donald W.~K. Andrews.
\newblock Laws of large numbers for dependent non-identically distributed
  random variables.
\newblock {\em Econometric Theory}, 4:458--467, 1988.

\bibitem{andrews_1992generic}
Donald W.~K. Andrews.
\newblock Generic uniform convergence.
\newblock {\em Econometric theory}, 8(2):241--257, 1992.

\bibitem{andrews_1993}
Donald W.~K. Andrews.
\newblock Tests for parameter instability and structural change with unknown
  change point.
\newblock {\em Econometrica}, 61:821--856, 1993.

\bibitem{andrews_1994}
Donald W.~K. Andrews and Werner Ploberger.
\newblock Optimal tests when a nuisance parameter is present only under the
  alternative.
\newblock {\em Econometrica}, 62:1383--1414, 1994.

\bibitem{armillotta_fokianos_2021}
Mirko Armillotta and Konstantinos Fokianos.
\newblock Count network autoregression.
\newblock {\em To appear in Journal of Time Series Analysis}.
\newblock \href {https://doi.org/10.48550/arXiv.2104.06296}
  {\path{doi:10.48550/arXiv.2104.06296}}.

\bibitem{armillotta2023unveiling}
Mirko Armillotta, Konstantinos Fokianos, and Andrea Guizzardi.
\newblock Unveiling {V}enice’s hotels competition networks from dynamic
  pricing digital market.
\newblock {\em Journal of the Royal Statistical Society Series A: Statistics in
  Society}, qnad085, 2023.
\newblock \href {https://doi.org/10.1093/jrsssa/qnad085}
  {\path{doi:10.1093/jrsssa/qnad085}}.

\bibitem{armillotta_2022_EJS}
Mirko Armillotta, Alessandra Luati, and Monia Lupparelli.
\newblock Observation-driven models for discrete-valued time series.
\newblock {\em Electronic Journal of Statistics}, 16(1):1393--1433, 2022.
\newblock \href {https://doi.org/10.1214/22-EJS1989}
  {\path{doi:10.1214/22-EJS1989}}.

\bibitem{armillotta_2022r}
Mirko Armillotta, Michail Tsagris, and Konstantinos Fokianos.
\newblock Inference for network count time series with the {R} package {PNAR}.
\newblock {\em To appear in The R Journal}.
\newblock \href {https://doi.org/10.48550/arXiv.2211.02582}
  {\path{doi:10.48550/arXiv.2211.02582}}.

\bibitem{chan_tong_1985}
Kung~S Chan and Howell Tong.
\newblock On the use of the deterministic {L}yapunov function for the
  ergodicity of stochastic difference equations.
\newblock {\em Advances in Applied Probability}, 17:666--678, 1985.

\bibitem{chan_tong_2001chaos}
Kung~S Chan and Howell Tong.
\newblock {\em Chaos: {A} Statistical Perspective}.
\newblock Springer Science \& Business Media, 2001.

\bibitem{chri2014}
Vasiliki Christou and Konstantinos Fokianos.
\newblock Quasi-likelihood inference for negative binomial time series models.
\newblock {\em Journal of Time Series Analysis}, 35:55--78, 2014.

\bibitem{christou_fokianos_2015}
Vasiliki Christou and Konstantinos Fokianos.
\newblock Estimation and testing linearity for non-linear mixed {P}oisson
  autoregressions.
\newblock {\em Electronic Journal of Statistics}, 9:1357--1377, 2015.

\bibitem{clark_2021}
Nicholas~J Clark and Philip~M Dixon.
\newblock A class of spatially correlated self-exciting statistical models.
\newblock {\em Spatial Statistics}, 43:1--18, 2021.

\bibitem{davies_1987}
Robert~B Davies.
\newblock Hypothesis testing when a nuisance parameter is present only under
  the alternative.
\newblock {\em Biometrika}, 74:33--43, 1987.

\bibitem{davis2003}
Richard~A. Davis, William T.~M. Dunsmuir, and Sarah~B. Streett.
\newblock Observation‐driven models for {P}oisson counts.
\newblock {\em Biometrika}, 90:777--790, 2003.

\bibitem{davis2016}
Richard~A. Davis and Heng Liu.
\newblock Theory and inference for a class of nonlinear models with application
  to time series of counts.
\newblock {\em Statistica Sinica}, 26:1673--1707, 2016.
\newblock URL: \url{http://www.jstor.org/stable/44114353}.

\bibitem{davis2009negative}
Richard~A Davis and Rongning Wu.
\newblock A negative binomial model for time series of counts.
\newblock {\em Biometrika}, 96:735--749, 2009.

\bibitem{dejong_1996strong}
Robert~M De~Jong.
\newblock A strong law of large numbers for triangular mixingale arrays.
\newblock {\em Statistics \& Probability Letters}, 27(1):1--9, 1996.

\bibitem{tru2021}
Zinsou~Max Debaly and Lionel Truquet.
\newblock A note on the stability of multivariate non-linear time series with
  an application to time series of counts.
\newblock {\em Statistics \& Probability Letters}, 179:1--7, 2021.

\bibitem{douc2013}
R.~Douc, P.~Doukhan, and E.~Moulines.
\newblock Ergodicity of observation-driven time series models and consistency
  of the maximum likelihood estimator.
\newblock {\em Stochastic Processes and their Applications}, 123:2620 -- 2647,
  2013.

\bibitem{fok2017}
Randal Douc, Konstantinos Fokianos, and Eric Moulines.
\newblock Asymptotic properties of quasi-maximum likelihood estimators in
  observation-driven time series models.
\newblock {\em Electronic Journal of Statistics}, 11:2707--2740, 2017.

\bibitem{Doukhanetal(2012)}
P.~Doukhan, K.~Fokianos, and D.~Tj{\o}stheim.
\newblock On weak dependence conditions for {P}oisson autoregressions.
\newblock {\em Statistics \& Probability Letters}, 82:942--948, 2012.

\bibitem{douk1994}
Paul Doukhan.
\newblock {\em Mixing}, volume~85 of {\em Lecture Notes in Statistics}.
\newblock Springer-Verlag, New York, 1994.

\bibitem{doukhan_2012}
Paul Doukhan, Konstantinos Fokianos, and Xiaoyin Li.
\newblock On weak dependence conditions: {T}he case of discrete valued
  processes.
\newblock {\em Statistics \& Probability Letters}, 82:1941--1948, 2012.

\bibitem{engle_1984}
Robert~F Engle.
\newblock Wald, likelihood ratio, and lagrange multiplier tests in
  econometrics.
\newblock {\em Handbook of Econometrics}, 2:775--826, 1984.

\bibitem{erdos_1959}
P.~Erd\"{o}s and A.~R\'{e}nyi.
\newblock On random graphs {I}.
\newblock {\em Publicationes Mathematicae Debrecen}, 6, 1959.

\bibitem{fan_yao_2003}
J~Fan and Q~Yao.
\newblock {\em Nonlinear time series: {N}onparametric and parametric methods}.
\newblock Springer-Verlag, New York, 2003.

\bibitem{fokianos_2021}
Konstantinos Fokianos.
\newblock Multivariate count time series modelling.
\newblock {\em To appear in Econometrics and Statistics}, 2024.
\newblock URL:
  \url{https://www.sciencedirect.com/science/article/pii/S2452306221001374},
  \href {https://doi.org/10.1016/j.ecosta.2021.11.006}
  {\path{doi:10.1016/j.ecosta.2021.11.006}}.

\bibitem{fok2009}
Konstantinos Fokianos, Anders Rahbek, and Dag Tj{\o}stheim.
\newblock {P}oisson autoregression.
\newblock {\em Journal of the American Statistical Association},
  104:1430--1439, 2009.

\bibitem{fok2020}
Konstantinos Fokianos, B{\aa}rd St{\o}ve, Dag Tj{\o}stheim, and Paul Doukhan.
\newblock Multivariate count autoregression.
\newblock {\em Bernoulli}, 26:471--499, 2020.

\bibitem{fokianos2012nonlinear}
Konstantinos Fokianos and Dag Tj{\o}stheim.
\newblock Nonlinear {P}oisson autoregression.
\newblock {\em Annals of the Institute of Statistical Mathematics},
  64:1205--1225, 2012.

\bibitem{francq_etal_2010}
Christian Francq, Lajos Horvath, and Jean-Michel Zakoian.
\newblock Sup-tests for linearity in a general nonlinear {AR(1)} model.
\newblock {\em Econometric Theory}, 26:965--993, 2010.

\bibitem{fran2019}
Christian Francq and Jean-Michel Zakoian.
\newblock {\em {GARCH} Models: Structure, Statistical Inference and Financial
  Applications}.
\newblock John Wiley \& Sons, 2019.

\bibitem{gal1987}
A.~Ronald Gallant.
\newblock {\em Nonlinear Statistical Models}.
\newblock Wiley Series in Probability and Mathematical Statistics: Applied
  Probability and Statistics. John Wiley \& Sons, Inc., New York, 1987.

\bibitem{gao_etal_2009}
Jiti Gao, Maxwell King, Zudi Lu, and Dag Tj{\o}stheim.
\newblock Specification testing in nonlinear and nonstationary time series
  autoregression.
\newblock {\em The Annals of Statistics}, 37:3893--3928, 2009.

\bibitem{gilbert_1959}
Edgar~N Gilbert.
\newblock Random graphs.
\newblock {\em The Annals of Mathematical Statistics}, 30:1141--1144, 1959.

\bibitem{gine_zinn_1990}
Evarist Gin{\'e} and Joel Zinn.
\newblock Bootstrapping general empirical measures.
\newblock {\em The Annals of Probability}, 18:851--869, 1990.

\bibitem{gorgi2020beta}
Paolo Gorgi.
\newblock Beta--negative binomial auto-regressions for modelling integer-valued
  time series with extreme observations.
\newblock {\em Journal of the Royal Statistical Society: Series B},
  82:1325--1347, 2020.

\bibitem{hall1980}
P.~Hall and C.~C. Heyde.
\newblock {\em Martingale Limit Theory and its Application}.
\newblock Academic Press, Inc., New York-London, 1980.

\bibitem{hansen_1996}
Bruce~E. Hansen.
\newblock Inference when a nuisance parameter is not identified under the null
  hypothesis.
\newblock {\em Econometrica}, 64:413--430, 1996.

\bibitem{hansen1996stochastic}
Bruce~E Hansen.
\newblock Stochastic equicontinuity for unbounded dependent heterogeneous
  arrays.
\newblock {\em Econometric Theory}, 12(2):347--359, 1996.

\bibitem{hansen_1999testing}
Bruce~E. Hansen.
\newblock Testing for linearity.
\newblock {\em Journal of Economic Surveys}, 13(5):551--576, 1999.

\bibitem{hey1997}
Christopher~C. Heyde.
\newblock {\em Quasi-likelihood and its Application. A General Approach to
  Optimal Parameter Estimation}.
\newblock Springer Series in Statistics. Springer-Verlag, New York, 1997.

\bibitem{Knightetal(2020)}
M.~Knight, K.~Leeming, G.~Nason, and M.~Nunes.
\newblock Generalized network autoregressive processes and the {GNAR} package.
\newblock {\em Journal of Statistical Software}, 96:1--36, 2020.
\newblock URL: \url{https://www.jstatsoft.org/v096/i05}.

\bibitem{kola2014}
Eric~D Kolaczyk and G{\'a}bor Cs{\'a}rdi.
\newblock {\em Statistical Analysis of Network Data with R}, volume~65.
\newblock Springer, 2014.

\bibitem{lim_tong_1980}
KS~Lim and H~Tong.
\newblock Threshold autoregressions, limit cycles, and data.
\newblock {\em Journal of the Royal Statistical Sociaty, Ser. B}, 42:245--92,
  1980.

\bibitem{McCullaghandNelder(1989)}
P.~McCullagh and J.~A Nelder.
\newblock {\em Generalized Linear Models}.
\newblock Chapman \& Hall, London, 2nd edition, 1989.

\bibitem{MeynandTweedie(2009)}
S.~P. Meyn and R.~L. Tweedie.
\newblock {\em {M}arkov Chains and Stochastic Stability}.
\newblock Cambridge University Press, 2009.

\bibitem{nowicki_2001}
Krzysztof Nowicki and Tom A~B Snijders.
\newblock Estimation and prediction for stochastic blockstructures.
\newblock {\em Journal of the American Statistical Association}, 96:1077--1087,
  2001.

\bibitem{pedeli2011}
Xanthi Pedeli and Dimitris Karlis.
\newblock A bivariate {INAR} (1) process with application.
\newblock {\em Statistical Modelling}, 11:325--349, 2011.

\bibitem{pedeli2013}
Xanthi Pedeli and Dimitris Karlis.
\newblock On composite likelihood estimation of a multivariate {INAR} (1)
  model.
\newblock {\em Journal of Time Series Analysis}, 34:206--220, 2013.

\bibitem{pedeli22013}
Xanthi Pedeli and Dimitris Karlis.
\newblock Some properties of multivariate {INAR} (1) processes.
\newblock {\em Computational Statistics \& Data Analysis}, 67:213--225, 2013.

\bibitem{pollard_1990empirical}
David Pollard.
\newblock Empirical processes: Theory and applications.
\newblock {\em NSF-CBMS Regional Conference Series in Probability and
  Statistics}, 2:i--86, 1990.

\bibitem{roberts_2004}
Gareth~O Roberts and Jeffrey~S Rosenthal.
\newblock General state space {M}arkov chains and {MCMC} algorithms.
\newblock {\em Probability Surveys}, 1:20--71, 2004.

\bibitem{seber2008}
George A.~F. Seber.
\newblock {\em A Matrix Handbook for Statisticians}.
\newblock Wiley Series in Probability and Statistics. Wiley, Hoboken, NJ, 2008.

\bibitem{tani2000}
Masanobu Taniguchi and Yoshihide Kakizawa.
\newblock {\em Asymptotic Theory of Statistical Inference for Time Series}.
\newblock Springer Series in Statistics. Springer-Verlag, New York, 2000.

\bibitem{Terasvirtaetal(2010)}
T.~Ter{\"a}svirta, D.~Tj{\o}stheim, and C.~W.~J. Granger.
\newblock {\em Modelling Nonlinear Economic Time Series}.
\newblock Oxford University Press, Oxford, 2010.

\bibitem{terasvirta1994}
Timo Ter{\"a}svirta.
\newblock Specification, estimation, and evaluation of smooth transition
  autoregressive models.
\newblock {\em Journal of the American Statistical Association}, 89:208--218,
  1994.

\bibitem{terasvirta_1994aspects}
Timo Ter{\"a}svirta, Dag Tj{\o}stheim, and Clive W.~J. Granger.
\newblock Aspects of modelling nonlinear time series.
\newblock {\em Handbook of Econometrics}, 4:2917--2957, 1994.

\bibitem{tong_1990}
Howell Tong.
\newblock {\em Non-linear time series: a dynamical system approach}.
\newblock Oxford University Press, 1990.

\bibitem{armillotta_et_al_rpackage_2022}
Michail Tsagris, Mirko Armillotta, and Konstantinos Fokianos.
\newblock {\em PNAR: {P}oisson {N}etwork {A}utoregressive models.}, 2023.
\newblock R package version 1.6.
\newblock URL: \url{https://CRAN.R-project.org/package=PNAR}.

\bibitem{Wangetal(2014)}
C.~Wang, H.~Liu, J.-F. Yao, R.~A. Davis, and W.~K. Li.
\newblock Self-excited threshold {P}oisson autoregression.
\newblock {\em Journal of the American Statistical Association}, 109:777--787,
  2014.

\bibitem{wang1987}
Yuchung~J Wang and George~Y Wong.
\newblock Stochastic blockmodels for directed graphs.
\newblock {\em Journal of the American Statistical Association}, 82:8--19,
  1987.

\bibitem{wass1994}
Stanley Wasserman, Katherine Faust, et~al.
\newblock {\em Social Network Analysis: Methods and Applications}, volume~8.
\newblock Cambridge university press, 1994.

\bibitem{Woodardetall(2010)}
D.~W. Woodard, D.~S. Matteson, and S.~G. Henderson.
\newblock Stationarity of count-valued and nonlinear time series models.
\newblock {\em Electronic Journal of Statistics}, 5:800--828, 2011.

\bibitem{yaskov_2015}
Pavel Yaskov.
\newblock Variance inequalities for quadratic forms with applications.
\newblock {\em Mathematical Methods of Statistics}, 24:309--319, 2015.

\bibitem{zhou2017dynamic}
Jing Zhou, DanYang Huang, and HanSheng Wang.
\newblock A dynamic logistic regression for network link prediction.
\newblock {\em Science China Mathematics}, 60:165--176, 2017.

\bibitem{zhou2020}
Jing Zhou, Dong Li, Rui Pan, and Hansheng Wang.
\newblock Network {GARCH} model.
\newblock {\em Statistica Sinica}, 30:1--18, 2020.

\bibitem{zhu2020}
Xuening Zhu and Rui Pan.
\newblock Grouped network vector autoregression.
\newblock {\em Statistica Sinica}, 30:1437--1462, 2020.

\bibitem{zhu2017}
Xuening Zhu, Rui Pan, Guodong Li, Yuewen Liu, and Hansheng Wang.
\newblock Network vector autoregression.
\newblock {\em The Annals of Statistics}, 45:1096--1123, 2017.

\bibitem{zhu2019}
Xuening Zhu, Weining Wang, Hansheng Wang, and Wolfgang~Karl H{\"a}rdle.
\newblock Network quantile autoregression.
\newblock {\em Journal of Econometrics}, 212:345--358, 2019.

\bibitem{zivot_2006}
Eric Zivot and Jiahui Wang.
\newblock {\em Modelling Financial Time Series with {S-PLUS}}.
\newblock Springer-Verlag, 2006.

\end{thebibliography}
\end{document}